\documentclass[envcountsame]{llncs}

\usepackage{amsmath}
\usepackage{amssymb}
\usepackage{amsfonts}
\usepackage{mathrsfs}
\usepackage{enumerate}
\usepackage{latexsym}
\usepackage{url}
\usepackage{times}
\usepackage{xspace}

\usepackage{verbatim} 

\usepackage{hyperref}

\hypersetup{
colorlinks=false,
pdfauthor={Alexander Kartzow and Philipp Schlicht},
pdftitle={Structures without Scattered-Automatic Presentation},
pdfkeywords={}
}

\newcommand{\domain}[0]{\mathsf{dom}}
\newcommand{\supp}[0]{\mathsf{supp}}
\newcommand{\Cuts}[0]{\mathsf{Cuts}}

\newcommand{\Ac}[0]{\mathcal{A}}
\newcommand{\Bc}[0]{\mathcal{B}}
\newcommand{\Cc}[0]{\mathcal{C}}
\newcommand{\Dc}[0]{\mathcal{D}}
\newcommand{\Fc}[0]{\mathcal{F}}

\newcommand{\Af}{\mathfrak{A}}
\newcommand{\Bf}{\mathfrak{B}}
\newcommand{\Cf}{\mathfrak{C}}
\newcommand{\Ef}{\mathfrak{E}}
\newcommand{\Ff}{\mathfrak{F}}
\newcommand{\Lf}{\mathfrak{L}}
\newcommand{\Kf}{\mathfrak{K}}
\newcommand{\fr}{\mathsf{fr}}
\newcommand{\Gf}{\mathfrak{G}}
\newcommand{\infrank}{\ensuremath{\infty}\text{-}\ensuremath{\mathsf{rank}}}
\newcommand{\rank}{\ensuremath{\mathsf{rank}}}
\newcommand{\Pf}{\mathfrak{P}}
\newcommand{\Pc}{\mathcal{P}}
\newcommand{\Tf}{\mathfrak{T}}
\newcommand{\Tc}{\mathcal{T}}
\newcommand{\N}{\mathbb{N}}
\newcommand{\Z}{\mathbb{Z}}
\newcommand{\Q}{\mathbb{Q}}
\newcommand{\KB}{\mathsf{KB}}
\newcommand{\FC}{\mathsf{FC}}
\newcommand{\FCstar}{\mathsf{FC_*}}
\newcommand{\image}{\mathsf{im}}
\newcommand{\Pgood}{\ensuremath{P_{\text{good}}}}
\newcommand{\nil}{\mathsf{nil}}

\newcommand{\suffix}[1]{\ensuremath{\overset{\rightarrow}{\sqsupseteq}}_{#1}}

\newcommand{\wordsim}[2]{\sim_{{#1}}^{{#2}}}
\newcommand{\run}[2]{\overset{{#1}}{\underset{{#2}}{\longrightarrow}}}
\newcommand{\LEFT}{\ensuremath{{\bf L}}}
\newcommand{\RIGHT}{\ensuremath{{\bf R}}}

\newcommand{\shuffle}{\mathsf{sh}}

\newcommand{\Words}[1]{\ensuremath{W({#1})}}

\pagestyle{plain}
\begin{document}

\title{Structures without Scattered-Automatic Presentation}

\author{Alexander Kartzow\inst{1} \and Philipp Schlicht\inst{2}
\thanks{The first  author is supported by the DFG  research project GELO.}
\institute{
Institut f\"ur Informatik, Universit\"at Leipzig, Germany
\and
 Mathematisches Institut, Universit\"at Bonn, Germany
\\
\email{kartzow@informatik.uni-leipzig.de, schlicht@math.uni-bonn.de}}}

\maketitle

\begin{abstract}
  Bruy\`ere and Carton lifted the notion of finite automata reading infinite
  words to finite automata reading words with shape an arbitrary
  linear order $\mathfrak{L}$. Automata on finite words can be used to
  represent infinite structures, the so-called 
  word-automatic structures. Analogously, for a linear order
  $\mathfrak{L}$ there is the class of $\mathfrak{L}$-automatic
  structures. In this paper we prove the following limitations on the
  class of $\mathfrak{L}$-automatic structures for a fixed
  $\mathfrak{L}$ of finite condensation rank $1+\alpha$.

  Firstly, no scattered linear order with finite condensation rank
  above  $\omega^{\alpha+1}$ is $\mathfrak{L}$-automatic. In particular,
  every $\mathfrak{L}$-automatic ordinal is  below
  $\omega^{\omega^{\alpha+1}}$. 
  Secondly, we provide bounds on the (ordinal) height of well-founded
  order trees that are 
  $\mathfrak{L}$-automatic. If $\alpha$ is finite or $\mathfrak{L}$ is
  an ordinal, the height of such a tree is bounded by $\omega^{\alpha+1}$. 
  Finally, we separate the class of tree-automatic structures from
  that of $\mathfrak{L}$-automatic structures for any ordinal
  $\mathfrak{L}$: the countable atomless boolean algebra is known to
  be tree-automatic, but we show that it is not
  $\mathfrak{L}$-automatic. 
\end{abstract}

\section{Introduction}
\label{sec:Intro}

Finite automata  play a crucial
role in many areas of computer science. In particular, finite automata
have been used to represent certain classes of possibly infinite
structures. The basic notion of this branch of research is the class of
automatic structures (cf. \cite{KhoussainovN94}): a
structure is automatic if its domain as well 
as its relations are recognised by (synchronous multi-tape) finite
automata processing finite words. This class has the remarkable
property that the first-order theory of any automatic structure is
decidable. One goal in the theory of automatic structures is a
classification of those structures that are automatic
(cf.~\cite{Delhomme04,KhoussainovRS05,DBLP:journals/lmcs/KhoussainovNRS07,DBLP:journals/apal/KhoussainovM09,KuLiLo11}). 
Besides finite automata reading \emph{finite} or \emph{infinite words}
there are also finite automata reading finite or infinite
\emph{trees}. Using such automata as representation of 
structures leads to the notion of tree-automatic structures
\cite{BLumensath1999}. The
classification of tree-automatic structures is less advanced but some
results have been obtained in the last years 
(cf.~\cite{Delhomme04,Huschenbett13,KaLiLo12}).  
Bruy\`ere and Carton \cite{BruyereC01} adapted the notion of finite
automata such that they can process words that have the shape 
of some fixed linear order. If the linear order is countable and
scattered, the corresponding class of languages possesses the 
good closure properties of the class of languages of finite automata
for finite words (i.e., closure under intersection, union, complement, and
projection) and emptiness of a given language is decidable. Thus, these
automata are also well-suited for representing structures. 
Given a fixed 
scattered linear order $\Lf$ this leads
to the notion of $\Lf$-automatic structures. In case that $\Lf$ is an
ordinal Schlicht and Stephan \cite{SchlichtS11} as well as Finkel and
Todorcevic \cite{FinkelT12} studied  the  classes of 
$\Lf$-automatic ordinals and $\Lf$-automatic linear orders.
Here we  study  $\Lf$-automatic linear orders for any 
scattered linear order $\Lf$ and we study 
$\Lf$-automatic
well-founded order forests,i.e., forests (seen as partial orders)
without infinite branches.
\begin{enumerate}
\item If a linear order is $\Lf$-automatic and $\Lf$ has finite
  condensation rank at most $1+\alpha$, then it is a finite sum of  linear
  orders of condensation rank below $\omega^{\alpha+1}$. As already
  shown in \cite{SchlichtS11}, this bound is optimal.
\item If a well-founded  order forest is $\Lf$-automatic for 
  some ordinal $\Lf$, then its ordinal height is bounded by $\Lf\cdot
  \omega$.
  
  If a well-founded  order forest is $\Lf$-automatic for $\Lf$ 
  some linear order of condensation rank $n\in\N$, 
  then its ordinal height is bounded by $\omega^{n+1}$. 

  These two bounds are optimal.
\item A well-founded $\Lf$-automatic order forest has ordinal height
  bounded by $\omega^{\omega\cdot (\alpha+1)}$ where $\alpha$ is the
    finite condensation rank of $\Lf$. 
\end{enumerate}
In order to prove  Claims 1 and 3 we observe that
the notion of \emph{finite-type products} from \cite{SchlichtS11} and
the notion of \emph{sum-augmentations of tamely colourable box-augmentations} 
from \cite{KaLiLo12,Huschenbett13}, even though defined in completely different
terms, have a common underlying idea. We introduce a new notion of
tamely colourable sum-of-box augmentations that refines both notions
and allows to prove a variant of Delhomm\'{e}'s decomposition method
(cf.~\cite{Delhomme04}) for the case of $\Lf$-automatic structures. 
The main results then follow as corollaries using results from
\cite{Huschenbett13} and \cite{KaLiLo12}. For the other two results,
we provide an $\Lf$-automatic scattered linear ordering of all
$\Lf$-shaped words if $\Lf$ has finite condensation rank
$n\in\N$ or if $\Lf$ is an ordinal. Extending work from \cite{KuLiLo11}, 
we provide a connection between the height of a
tree and the finite condensation rank of its Kleene-Brouwer ordering
(with respect to this $\Lf$-automatic ordering) that allows to derive
the better bounds  stated in Claim 2.

As a very sketchy summary of these results, one could say that we
adapt techniques 
previously used on trees to use them on linear orders. This raises the
question whether there is a deeper connection between $\Lf$-automatic
structures and tree-automatic structures. It is known that
all $\omega^n$-automatic structures are tree-automatic
(cf. \cite{FinkelT12}). Moreover, from
\cite{SchlichtS11} and \cite{Delhomme04} it follows that
$\omega^{\omega^\omega}$ is $\omega^\omega$-automatic but not
tree-automatic. It is open so far whether every tree-automatic
structure is $\Lf$-automatic for some linear order $\Lf$. We make a
first step towards a negative answer by showing that the countable atomless
boolean algebra is not $\Lf$-automatic for any ordinal $\Lf$ (while it
is tree-automatic \cite{BGR11}).

\section{Preliminaries}
\label{sec:preliminaries}

\subsection{Scattered Linear Orders}
\label{sec:ScatteredOrders}

In this section, we recall basic notions concerning scattered linear
orders. For a detailed introduction, we refer the reader to
\cite{rosenstein82}.
A linear order $(L, \leq)$ is \emph{scattered} if there is no
embedding of the rational numbers  into $(L, \leq)$. 

Given a scattered linear order $\Lf = (L, \leq)$, an equivalence relation
$\sim$ is called a \emph{condensation} if each $\sim$ class is an
interval of 
$\Lf$. We then write $\Lf/{\sim}:=(L/{\sim}, \leq')$ for
the linear order of the 
$\sim$ classes induced by $\leq$ (i.e., for $\sim$-classes $x,y$, 
$ x\leq' y$ iff there are $k\in x, l\in y$ such that $k
 \leq l$). 
As usual, for $\Lf$ a scattered linear order and $l,l'$ elements of
$\Lf$, we 
write $[l,l']$ for the closed interval between $l$ and $l'$. 
For each ordinal $\alpha$
we define the \emph{$\alpha$-th condensation} $\sim_\alpha$ by
$x \sim_0 y$ iff $x=y$,   $x \sim_{\alpha+1} y$ if the closed interval
$[x,y]$ in $\Lf/{\sim_\alpha}$ is finite and for a limit ordinal $\beta$, 
$x \sim_\beta y$ if there is an $\alpha<\beta$ such that
$x\sim_\alpha y$. 
The \emph{finite condensation rank}
$\FC(\Lf)$ is the minimal ordinal $\alpha$ such that 
$\Lf/{\sim_\alpha}$ is a one-element order. 
We also let $\FCstar(\Lf)$ be the minimal ordinal $\alpha$ such that
$\Lf/{\sim_\alpha}$ is a finite order. There is such an ordinal $\alpha$ if and only if $\Lf$ is scattered. 
It is obvious from these definitions that 
$\FCstar(\Lf)\leq \FC(\Lf) \leq \FCstar(\Lf)+1$.

As usual, for  a linear order $\Lf=(L, \leq)$ and 
a sequence of linear orders $(\Lf_i)_{i\in\Lf}$ we denote by
$\sum_{i\in L} \Lf_i$ the 
$\Lf$-sum of the $(\Lf_i)_{i\in\Lf}$. 

We conclude this section by recalling the notion of Dedekind cuts of a
linear order.
Let $\Lf=(L, \leq)$ be a linear order. A \emph{cut} of $\Lf$ is a
pair $c=(C, D)$ where $C$ is a downward closed subset $C\subseteq L$
and $D=L\setminus C$. We write $\Cuts(\Lf)$ for the 
\emph{set of all cuts  of $\Lf$}. 
For cuts $c,d$, we say that $c$ and $d$ are the consecutive cuts around
some $l\in L$ if $c=(C, D)$ and $d=(C', D')$ such that 
$C  =\{x\in L\mid x < l\}$ and
$C' =\{x\in L\mid x \leq l\}$. 
$\Cuts(\Lf)$ can be naturally equipped with an order (also denoted by
$\leq$) via
$c=(C,D) \leq d=(C',D')$ if $C\subseteq C'$. 
We say a cut $c=(C,D)$ has no
direct predecessor (or direct successor), if it has no direct predecessor
(or direct successor, respectively) with respect  to $\leq$. 
Let us finally introduce a notation for values appearing arbitrarily
close to some cut (from below or from above, respectively). 
\begin{definition}
  Let $\Lf=(L, \leq)$ be a linear order, and $w:\Cuts(\Lf)\to A$. For
  $c=(C,D)\in\Cuts(\Lf)$, set 
  $\lim_{c^{-}} w:= \{a\in A \mid \forall l\in C
    \exists l'\in C\quad l\leq l'\text{ and }w(l')=a\}$ \text{ and}
    $\lim_{c^{+}} w:= \{a\in A \mid \forall l\in D
    \exists l' \in D\quad l'\leq l\text{ and } w(l')= a\}$.
\end{definition}

\subsection{Automata for Scattered Words and 
  Scattered-Automatic Structures} 
\label{sec:Automata}

For this section, we fix an arbitrary linear order
$\Lf=(L, \leq)$.

\begin{definition}
  Let  $\Sigma_\diamond$
  be some finite alphabet with $\diamond\in\Sigma_\diamond$. 
  An $\Lf$-word (over $\Sigma$) is a map $L\to \Sigma_\diamond$. 
  An $\Lf$-word $w$ is \emph{finite} if 
  the support $\supp(w):=\{l\in L\mid w(l)\neq\diamond\}$
  of $w$ is finite.
  $\Words{\Lf}$ denotes  the set of $\Lf$-words.
\end{definition}

The usual notion of a convolution of finite words used in automata
theory can be easily lifted to the case of $\Lf$-words.

\begin{definition}
  Let $w_1,w_2$ be $\Lf$-words over alphabets $\Sigma_1$ and
  $\Sigma_2$, respectively. 
  The \emph{convolution}
  $w_1\otimes w_2$ is
  the $\Lf$-word over  $\Sigma_1\times \Sigma_2$ given by
  $[w\otimes v](l) := (w(l),v(l))$. 
\end{definition}

We  recall Bruy\`ere and Carton's definition
of automata for $\Lf$-words \cite{BruyereC01}. Then we introduce the
notion of (finite word) $\Lf$-automatic structures generalising the
notion of ordinal-automatic structures from \cite{SchlichtS11}.

\begin{definition}
  An \emph{$\Lf$-automaton} is a tuple
  $\Ac= (Q, \Sigma, I, F, \Delta)$ where $Q$ is a finite set of states,
  $\Sigma$ a finite alphabet, $I \subseteq Q$ the initial and $F\subseteq Q$
  the final states and $\Delta$ is a subset of
    $
    \left(Q\times \Sigma \times Q \right) \cup
    \left( 2^Q\times Q  \right) \cup   \left( Q\times 2^Q  \right)
    $
  called the transition relation. 
\end{definition}
Transitions in $Q\times \Sigma \times Q$ are called 
\emph{successor transitions},
transitions in $2^Q\times Q$ are 
called \emph{right limit transitions}, and
transitions in $Q\times 2^Q$ are 
called \emph{left limit transitions}.

\begin{definition}
  A \emph{run} of $\Ac$ on the $\Lf$-word $w$ is a map 
  $r:\Cuts(\Lf) \to Q$ such that
  \begin{itemize}
  \item $ \left( r(c), w(l), r(d)\right)\in \Delta$ for
    all $l\in L$ and all consecutive cuts $c,d$ around $l$,
  \item $(\lim_{c^-} r, r(c))\in\Delta$
    for all cuts $c\in\Cuts(\Lf)\setminus \{(\emptyset,L)\}$ without
    direct  predecessor, 
  \item $(r(c), \lim_{c^+} r )\in\Delta$
    for all cuts $c\in\Cuts(\Lf)\setminus \{(L,\emptyset)\}$ without
    direct successor. 
  \end{itemize}
  The run $r$ is \emph{accepting} if 
  $r((\emptyset,L))\in I$ and
  $r((L,\emptyset))\in F$. 
  The \emph{language} of $\Ac$ consists of all $\Lf$-words $w$ such
  that there is an accepting run of $\Ac$ on $w$. 
  For some $\Lf$-word $w$ and states $q,q'$ of $\Ac$ we write
  $q \run{w}{\Ac} q'$ if there is a run $r$ of $\Ac$ on $w$ such that
  $r((\emptyset,L))=q$ and $r((L,\emptyset))=q'$. 
\end{definition}

\begin{example} \label{exa:FiniteWordsRec} 
  The following $\Lf$-automaton accepts the set of finite
  $\Lf$-words over the alphabet $\Sigma$. Let $\Ac = (Q, \Sigma, I, F,
  \Delta)$
  with $Q= \{e_l, e_r, n, p\}$, $I=\{n\}, F=\{n,p\}$, and
  \begin{align*}
    \Delta&= \{ (n, \diamond, n), (p, \diamond, n)\}\cup 
    \{ (n,\sigma, p), (p, \sigma,p)\mid
    \sigma\in\Sigma\setminus\{\diamond\} \} \\
    &\cup
    \{(\{n\}, n), (n, \{n\}), 
    (p, \{n\}), 
    (\{p\}, e_l), (e_r, \{p\}),   
    (\{n,p\}, e_l), (e_r, \{n,p\})\}.
  \end{align*}
  For each $w\in \Words{\Lf}$, 
  \mbox{  $r((C,D))=
    \begin{cases}
      p & \text{if } \max(C) \text{ exists and } \max(C)\in\supp(w)\\
      n & \text{otherwise},
    \end{cases}$}
  defines an accepting run if $w$ is a finite $\Lf$-word.
  On an $\Lf$-word $w$ with infinite support, the successor
  transitions require infinitely many occurrences of state $p$. But
  then some limit position is marked with an error state $e_l$ or
  $e_r$ (where $l$ means 'from left' and $r$ 'from right') and the run
  cannot be continued (see Appendix \ref{sec:FiniteWordsRecognisable}
  for details). 
\end{example}

Automata on words (or infinite words or trees or infinite trees) have
been applied fruitfully for representing structures. This can be
lifted to the setting of $\Lf$-words and leads to the notion of
(oracle)-$\Lf$-automatic structures.   

\begin{definition} \label{definition: L-automatic structure} 
  Fix an $\Lf$-word $o$ (called an oracle).
  A structure $\Af=(A,R_1,R_2, \dots, R_m)$ is
  \emph{$\Lf$-$o$-automatic} 
  if there are $\Lf$-automata
  $\Ac, \Ac_1, \dots, \Ac_m$ such that
  \begin{itemize}
  \item $\Ac$ represents the domain of $\Af$ in the sense that
    $A= \{w \mid w\otimes o \in L(\Ac)\}$, and
  \item for each $i\leq m$, $\Ac_i$ represents $R_i$ in the sense that
    $R_i= \{ (w_1,w_2\dots, w_{r_i})\mid 
    w_1\otimes w_2\otimes\dots\otimes w_{r_i}\otimes o \in L(\Ac_i)\}$,
    where $r_i$ is the arity of relation $R_i$. 
  \end{itemize}
  We say that an $\Lf$-$o$-automatic structure is \emph{finite word}
  $\Lf$-$o$-automatic if its domain consists only of finite $\Lf$-words.
  Let $\Fc_\Lf$ denote the class of all
  finite word $\Lf$-oracle-automatic graphs. 
\end{definition}
For the constantly $\diamond$-valued oracle $o$ ($\forall x\in\Lf\
o(x)=\diamond$),
we call an $\Lf$-$o$-automatic structure  $\Lf$-automatic. 
We call some structure $\Af$  scattered-automatic
(scattered-oracle-automatic, respectively) if there is some 
scattered linear order $\Lf'$ (and some oracle $o$) such that $\Af$ is
finite word $\Lf'$-automatic ($\Lf'$-$o$-automatic, respectively).

Rispal and Carton \cite{RispalC05} showed that $\Lf$-oracle-automata
are closed under complementation if $\Lf$ is countable and
scattered which implies the following Proposition.
\begin{proposition}
  If $\Lf$ is a countable scattered linear order, 
  the set of finite word $\Lf$-$o$-automatic structures is closed under
  first-order definable relations. 
\end{proposition}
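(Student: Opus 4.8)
The plan is to reduce closure under first-order definable relations to two basic facts about $\Lf$-oracle-automata for countable scattered $\Lf$: (i) closure under Boolean operations (in particular complementation), and (ii) closure under projection, since every first-order formula can be built up from atomic formulas using $\neg$, $\wedge$, and $\exists$. I would proceed by induction on the structure of a first-order formula $\varphi(x_1,\dots,x_k)$, showing that the relation it defines on $\Af$ is recognised by some $\Lf$-oracle-automaton reading the convolution $w_1\otimes\dots\otimes w_k\otimes o$.

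\medskip

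\noindent\textbf{Base case and connectives.} For atomic formulas the claim is immediate: equality is recognised by a simple automaton checking that two tracks agree, and each relation $R_i$ is recognised by $\Ac_i$ by hypothesis; one only has to pad the convolution with dummy tracks for the variables not actually occurring, which is a routine track-permutation-and-cylindrification argument. For the inductive step, conjunction is handled by a product construction on the automata for the two conjuncts (after aligning their variable tracks), and this is where I would invoke the closure under intersection. Negation is handled by complementation: given an automaton for $\varphi$, the complement automaton recognises exactly the convolutions $w_1\otimes\dots\otimes w_k\otimes o$ with each $w_i\otimes o\in L(\Ac)$ (i.e.\ $w_i$ in the domain) on which $\varphi$ fails. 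Here the crucial subtlety is that complementation must be relativised to the domain: one intersects the complement with the automaton recognising tuples all of whose tracks lie in $A$, so that we negate within $\Af$ rather than within all of $\Words{\Lf}$.

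\medskip

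\noindent\textbf{Existential quantification.} Given an automaton $\Ac_\varphi$ recognising $\varphi(x_1,\dots,x_k)$, the relation defined by $\exists x_k\,\varphi$ is obtained by projecting away the $x_k$-track. Automata-theoretically this is nondeterministically guessing the erased track; closure under projection (the fourth of the good closure properties mentioned in the introduction for countable scattered $\Lf$) yields an $\Lf$-oracle-automaton for the projected language. The one point requiring care is that the guessed witness must itself lie in the domain $A$: before projecting, I would intersect $\Ac_\varphi$ with the domain automaton $\Ac$ on the $x_k$-track, so that the projection ranges only over genuine elements of $\Af$.

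\medskip

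\noindent\textbf{Main obstacle.} The substantive content is entirely carried by Rispal and Carton's complementation result (cited just above the Proposition), which guarantees that the class of languages recognised by $\Lf$-automata is an effective Boolean algebra closed under projection precisely when $\Lf$ is countable and scattered; the inductive bookkeeping of tracks, cylindrification, and relativisation to the domain is conceptually routine. Thus the main obstacle is not a deep argument but the careful verification that each construction respects the oracle track $o$ and stays within the domain $A$, so that at every stage of the induction the recognised language is of the required form $w_1\otimes\dots\otimes w_k\otimes o$ with all $w_i$ domain words. Once this relativisation is maintained consistently, the Proposition follows.
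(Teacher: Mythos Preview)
Your proposal is correct and aligns with the paper's approach: the paper does not spell out a proof at all but simply states that the proposition is implied by Rispal and Carton's complementation result, together with the standard closure properties (intersection, union, projection) already noted in the introduction. Your inductive argument over formula structure, with the care you take about relativisation to the domain and handling of the oracle track, is exactly the routine verification the paper leaves implicit.
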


\subsection{Order Forests}

\begin{definition}
  An (order) \emph{forest} is a partial order $\Af=(A, \leq)$ such that
  for each $a\in A$, the set $\{a'\in A\mid a\leq a'\}$ is a finite
  linear order. 
\end{definition}

Later we study the rank (also called ordinal height) of
$\Lf$-automatic well-founded
forests. For this purpose we recall the definition of rank.
Let $\Af=(A, \leq)$ be a well-founded partial order.
Setting
$\sup(\emptyset) = 0$ we define the \emph{rank} of $\Af$ by
$\rank(a, \Af)=\sup\{\rank(a',\Af)+1\mid a'<a \in A\}$ and
$\rank(\Af)=\sup\{\rank(a,\Af)+1 \mid a\in A\}$.

\section{Sum- and Box-Augmentation Technique} 
\label{sec:BoxDecomposition}

Delhomm\'{e} \cite{Delhomme04} characterised the set of ordinals that
can be represented by finite tree-automata. His results relies on a
decomposition of definable substructures into \emph{sum-} and
\emph{box-augmentations}. Huschenbett \cite{Huschenbett13} and  
Kartzow et al.~\cite{KaLiLo12} introduced a refined
notion of \emph{tamely colourable} 
box-augmentations in order to bound the ranks of tree-automatic
linear orders and well-founded order trees, respectively. We first
recall the definitions and then show that the decomposition technique
also applies to finite word scattered-oracle-automatic structures.

Before we go into details, let us sketch the ideas
underlying the sum- and box-augmentation technique. 
Given an $\Lf$-$o$-automatic structure $\Af$ with domain $A$ and
some automaton $\Ac$ (called \emph{parameter automaton}) that
recognises a subset of $A\times \Words{\Lf}$, let us denote by
$\Af_p$ the substructure of $\Af$ induced by $\Ac$ and $p$, i.e., with
domain $\{a\in A\mid a\otimes p\in L(\Ac)\}$. The main proposition of
this section says that there is a certain class $\Cc$ of structures
(independent of $p$) 
such that each $\Af_p$ is a tamely colourable sum-of-box augmentation
of structures 
from $\Cc$. $\Cc$ consists of finitely many $\Lf$-oracle-automatic
structures and scattered-oracle-automatic structures where the
underlying scattered linear order has finite condensation rank
strictly below that of $\Lf$. 
This allows to compute bounds on structural parameters (like finite
condensation rank of linear orders or ordinal height of well-founded
partial orders) by induction on the rank of $\Lf$. We say a structural
parameter $\varphi$ is compatible with sum-of-box augmentations if for
$\Af$ a sum-of-box augmentation of $\Af_1,
\dots, \Af_n$, there is a bound on $\varphi(\Af)$ in terms of
$\varphi(\Af_1), \dots, \varphi(\Af_n)$. 
The decomposition result tells us that some $\Lf$-automatic
structure $\Af$ is (mainly) a sum of boxes  of scattered-automatic
structures where the underlying orders have lower ranks. Thus, by
induction hypothesis $\varphi$ is bounded on these building blocks of
$\Af$. Thus, $\varphi(\Af)$ is also bounded if $\varphi$ is compatible
with sum- and box-augmentations.

\subsection{Sums and Boxes}
\label{sec:sumBox}

The next definition recalls the notion of sum- and
box-augmentations. We restrict the presentation to structures with
one binary relation (but the general case is analogous).

\begin{definition}
  \begin{itemize}
  \item A structure $\Af$ is a \emph{sum-augmentation} of structures
    $\Af_1, \dots, 
    \Af_n$ if the domain of $\Af$ can be partitioned into $n$ pairwise
    disjoint sets such
    that the substructure induced by the $i$-th set is isomorphic to
    $\Af_i$. 
  \item 
    A structure $\Af=(A, \leq^A)$ is a \emph{box-augmentation} of structures
    $\Bf_1=(B_1, \leq^{B_1}),$  $\dots$,\mbox{$\Bf_n=(B_n, \leq^{B_n})$} if
    there is  a bijection 
    $\eta: \prod_{i=1}^n B_i \to A$ such that 
    for all $1\leq j \leq n$ and all $\bar b= (b_1, \dots, b_n)\in
    B_1\times\dots\times B_n$ 
    \begin{align*}
      \Bf_j \simeq \Af{\restriction}_{\eta(\{b_1\}\times \dots\times
        \{b_{j-1}\} \times B_j \times \{b_{j+1}\} \times \dots \times \{b_n\})}.
    \end{align*}
  \item   
    Let $\Cc_1, \dots, \Cc_n$ be classes of structures. A structure
    $\Af$ is a \emph{sum-of-box augmentation} of
    $(\Cc_1, \dots, \Cc_n)$ if $\Af$ is a 
    sum-augmentation of 
    structures $\Bf_1, \dots, \Bf_k$ such
    that each $\Bf_j$ is a box-augmentation of  structures $\Cf_{j,1},
    \dots, \Cf_{j,n}$ with  $\Cf_{j,i}\in \Cc_i$. 
  \end{itemize}
\end{definition}

\begin{definition}
  Let $\Af=(A, \leq)$ be a sum-of-box augmentation of structures 
  $\Bf_{i,j}=(B_{i,j}, \leq_{i,j} )$ via the map 
$
    \eta: \bigsqcup_{i=1}^n  \prod_{j=1}^k  B_{i,j}  \to A.
$
  This sum-of-box augmentation is called 
  \emph{tamely colourable}
  if for each $1\leq j \leq k$ there is a function 
  $\varphi_j: (\bigsqcup_{i=1}^n B_{i,j})^2 \to C_j$
  with a finite range $C_j$  such that 
  the $(\varphi_j)_{1\leq j \leq k}$ determine the edges of $\Af$ in
  the sense that there is 
  a set $M \subseteq \prod_{j=1}^k C_j$ such that
  $\eta(b_1,\dots, b_k) \leq \eta(b_1',\dots,b_k')$ iff
  $\left( \varphi_1(b_1,b_1'), \dots, \varphi_k(b_k,b_k')\right)\in M$. 
\end{definition}

\subsection{Decomposition of Scattered-Automatic-Structures}
\label{sec:decomp-scattered}

In this section, we prove that the sum- and box-augmentation
technique applies to finite word scattered-oracle-automatic
structures. 
Fix an arbitrary scattered order $\Lf$ with 
$\FC(\Lf)=\alpha \geq 1$. Assume that 
$\Lf = \sum_{z\in \Z} \Lf_z$ where each $\Lf_z$ is a
(possibly empty) suborder with $\FC(\Lf_z) < \alpha$.
We first introduce notation concerning definable subgraphs.

\begin{definition}
  Let $o\in\Words{\Lf}$ be some oracle.
  Let $\Gf=(V, E)$ be a finite word $\Lf$-$o$-automatic graph. 
  For each parameter automaton $\Ac$ and parameter $p\in \Words{\Lf}$,
  we write  $\Gf^\Ac_p$ for the induced subgraph of $\Gf$ with domain   
$
    V^\Ac_p:=\{w\in V\mid w\otimes p \in L(\Ac)\}.     
$
\end{definition}
We write $\Gf_p$ and $V_p$  for $\Gf^\Ac_p$ and $V^\Ac_p$  if $\Ac$ is
clear from the context.

\begin{definition}
  Let $c_0=(C_0,D_0)$ and $c_1=(C_1,D_1)$ be cuts of $\Lf$. 
  For  a finite  $\Lf$-word $w$ we say $w$ is a 
  $(c_0,c_1)$-parameter if $\supp(w)\subseteq D_0\cap C_1$, i.e., 
  the support of $w$ is completely between $c_0$ and
  $c_1$. 
\end{definition}

\emph{For the rest of this section, we fix two numbers 
$z_0<z_1\in \Z$ and define the  cuts
$c_0:=(\sum_{z < z_0} \Lf_z, \sum_{z\geq z_0} \Lf_z)$ and
$c_1:=(\sum_{z \leq  z_1} \Lf_z, \sum_{z > z_1} \Lf_z)$.
We also define the scattered orders 
$\Lf_{\LEFT}:=\sum_{z < z_0} \Lf_z$ and
$\Lf_{\RIGHT}:=\sum_{z > z_1} \Lf_z$.}
The main result of this section is a uniform sum-of-box decomposition 
of all substructures defined by a given parameter automaton. 

\begin{theorem} \label{thm:Decomposition}
  Let  $\Gf$ be 
  some finite word $\Lf$-oracle-automatic graph $(V,E)$
  where $E$ is recognised by some automaton $\Ac_E$ with state set
  $Q_E$ and 
  let $\Ac$ be a parameter automaton with state set $Q$. 
  There are
  \begin{itemize}
  \item a set $\Cc_\LEFT$ of $\exp(\lvert Q \rvert^2+2\lvert Q_E\rvert^2)$
    many $\Lf_{\LEFT}$-oracle-automatic graphs, and
  \item  a set $\Cc_\RIGHT$ of $\exp(\lvert Q \rvert^2+2\lvert Q_E\rvert^2)$
    many $\Lf_{\RIGHT}$-oracle-automatic graphs,
  \end{itemize}
  such that for each $(c_0,c_1)$-parameter $p$ the subgraph 
  $\Gf^{\Ac}_p$ is a tamely-colourable sum-of-box-augmentation
  of $(\Cc_\LEFT, \Fc_{\Lf_{z_0}}, \Fc_{\Lf_{z_0+1}}, \dots, 
  \Fc_{\Lf_{z_1}}, \Cc_\RIGHT)$. \footnote{Recall that $\Fc_{\Lf}$ is the class of all finite word $\Lf$-oracle-automatic graphs, see Definition \ref{definition: L-automatic structure}.} 
\end{theorem}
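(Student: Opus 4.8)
The plan is to analyze runs of the parameter automaton $\Ac$ and the edge automaton $\Ac_E$ on $\Lf$-words by cutting them at the two fixed cuts $c_0$ and $c_1$. Since a $(c_0,c_1)$-parameter $p$ has support confined to $D_0\cap C_1$, any domain word $w$ (with $w\otimes p\in L(\Ac)$) can be split into three pieces: its restriction $w_\LEFT$ to $C_0 = \sum_{z<z_0}\Lf_z$, its middle part spread across the blocks $\Lf_{z_0},\dots,\Lf_{z_1}$, and its right part $w_\RIGHT$ on $D_1 = \sum_{z>z_1}\Lf_z$. A run of $\Ac$ on $w\otimes o$ is determined, via the limit-transition and successor-transition conditions, by the state values $r(c_0)$ and $r(c_1)$ together with the three partial runs on the left, middle, and right segments. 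The key observation is that for fixed boundary states $q_0 = r(c_0)$ and $q_1 = r(c_1)$, membership of $w$ in $V^\Ac_p$ factors as a conjunction of three independent conditions on $w_\LEFT$, the middle, and $w_\RIGHT$. This is exactly what one needs to present $V^\Ac_p$ as a union (the sum) over the finitely many choices of boundary state tuples of products (the boxes) of the left, middle blocks, and right domains.

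Concretely, first I would define, for each pair of states $(q,q')\in Q^2$, the $\Lf_\LEFT$-automaton that accepts $w_\LEFT\otimes o_\LEFT$ iff $q \run{w_\LEFT\otimes o_\LEFT}{} q'$ with $q$ initial; this uses $o$ restricted to $C_0$ as its own oracle, and there are $\lvert Q\rvert^2$ such left-segment conditions, plus analogously $2\lvert Q_E\rvert^2$ conditions tracking the behaviour of the edge automaton $\Ac_E$ on pairs of left-words (the factor $2$ coming from the two argument slots of the binary edge relation, or from tracking both endpoints of a potential edge). Bundling these into a single state-profile gives the advertised $\exp(\lvert Q\rvert^2 + 2\lvert Q_E\rvert^2)$ many $\Lf_\LEFT$-oracle-automatic graphs in $\Cc_\LEFT$, whose vertices are the left-restrictions $w_\LEFT$ equipped with the edge relation induced by $\Ac_E$; symmetrically for $\Cc_\RIGHT$. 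The middle blocks contribute the factors $\Fc_{\Lf_{z_0}},\dots,\Fc_{\Lf_{z_1}}$ directly, since each $\Lf_{z}$-restriction ranges over finite $\Lf_z$-words and the induced edge structure there is finite word $\Lf_z$-oracle-automatic by construction.

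For the box structure, I would fix one tuple of boundary states for $w$ and one for $w'$ (an edge candidate), which decomposes both the domain predicate and the edge predicate $E$ into a product of conditions across the $\lvert\{z_0,\dots,z_1\}\rvert + 2$ coordinates. The bijection $\eta$ of the box-augmentation maps a tuple $(w_\LEFT, w_{z_0},\dots,w_{z_1}, w_\RIGHT)$ to the glued word $w$; that this is a bijection onto a sum-component of $V^\Ac_p$ follows from the run-gluing argument. To verify tame colourability I would let each colouring function $\varphi_j$ record, on a pair of segment-words in coordinate $j$, the finite set of state-transition pairs $(q \run{}{} q')$ realizable by $\Ac$ and by $\Ac_E$ on that coordinate; these ranges $C_j$ are finite because $Q$ and $Q_E$ are finite. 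The set $M\subseteq\prod_j C_j$ then encodes which combinations of per-coordinate transition-profiles glue to a globally accepting run of $\Ac_E$ witnessing an edge, which is precisely the tame-colourability condition.

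The main obstacle I expect is the bookkeeping at the two boundary cuts, not the high-level decomposition. Because $c_0$ and $c_1$ may be limit cuts without direct predecessor or successor, the limit transitions of $\Ac$ and $\Ac_E$ couple the segments through the values $\lim_{c_i^-} r$ and $\lim_{c_i^+} r$, so the naive claim that the three partial runs are fully independent is false: one must show that the \emph{set} of states occurring cofinally below $c_i$ (and coinitially above) is itself determined by the left (resp.\ right) segment together with the fixed boundary state, and that consistency of the limit transition at $c_i$ can be absorbed into the finite profile carried by $\varphi_\LEFT$ and $\varphi_\RIGHT$. Making this precise -- i.e., proving that the left segment's contribution to whether a valid run exists depends only on a finite amount of data ($q_0$ together with the realizable cofinal state-sets), and symmetrically on the right -- is where the real work lies, and it is exactly what justifies folding the limit behaviour into the finitely many automata of $\Cc_\LEFT$ and $\Cc_\RIGHT$.
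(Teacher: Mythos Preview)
Your proposal is essentially the paper's proof: decompose each $\Lf$-word along the fixed cuts, track the state-pair profiles of $\Ac$ and $\Ac_E$ on each segment, sum over the resulting finitely many profile-classes, and take the box over the segments with tame colouring given by the per-segment transition sets. The paper organises the sum as equivalence classes of a relation $\wordsim{p\otimes o}{}$ (recording exactly which pairs $(q,q')$ admit a partial run on each segment for $\Ac$ and for $\Ac_E$ on the diagonal $w\otimes w$) rather than over tuples of boundary states, but this is cosmetic. One small correction: the extra factor $\lvert Q_E\rvert^2$ in the exponent does not come from ``two argument slots''; it comes from the additional choice of a set $M\subseteq Q_E\times Q_E$ that specifies which intermediate $\Ac_E$-state-pairs count as an edge in the building block $\Kf^M$.

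Your final worry about limit transitions at $c_0,c_1$ is a non-issue. The relation $q\run{w}{\Ac}q'$ is defined via a run on the segment \emph{including} its extremal cuts, and the definition of a run already enforces the right-limit transition at the maximal cut (if it has no direct predecessor) and the left-limit transition at the minimal cut (if it has no direct successor). Hence gluing segment-runs that agree on the shared boundary state is automatically a valid global run, with no residual coupling through $\lim_{c_i^\pm}$; the paper uses this compositionality without comment.
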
 
\begin{proof} 
  Let $o$ be the oracle such that $\Gf$ is finite word 
  $\Lf$-$o$-automatic. 
  By definition, we can write $\Lf$ as the sum 
$
    \Lf_{\LEFT} + \Lf_{z_0} + \Lf_{z_0+1} +\dots +
    \Lf_{z_1} + \Lf_{\RIGHT}.   
$
  Induced by this decomposition there is a decomposition of 
  any $\Lf$-word $w$ as 
$    w = w_{\LEFT} w_{z_0} w_{z_0+1} \dots w_{z_1} w_{\RIGHT}
$
  such that $w_j$ is an $\Lf_j$-word. In particular, our parameter
  and oracle decompose as
  \begin{align*}
    p = p_{\LEFT} p_{z_0} p_{z_0+1} \dots p_{z_1} p_{\RIGHT}\text{\quad
      and\quad }
    o = o_{\LEFT} o_{z_0} o_{z_0+1} \dots o_{z_1} o_{\RIGHT}.
  \end{align*}
  Independently of the choice of the $(c_0,c_1)$-parameter 
  $p$, $p_{\LEFT}$ and
  $p_{\RIGHT}$ are constant functions (with value $\diamond$).
  
  In order to construct a sum-of-box decomposition of $\Gf_p$, we
  first define the building blocks of this decomposition. 
  For this purpose, we define equivalence relations 
  $\wordsim{p\otimes o}{i}$ for each 
  $i\in\{\LEFT, \RIGHT, z_0, {z_0+1}, \dots, z_1\}$ on $\Lf_i$-words
  as follows.   
  For $\Lf_i$-words $w,w'$ set  $w \wordsim{p\otimes o}{i} w'$ if and only if
  \begin{enumerate}
  \item for all $q,q'\in Q$ \quad
    $q \run{w\otimes p_i\otimes o_i}{\Ac} q'  \Longleftrightarrow 
     q \run{w'\otimes p_i \otimes o_i}{\Ac} q'$ and
   \item for all $q,q'\in Q_E$ \quad
    $q \run{w\otimes w \otimes o_i}{\Ac_E} q'  \Longleftrightarrow 
     q \run{w'\otimes w' \otimes o_i}{\Ac_E} q'$.
  \end{enumerate}
  Note that for fixed $i,p,o$ there are at most $\exp({\lvert Q\times Q\rvert+
    \lvert Q_E\times Q_E \rvert})$ many $\wordsim{p\otimes o}{i}$
  equivalence classes.   
  As domains of the $\alpha_i$-oracle-automatic building blocks of our
  decomposition we use the sets
  $K(i,w,p,o):=\{ x\mid x \wordsim{p\otimes o}{i} w\}$ for each
  $\Lf_i$-word $w$. 
  We augment this notation by writing $K(i,v,p,o):=K(i,w,p,o)$ for $\Lf$-words $v$, where $w$ is the restriction of $v$ to $\Lf_i$. 
  Now for each $M\subseteq Q_E\times Q_E$ we define a structure
  $\Kf^M(i,w,p,o)=(K(i,w,p,o), E^M)$ where
  $(w_1,w_2)\in E^M$ if 
  $w_1,w_2\in K(i,w,p,o)$ and there is a  $(q,q')\in M$ such that
  $q \run{w_1\otimes w_2 \otimes o}{\Ac_E} q'$. 
  Recall that $p_{\LEFT}$ and $p_{\RIGHT}$ are independent
  of the concrete choice of the $(c_0,c_1)$-parameter $p$ whence (for
  fixed $o$) the sets
  \begin{align*}
    \Cc_{\LEFT}:=\left\{\Kf^M(\LEFT,w,p,o)\mid M \subseteq Q_E\times Q_E, p
     \text{ a }  (c_0,c_1)\text{-parameter}  \right\}\\
    \Cc_{\RIGHT}:=\left\{\Kf^M(\RIGHT,w,p,o)\mid M \subseteq Q_E\times Q_E, p
      \text{ a } (c_0,c_1)\text{-parameter}  \right\}
  \end{align*}
  have each at most $\exp(\lvert Q \rvert^2+2\lvert Q_E\rvert^2)$ many
  elements (up to isomorphisms). 

  Our next goal is the definition of the function $\eta$ that
  witnesses the decomposition claimed in this theorem. 
  For this purpose, let $\wordsim{p\otimes o}{}$ denote the
  equivalence on 
  $\Lf$-words that is the product of the 
  $\wordsim{p\otimes o}{i}$.\footnote{
    Thus,  for $w=w_{\LEFT}w_{z_0}w_{z_0+1}\dots w_{z_1} w_{\RIGHT}$ and
    $v=v_{\LEFT}v_{z_0}v_{z_0+1}\dots v_{z_1} v_{\RIGHT}$ we have
    $w\wordsim{p\otimes o}{} v$ iff
    $w_i\wordsim{p\otimes o}{i} v_i$ for all 
    $i\in\{\LEFT,\RIGHT, z_0,  z_0+1, \dots, z_1\}$.
  }
  Let
  \begin{align*}
    \eta: &\bigsqcup_{[w]\in V_p /{\wordsim{p\otimes o}{}}}
    K(\LEFT, w,p,o) \times \left(\prod_{i=z_0}^{z_1} K(i,w,p,o)\right )
    \times K(\RIGHT, w,p,o)  \longrightarrow V_p\\
    &(x_{\LEFT}, x_{z_0}, x_{z_0+1}, \dots, x_{z_1}, x_{\RIGHT}) \mapsto 
    x:=x_{\LEFT}x_{z_0} x_{z_0+1}\dots x_{z_1}x_{\RIGHT}. 
  \end{align*}
  It follows from the definitions that $\eta$ is a well-defined
  bijection (using the fact that some $\Lf$-word $x$ belongs to
  $V_p$ iff there is a run 
  \begin{align*}
    q_I \run{x_{\LEFT}\otimes p_{\LEFT} \otimes
    o_{\LEFT}}{\Ac} q_{z_0}  \run{x_{z_0}\otimes p_{z_0} \otimes
    o_{z_0}}{\Ac} q_{z_0+1} \dots q_{z_1} \run{x_{\RIGHT} \otimes p_{\RIGHT}
      \otimes o_{\RIGHT}}{\Ac} q_F    
  \end{align*}
  for some initial state $q_I$ and a final state $q_F$). 
  
  In order to finish the proof, we show that  $\Gf_p$ is a
  tamely-colourable sum-of-box-augmentation of 
  $(\Cc_\LEFT, \Fc_{\Lf_{z_0}}, \Fc_{\Lf_{z_0+1}}, \dots, 
  \Fc_{\Lf_{z_1}}, \Cc_\RIGHT)$ via
  $\eta$. For any $w\in V_p$, let
  $\Ff_w$ be the restriction of $\Gf_p$ to 
  $\eta\left(    K(\LEFT, w,p,o) \times \left(\prod_{i=z_0}^{z_1}
      K(i,w,p,o)\right )  \times K(\RIGHT, w,p,o)\right)$. 
  It is clear that $\Gf_p$ is a sum augmentation of 
  $(\Ff_{w_1}, \Ff_{w_2}, \dots, \Ff_{w_k})$ for $w_i$ representatives of
  the $\wordsim{p\otimes o}{}$-classes. 
  From now on 
  let  $I_E (F_E)$ denote the initial (final) states of
  $\Ac_E$.
  \begin{enumerate}
  \item Fix \mbox{$w= w_{\LEFT} w_{z_0} w_{z_0+1} \dots w_{z_1} w_{\RIGHT}\in
      V_p$}. We show that $\Ff_w$ is a box-augmen\-tation of
    $(\Cc_\LEFT, \Fc_{\Lf_{z_0}}, \Fc_{\Lf_{z_0+1}}, \dots, 
    \Fc_{\Lf_{z_1}}, \Cc_\RIGHT)$. For this purpose, fix 
    $i\in \{\LEFT, \RIGHT, z_0, z_0+1, \dots, z_1\}$  and 
    let $\overset{\leftarrow}{w}:=  w_{\LEFT}\dots w_{i-1}$,
    $\overset{\leftarrow}{o} :=  o_{\LEFT}\dots o_{i-1}$,
    $\overset{\rightarrow}{w} :=  w_{i+1}\dots w_{\RIGHT}$, and 
    $\overset{\rightarrow}{o} :=  o_{i+1}\dots o_{\RIGHT}$. 
    Let $M_i$ be the set defined by 
    \begin{equation}
      \label{eq:boxelement}
      (q_1,q_2)\in M_i \Longleftrightarrow \exists q_I\in I_E, q_F\in F_E
      \quad
      q_I \run
      {\overset{\leftarrow}{w} \otimes 
        \overset{\leftarrow}{w} \otimes 
        \overset{\leftarrow}{o}}{ \Ac_E} q_1 \text{ and }
      q_2 \run{
        \overset{\rightarrow}{w} \otimes 
        \overset{\rightarrow}{w} \otimes 
        \overset{\rightarrow}{o}}{ \Ac_E} q_F.
    \end{equation}
    The function
    \begin{align*}
      \eta^{w}_i: K(i,w,p,o) \to  V_p, \quad
      x_i \mapsto w_{\LEFT} w_{z_0} w_{z_0+1} \dots w_{i-1} x_i w_{i+1}
    \dots w_{z_1} w_{\RIGHT}
    \end{align*}
    embeds $\Kf^{M_i}(i,w,p,o)$ into $\Gf_p$
    because
    \begin{align*}
      &\forall x_i,y_i\in K(i,w,p,o)\quad (x_i,y_i)\in E^{M_i} \\
      \Leftrightarrow &\exists (q_1,q_2)\in M_i \quad  q_1 \run{x_i\otimes
        y_i \otimes o_i}{\Ac_E} q_2\\
      \overset{\eqref{eq:boxelement}}{\Leftrightarrow} 
      &\exists q_I\in I_E, q_F\in F_E \quad
      q_I \run{ \overset{\leftarrow}{w} \otimes 
        \overset{\leftarrow}{w} \otimes \overset{\leftarrow}{o}}{ \Ac_E} q_1
      \run{x_i\otimes  y_i \otimes o_i}{\Ac_E} q_2
      \run{\overset{\rightarrow}{w} \otimes 
        \overset{\rightarrow}{w} \otimes
        \overset{\rightarrow}{o}}{ \Ac_E} q_F \\
      \Leftrightarrow 
      &\left(\eta^w_i(x_i), \eta^w_i(y_i)\right) \in E.
    \end{align*}
  \item We show that the decomposition is tamely colourable. 
    For all $j\in \{\LEFT, \RIGHT, z_0, z_0+1, \dots, z_1\}$,
    let
      $c_j: (\bigsqcup_{[w]\in V_p/\wordsim{p\otimes o}{}} K(j,w,p,o))^2 \to
      Q_E^2$ be the colouring function satisfying
      \mbox{$c_j(x_j,y_j) := \{ (q,q')\in A_E \mid 
      q\run{x_j\otimes y_j\otimes o_j}{\Ac_E} q'\}$.}
    The colour functions  $(c_j)_{j\in\{\LEFT, \RIGHT, z_0, z_0+1,
      \dots, z_1)}$ determine  $E$ because for
    $w = w_\LEFT w_{z_0} w_{z_0+1} \dots w_{z_1} w_\RIGHT$ and
    \mbox{$v = v_\LEFT v_{z_0} v_{z_0+1} \dots v_{z_1} v_\RIGHT$, }
    \begin{align*}
    &(w_\LEFT w_{z_0} w_{z_0+1} \dots w_{z_1} w_\RIGHT,
    v_\LEFT v_{z_0} v_{z_0+1} \dots v_{z_1} v_\RIGHT)\in E\\
    \Longleftrightarrow 
    &\exists q_0, \dots, q_k\in Q_E \ 
    \left(
    \begin{aligned}
      &q_0\in I_E, q_k\in F_E,
      \text{ and } \\     
      &q_0 \run{w_\LEFT\otimes v_\LEFT \otimes o_\LEFT}{\Ac_E} q_1 
      \run{w_{z_0}\otimes v_{z_0} \otimes o_{z_0}}{\Ac_E} q_2   
      \dots 
      q_{k-1} 
      \run{w_{\RIGHT}\otimes v_{\RIGHT} \otimes o_{\RIGHT}}{\Ac_E} q_k
    \end{aligned} \right)
    \\
    \Longleftrightarrow
    &\exists q_0, \dots, q_k\in Q_E \ 
    \left(
      \begin{aligned}
        &q_0\in I_E, q_k\in F_E,   \text{ and }\\
        &(q_{i-1},q_{i})\in c_j(w_j,v_j)        
       \text{ with } j=
        \begin{cases}
          \LEFT &\text{if }i=1,\\
          \RIGHT &\text{if } i={k},\\
          z_0+m &\text{if } i=m  
        \end{cases}
      \end{aligned} \right) . 
    \end{align*}
  \end{enumerate} 
\end{proof}

\section{Bounds on 
  Scattered-Oracle-Automatic Structures} 
\label{sec:LinOrd}

\subsection{FC-Ranks of Linear-Orders}
\label{sec:fcrankbound}

In this section, we first study the question which scattered linear orders
are $\Lf$-oracle-automatic for a fixed order $\Lf$. We provide a sharp
bound on the $\FC$-rank. For the upper bound we lift Schlicht and
Stephan's result \cite{SchlichtS11} using our new sum- and box-decomposition
from the case where $\Lf$ is an ordinal (detailed proof in Appendix
\ref{sec:ProofOnScatteredOrders}):  

\begin{theorem} \label{thm:BoundsonScatteredOrders}
  Let $\Lf$ be a scattered order of $\FCstar$ rank
  $1+\alpha$ ($0$, respectively) for some ordinal $\alpha$. 
  Then every finite word $\Lf$-oracle-automatic scattered linear order
  $\Af$ satisfies 
  \mbox{$\FCstar(\Af) < \omega^{\alpha+1}$} ($\FCstar(\Af) <\omega^0=1$, respectively).
\end{theorem}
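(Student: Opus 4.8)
The plan is to prove Theorem~\ref{thm:BoundsonScatteredOrders} by transfinite induction on $\alpha$, using the decomposition from Theorem~\ref{thm:Decomposition} as the engine of the inductive step. The base case $\alpha = 0$ (i.e.\ $\FCstar(\Lf) = 0$) means $\Lf$ is finite, so every $\Lf$-word has finitely many positions and the domain of any $\Lf$-automatic structure is finite; hence any $\Lf$-oracle-automatic linear order is finite and satisfies $\FCstar(\Af) < 1$, as required. For the inductive step I would assume the bound holds for all scattered orders of $\FCstar$-rank strictly below $1+\alpha$ and prove it for $\Lf$ with $\FCstar(\Lf) = 1 + \alpha$.

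For the inductive step, suppose $\Af = (A, \leq^A)$ is a finite word $\Lf$-$o$-automatic linear order, presented by a domain automaton and an automaton $\Ac_E$ recognising the order relation. The key structural fact I would exploit is that because $\FCstar(\Lf) = 1+\alpha$, we can write $\Lf = \sum_{z \in \Z} \Lf_z$ where each $\Lf_z$ is a suborder with $\FC(\Lf_z) < 1 + \alpha$ (this is essentially the statement that the $\alpha$-th condensation of $\Lf$ is finite, realised as a $\Z$-indexed sum of lower-rank pieces). Fixing $z_0 < z_1$ and applying Theorem~\ref{thm:Decomposition} with the trivial parameter automaton (so the relevant substructure $\Gf_p$ is all of $\Af$), I obtain that $\Af$ is a tamely-colourable sum-of-box-augmentation of $(\Cc_\LEFT, \Fc_{\Lf_{z_0}}, \dots, \Fc_{\Lf_{z_1}}, \Cc_\RIGHT)$, where the building blocks are scattered-oracle-automatic linear orders whose underlying orders $\Lf_\LEFT, \Lf_{z_0}, \dots, \Lf_{z_1}, \Lf_\RIGHT$ all have finite condensation rank strictly below $1+\alpha$. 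By the induction hypothesis, each building block is a linear order of $\FCstar$-rank below $\omega^{\beta+1}$ for the appropriate smaller $\beta$, hence below $\omega^{\alpha}$ (or a successor thereof controllable by the smaller ranks).

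The arithmetic heart of the argument is then to show that $\FCstar$ is \emph{compatible} with tamely-colourable sum-of-box-augmentations in the sense described in Section~\ref{sec:BoxDecomposition}: if $\Af$ is such an augmentation of finitely many linear orders each of $\FCstar$-rank below some bound $\beta$, then $\FCstar(\Af)$ is bounded in terms of $\beta$ and the number of factors. For a box-augmentation of linear orders (an iterated lexicographic-type product with a fixed finite colour pattern determining the order), the $\FCstar$-rank of the product is bounded by the sum of the ranks of the factors, so a box of finitely many orders each of rank below $\omega^{\alpha}$ has rank below $\omega^{\alpha} \cdot m$ for the fixed number $m$ of factors, which is still below $\omega^{\alpha+1}$; a finite sum-augmentation can only increase the rank by a finite factor, keeping it below $\omega^{\alpha+1}$. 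The subtle point I expect here is that the decomposition only directly yields lower rank for the \emph{left} and \emph{right} tails $\Cc_\LEFT, \Cc_\RIGHT$ and the finitely many middle blocks; I would handle this by letting $z_1 - z_0$ grow or by iterating, but the cleaner route is to observe that for a single fixed choice of $z_0, z_1$ the middle is a finite box of the lower-rank pieces while the two tails are themselves $\Lf_\LEFT$- and $\Lf_\RIGHT$-oracle-automatic with strictly smaller $\FC$-rank, so the induction applies uniformly to all three kinds of factor.

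The main obstacle, and the step deserving the most care, is establishing the precise additive/multiplicative behaviour of $\FCstar$ under box-augmentation for linear orders and confirming that the tame colourability genuinely forces the global order to be a lexicographic combination of the factor orders, so that the rank bound $\FCstar(\text{box}) \leq \sum_i \FCstar(\text{factor}_i) + (\text{finite})$ actually holds. In particular I must verify that the finite colour palette $C_j$ and the acceptance set $M$ cannot conspire to produce an order of rank jumping past $\omega^{\alpha+1}$; this is where the ordinal arithmetic identity $\omega^{\beta} \cdot m < \omega^{\beta+1}$ for finite $m$ and the closure of ordinals below $\omega^{\alpha+1}$ under finite sums and products do the real work. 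Once compatibility is proven, the induction closes and the stated strict bound $\FCstar(\Af) < \omega^{\alpha+1}$ follows, matching the optimality already established in \cite{SchlichtS11}.
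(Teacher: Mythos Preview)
Your proposal has a genuine gap at the point you yourself flag as ``the subtle point.'' The tails $\Lf_\LEFT=\sum_{z<z_0}\Lf_z$ and $\Lf_\RIGHT=\sum_{z>z_1}\Lf_z$ do \emph{not} in general have $\FC$-rank strictly below $1+\alpha$: an infinite (one-sided) $\Z$-sum of orders of rank $<1+\alpha$ can and typically does have rank exactly $1+\alpha$ again (take $\Lf=\omega^2=\sum_{z\geq 0}\omega$; any right tail is still $\omega^2$). Hence the induction hypothesis does not apply to the elements of $\Cc_\LEFT$ and $\Cc_\RIGHT$, and applying Theorem~\ref{thm:Decomposition} once with the trivial parameter gives you no control whatsoever over those two box-factors. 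Letting $z_1-z_0$ grow does not help either, since the tails remain.

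What Theorem~\ref{thm:Decomposition} \emph{does} give you about the tails is that $\Cc_\LEFT$ and $\Cc_\RIGHT$ are \emph{finite} sets whose size depends only on the automata, not on the parameter $p$. The paper exploits exactly this: rather than decomposing $\Af$ itself, one uses the interval formula $y_1\leq x\leq y_2$ as parameter automaton to obtain, for every $n$, a definable substructure $\Af_{p_n}$ with $\FCstar(\Af_{p_n})=\omega^{\alpha}\cdot n$ (via Lemma~\ref{lem:AllFCstarRankAsSubOrders}). Since the same finite $\Cc_\LEFT\cup\Cc_\RIGHT$ serves for all $n$, a pigeonhole argument produces an $m$ such that no structure in $\Cc_\LEFT\cup\Cc_\RIGHT$ has $\FCstar$-rank in the interval $(\omega^{\alpha}\cdot m,\ \omega^{\alpha}\cdot 4m]$. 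Decomposing $\Af_{p_{4m}}$ then yields a box whose tail factors have rank $\leq\omega^{\alpha}\cdot m$ (they are substructures of $\Af_{p_{4m}}$, so $\leq\omega^{\alpha}\cdot 4m$, hence by the gap $\leq\omega^{\alpha}\cdot m$) and whose middle factors have rank $<\omega^{\alpha}$ by induction; the Huschenbett bound $\FCstar(\text{box})\leq\bigoplus_i\FCstar(\text{factor}_i)$ then gives the contradiction $\omega^{\alpha}\cdot 4m\leq \omega^{\alpha}\cdot m\oplus\omega^{\alpha}\cdot m\oplus(\text{something}<\omega^{\alpha})<\omega^{\alpha}\cdot 4m$. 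This Delhomm\'e-style pigeonhole over a parametrised family, rather than a single decomposition of $\Af$, is the missing idea.
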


If $\Lf$ is an ordinal of the form $\omega^{1+\alpha}$,
Schlicht and Stephan \cite{SchlichtS11} showed that the
supremum of the 
$\Lf$-automatic ordinals is exactly 
$\omega^{\omega^{\alpha+1}}$ whence Theorem
\ref{thm:BoundsonScatteredOrders} is optimal.
From our theorem we can also derive the following characterisation of
finite $\FC$-rank presentable ordinals (cf.~Appendix 
\ref{sec:OrdinalsOverFiniteRankOrders}).

\begin{corollary}
  Let $\Lf$ be a scattered linear order with
  $\FC(\Lf)<\omega$. The finite word $\Lf$-oracle-automatic ordinals
  are exactly 
  those below $\omega^{\omega^{\FC(\Lf)+1}}$. 
\end{corollary}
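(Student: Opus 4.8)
The plan is to obtain both inclusions from Theorem~\ref{thm:BoundsonScatteredOrders} together with the elementary value of the finite condensation rank of an ordinal. Every ordinal is well-ordered, hence scattered, so for the upper inclusion I would apply Theorem~\ref{thm:BoundsonScatteredOrders} directly to an arbitrary finite word $\Lf$-oracle-automatic ordinal $\beta$. Writing $\FCstar(\Lf)=1+\alpha$ -- a finite ordinal, because $\FCstar(\Lf)\le\FC(\Lf)<\omega$ -- the theorem gives $\FCstar(\beta)<\omega^{\alpha+1}$.

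The key ingredient is then an order-theoretic lemma that I would prove by induction on $\gamma$: for every ordinal $\beta$ with $\omega^{\gamma}\le\beta<\omega^{\gamma+1}$ one has $\FCstar(\beta)=\gamma$. The induction rests on the fact that the first condensation acts on a power of $\omega$ by left subtraction, $\omega^{\gamma}/{\sim_{1}}\cong\omega^{\eta}$ for the unique $\eta$ with $1+\eta=\gamma$, while summands of strictly smaller leading exponent get absorbed; consequently $\omega^{\gamma}/{\sim_{\xi}}$ first becomes finite at $\xi=\gamma$. This yields the clean equivalence $\FCstar(\beta)<\omega^{\delta}\iff\beta<\omega^{\omega^{\delta}}$ for all $\delta\ge 1$. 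Instantiating $\delta=\alpha+1=\FCstar(\Lf)$ converts the bound above into $\beta<\omega^{\omega^{\FCstar(\Lf)}}$, and since $\FCstar(\Lf)\le\FC(\Lf)$ this lies below $\omega^{\omega^{\FC(\Lf)+1}}$, giving the stated upper inclusion.

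For the reverse inclusion I would appeal to Schlicht and Stephan~\cite{SchlichtS11}, who realise every ordinal below $\omega^{\omega^{\alpha+1}}$ as a finite word $\omega^{1+\alpha}$-automatic order; this already pins the supremum at $\omega^{\omega^{\FCstar(\Lf)}}$ for $\Lf=\omega^{1+\alpha}$. To cover an arbitrary scattered $\Lf$ of $\FCstar$-rank $1+\alpha$, I would transfer their encoding: by Hausdorff's generation of scattered orders from $\omega$ and $\omega^{*}$ such an $\Lf$ contains a suborder of the same rank on which the construction can be run, and I would use the oracle to mark these positions so that the $\omega^{1+\alpha}$-automaton is simulated by an $\Lf$-automaton ignoring the unmarked part.

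I expect two delicate points. The first is the rank computation when $\gamma$ is a limit, where $\sim_{1}$ leaves $\omega^{\gamma}$ unchanged and the collapse is deferred to limit stages of the condensation hierarchy; the left-subtraction bookkeeping is exactly what makes $\FCstar(\beta)=\gamma$ hold on the nose. The second, and the real obstacle, is that realisability is controlled by $\FCstar(\Lf)$ rather than by $\FC(\Lf)$ -- the two may differ, as $\FCstar(\omega\cdot 2)=1<2=\FC(\omega\cdot 2)$ shows -- so the genuinely tight supremum is $\omega^{\omega^{\FCstar(\Lf)}}$, and one must verify that every scattered order of a prescribed rank really supports the full Schlicht--Stephan encoding rather than merely a weaker one.
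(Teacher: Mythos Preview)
Your upper bound argument is correct and is essentially what the paper does: Theorem~\ref{thm:BoundsonScatteredOrders} plus the standard computation $\FCstar(\beta)=\gamma\iff\omega^\gamma\le\beta<\omega^{\gamma+1}$ for ordinals $\beta$.

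The lower bound has a genuine gap. Your transfer strategy needs $\Lf$ to contain a suborder on which the Schlicht--Stephan $\omega^{1+\alpha}$-encoding can be simulated---effectively, a copy of $\omega^{1+\alpha}$ for the oracle to mark. But a scattered order of rank $1+\alpha$ need not contain \emph{any} infinite well-ordered suborder: for $\Lf=(\omega^*)^{1+\alpha}$ one has $\FC(\Lf)=\FCstar(\Lf)=1+\alpha$, yet every ascending chain in $\Lf$ is finite, so there is no copy of $\omega$, let alone of $\omega^{1+\alpha}$. The Schlicht--Stephan automata have nothing to run on, and ``ignoring the unmarked part'' cannot repair this.

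The paper (Appendix~\ref{sec:OrdinalsOverFiniteRankOrders}) instead builds the presentation directly for every $\Lf$ with $\FC(\Lf)=\FCstar(\Lf)=1+n$, by induction on $n$. A rank-detecting automaton $\Cc_{n+1}$ labels each cut of $\Lf$ with its local left and right rank; the cuts of maximal rank embed in $\Z$ and yield a decomposition $\Lf=\sum_{i\in I}\Lf_i$ with $I\in\{\omega,\omega^*,\Z\}$ and $\FC(\Lf_i)\le n$. An oracle records this decomposition, and the inductively obtained presentations of $\omega^{\omega^{n-1}}$ on the blocks $\Lf_i$ are assembled---backwards-lexicographically when $I$ is cofinal, lexicographically when it is coinitial---into a presentation of $\omega^{\omega^{n}}$; products and definable substructures then give all smaller ordinals. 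The essential difference from your proposal is that the construction uses whichever of $\omega$ or $\omega^*$ the top level of $\Lf$ actually provides, rather than insisting on a well-ordered piece. Your observation that the tight bound is governed by $\FCstar(\Lf)$ rather than $\FC(\Lf)$ is correct; the paper's construction is indeed carried out under the additional hypothesis $\FC(\Lf)=\FCstar(\Lf)$.
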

Here, the oracle is crucial: $0$ and $1$ 
are the only  finite word $\Z^n$-automatic ordinals if $n\geq 1$
(any $\Z^n$-automatic linear order with $2$ elements contains a
copy of $\Z$).

\subsection{Ranks of Well-Founded Automatic Order Forests}
\label{sec:Forests}

We next study scattered-oracle-automatic well-founded order forests. 
Kartzow et al.~\cite{KaLiLo12} proved compatibility of the ordinal height
with sum- and box-augmentations. Together with our decomposition
theorem this yields a bound on the height of an
$\Lf$-oracle-automatic well-founded order forest in terms of
$\FC(\Lf)$. Unfortunately, in important cases these bounds are not
optimal. For scattered orders $\Lf$ where the set of finite
$\Lf$-words allow an $\Lf$-oracle-automatic order which is scattered,
we can obtain better bounds. If $\Lf$ is an ordinal or has finite
$\FC$-rank, the set of
$\Lf$-words allows such a scattered ordering.
If the finite $\Lf$-words
admit an $\Lf$-automatic scattered order $\leq$, the
Kleene-Brouwer ordering of an $\Lf$-oracle-automatic well-founded
order forest with respect to $\leq$ is $\Lf$-oracle-automatic
again. Thus, its FC-rank is bounded 
by our previous result. Adapting a result of Kuske et
al.\cite{KuLiLo11} relating the $\FC$-rank of the Kleene-Brouwer
ordering with the height of the forest, we derive a bound on 
the height (cf.~Appendix \ref{app:RanksofForests}). Our main result on
forests is as follows. 

\begin{theorem}\label{thm:BoundsOnTreeRanks}
  \begin{itemize}
  \item Let $\Lf$ be an ordinal or a  scattered linear
    order with $\FC(\Lf)<\omega$.
    Each $\Lf$-oracle-automatic forest $\Ff=(F, \leq)$ 
    has rank strictly below $\omega^{\FC(\Lf)+1}$.
  \item Let $\Lf$ be some scattered linear order.
    Each $\Lf$-oracle-automatic forest $\Ff=(F, \leq)$ 
    has rank strictly below $\omega^{\omega\cdot (\FC(\Lf)+1)}$.
  \end{itemize}
\end{theorem}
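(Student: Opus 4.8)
The plan is to prove both bullets by a single induction on the finite condensation rank of $\Lf$, using Theorem \ref{thm:Decomposition} as the engine and the compatibility of ordinal height with sum- and box-augmentations (from \cite{KaLiLo12}) as the bookkeeping tool. I would first set up the induction. For the base case one takes $\FC(\Lf)$ small — either $\Lf$ finite (rank $0$), where a finite-word $\Lf$-oracle-automatic forest is essentially word-automatic and the classical bound applies, or $\Lf$ a single copy of $\Z$ or $\omega$ — and verify the claimed bound directly. For the inductive step, I would fix a forest $\Ff=(F,\leq)$ presented over $\Lf$ with $\FC(\Lf)=\alpha\geq 1$, write $\Lf=\sum_{z\in\Z}\Lf_z$ with each $\FC(\Lf_z)<\alpha$, and apply the decomposition theorem to the definable substructures $\Ff_p$ cut out by a parameter automaton. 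The point of using a parameter automaton is Delhomm\'e's method: one shows that the collection of substructures $\{\Ff_p\}_p$ is finite up to isomorphism, so each appears in a sum-of-box augmentation of building blocks drawn from $\Cc_\LEFT$, the $\Fc_{\Lf_z}$, and $\Cc_\RIGHT$, all of which live over orders of \emph{strictly smaller} $\FC$-rank and hence fall under the induction hypothesis.

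The key structural step is to express $\Ff$ itself as a sum-of-box augmentation of substructures over lower-rank orders, and then to bound the rank of a sum-of-box augmentation in terms of the ranks of its building blocks. I would invoke the compatibility statement of \cite{KaLiLo12}: ordinal height is compatible with sum- and box-augmentations, so there is a fixed arithmetic operation (a finite sum, then a product-like combination governed by the box structure) controlling $\rank(\Ff)$ from the ranks of the $\Cf_{j,i}$. Concretely, if each building block over a rank-$(\alpha{-}1)$ order has height below the bound $B(\alpha-1)$ granted by induction, then a box of finitely many such blocks has height bounded by a finite product/tower of $B(\alpha-1)$'s, and the outer sum-augmentation over the $\Z$-indexed pieces contributes at most one more factor of $\omega$. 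Tracking these operations is what produces $\omega^{\FC(\Lf)+1}$ in the tame case (first bullet) and the weaker $\omega^{\omega\cdot(\FC(\Lf)+1)}$ in general (second bullet): the difference is exactly whether the decomposition is \emph{tamely colourable}, which bounds the number of box-factors that genuinely interact and so keeps the exponent linear in $\alpha$ rather than letting it blow up to $\omega\cdot\alpha$.

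For the first bullet, where $\Lf$ is an ordinal or has finite $\FC$-rank, the sharper route is the Kleene-Brouwer argument sketched in the text rather than the raw induction. Here I would first construct an $\Lf$-automatic scattered linear order $\leq$ on the set of all finite $\Lf$-words (this is where the hypothesis that $\Lf$ is an ordinal or has finite rank is used, since a scattered ordering of $\Words{\Lf}$ need not exist otherwise). Given such an ordering, the Kleene-Brouwer order of the forest $\Ff$ with respect to $\leq$ is again $\Lf$-oracle-automatic — this must be checked at the automaton level — and is a scattered linear order, so Theorem \ref{thm:BoundsonScatteredOrders} bounds its $\FCstar$-rank by $\omega^{\alpha+1}$. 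Finally, adapting the relation of Kuske et al.\ \cite{KuLiLo11} between the $\FC$-rank of the Kleene-Brouwer ordering and the ordinal height of the forest, I would transfer this bound back to $\rank(\Ff)<\omega^{\FC(\Lf)+1}$.

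The main obstacle is twofold. First, the bookkeeping in the inductive step: one has to verify that the rank-combining operation for sum-of-box augmentations, when iterated through the induction, actually yields the stated closed-form exponent and does not accumulate extra factors — and in particular that tame colourability is what collapses the bound from $\omega^{\omega\cdot(\alpha+1)}$ to $\omega^{\alpha+1}$. Second, and for the first bullet more delicate, is the construction of the $\Lf$-automatic scattered ordering of $\Words{\Lf}$ together with the verification that Kleene-Brouwer preserves $\Lf$-oracle-automaticity; the scatteredness of this induced order is essential for invoking Theorem \ref{thm:BoundsonScatteredOrders}, and establishing it for arbitrary $\Lf$ of finite rank (not just ordinals) is where I expect the real work to lie.
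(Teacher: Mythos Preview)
Your Kleene--Brouwer route for the first bullet is exactly what the paper does, and you have correctly identified the two ingredients that need work there (the existence of an $\Lf$-automatic scattered order on finite $\Lf$-words, and the transfer from $\FC$-rank of the KB order back to the height of the forest).

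For the second bullet, however, your induction has a genuine gap. You write that the building blocks from $\Cc_\LEFT$, the $\Fc_{\Lf_z}$, and $\Cc_\RIGHT$ ``all live over orders of strictly smaller $\FC$-rank and hence fall under the induction hypothesis.'' This is false for $\Cc_\LEFT$ and $\Cc_\RIGHT$: they consist of $\Lf_\LEFT$- and $\Lf_\RIGHT$-oracle-automatic structures, where $\Lf_\LEFT=\sum_{z<z_0}\Lf_z$ and $\Lf_\RIGHT=\sum_{z>z_1}\Lf_z$ typically have the \emph{same} $\FC$-rank as $\Lf$. So you cannot bound the ranks of those factors by induction, and your ``each building block has height below $B(\alpha-1)$'' step does not go through. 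What Theorem~\ref{thm:Decomposition} actually gives you about $\Cc_\LEFT$ and $\Cc_\RIGHT$ is not lower rank but \emph{finiteness}: their cardinality is bounded by $\exp(\lvert Q\rvert^2+2\lvert Q_E\rvert^2)$, uniformly in the parameter $p$. The paper exploits this via a pigeonhole argument: since there are infinitely many target values $\omega^{\delta+n}$ but only finitely many structures in $\Cc_\LEFT\cup\Cc_\RIGHT$, some $\omega^{\delta+m}$ is not the $\infrank$ of any of them; combined with the induction hypothesis (which does apply to the middle factors $\Fc_{\Lf_z}$), no factor of $\Ff_{p_m}$ has $\infrank=\omega^{\delta+m}$. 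The tool from \cite{KaLiLo12} that closes the argument is then not a ``rank-combining formula'' but the indecomposability statement: if no factor of a tamely colourable sum-of-box augmentation has $\infrank=\omega^\alpha$, then neither does the whole.

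Relatedly, your explanation of why the two bullets give different bounds is incorrect. Both bullets use tamely colourable decompositions; tame colourability is not what separates them. The sharper bound in the first bullet comes entirely from the availability of the Kleene--Brouwer detour (which requires the scattered ordering of $\Lf$-words), while the second bullet's weaker bound reflects that each step of the direct induction on $\FC(\Lf)$ contributes a full $+\omega$ to the exponent, because the pigeonhole argument only rules out one value $\omega^{\delta+m}$ among $\omega$ many.
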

\begin{remark}
  The bounds in the first part are optimal: for each ordinal $\Lf$ and
  each $c\in \N$, we can construct an $\Lf$-automatic tree of height
  $\omega^{\FC(\Lf)}\cdot c$ (cf.~Appendix \ref{sec:HighRanks}).
\end{remark}

\section{Separation of Tree- and Ordinal-Automatic Structures}
\label{sec:BooleanAlgebraNotAutomatic}

\begin{theorem}
  The countable atomless Boolean algebra is not finite word
  $\Lf$-automatic for any ordinal $\Lf$. 
\end{theorem}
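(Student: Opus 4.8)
The plan is to argue by contradiction, assuming the countable atomless Boolean algebra $\Bf$ is finite word $\Lf$-automatic for some ordinal $\Lf$, and to run an induction on $\alpha=\FC(\Lf)$ driven by the decomposition of Theorem~\ref{thm:Decomposition}. Since a finite $\Lf$ presents only finite structures, the base case $\alpha=0$ is immediate, so I assume $\alpha\geq 1$ together with the inductive hypothesis that no scattered order of finite condensation rank below $\alpha$ presents $\Bf$. First I would pass to the reduct $(B,\leq)$: meet, join, complement and the constants $0,1$ are first-order definable from $\leq$, so by closure under first-order definable relations this reduct is again $\Lf$-automatic, and it suffices to rule it out. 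The feature I would exploit is the self-similarity of $\Bf$: for every nonzero $a\in B$ the ideal $[0,a]=\{x\mid x\leq a\}$ is again a countable atomless Boolean algebra, hence isomorphic to $\Bf$. Taking a parameter automaton $\Ac$ recognising $\{(x,a)\mid x\leq a\}$, the induced subgraphs $\Gf^{\Ac}_a$ are exactly these ideals, so Theorem~\ref{thm:Decomposition} supplies a single finite stock of building blocks through which every copy $[0,a]\cong\Bf$ factors as a tamely-colourable sum-of-box-augmentation.

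Second I would analyse these building blocks. Fixing the decomposition $\Lf=\Lf_{\LEFT}+\Lf_{z_0}+\dots+\Lf_{z_1}+\Lf_{\RIGHT}$ with $z_0$ at the very start, so that $\Lf_{\LEFT}$ is empty and the left factor is trivial, the middle factors lie in the classes $\Fc_{\Lf_{z_i}}$ with $\FC(\Lf_{z_i})<\alpha$. Here the induction hypothesis enters through a reflection principle: any interval of a middle factor is first-order definable with parameters, hence $\Lf_{z_i}$-oracle-automatic, so by induction no such interval is isomorphic to $\Bf$; that is, each middle factor is \emph{$\Bf$-free}. The heart of the argument is then a combinatorial lemma asserting that the atomless Boolean algebra is \emph{not} a tamely-colourable sum-of-box-augmentation of $\Bf$-free posets: since the order of $\Bf$ is reconstructed from finitely many finite-range colour functions, a copy of $\Bf$ sitting inside such an augmentation must already be reflected into a single coordinate, contradicting $\Bf$-freeness there. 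Proving this reflection cleanly—tracking how the dense, everywhere-self-similar comparability pattern of $\Bf$ is forced into one factor despite the sum over finitely many boxes—is the step I expect to be the main obstacle.

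The remaining difficulty, and the reason the induction is genuinely on ordinals rather than a one-line reduction, is the right-hand factor: the tail $\Lf_{\RIGHT}=\sum_{z>z_1}\Lf_z$ of an ordinal such as $\omega^{\alpha}$ is again cofinal of full condensation rank $\alpha$, so I cannot declare the right factors $\Bf$-free by the induction hypothesis alone. To control them I would use that elements of $\Bf$ are \emph{finite} $\Lf$-words: iterating the decomposition and pushing the cut $c_1$ arbitrarily far to the right fixes ever larger initial segments of the (finite) supports, so a fixed right factor is itself exhausted in a well-founded way by $\Bf$-free pieces. Concretely I would convert this into a rank bound: were a right factor to contain a copy of $\Bf$, the nested right-fibres would organise into an $\Lf$-oracle-automatic well-founded forest whose height outruns the bound $\omega^{\omega\cdot(\alpha+1)}$ of Theorem~\ref{thm:BoundsOnTreeRanks}, or into an $\Lf$-automatic scattered chain violating the $\FCstar$-bound of Theorem~\ref{thm:BoundsonScatteredOrders}. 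With all factors thereby shown $\Bf$-free, the reflection lemma forces $\Bf$ itself to be $\Bf$-free—an absurdity—closing the induction.
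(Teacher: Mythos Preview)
Your approach is different from the paper's, and it has two gaps I do not see how to close.

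The first is the reflection lemma itself. The sum-of-box technique succeeds for ordinal height and $\FCstar$-rank precisely because those invariants satisfy explicit inequalities under tamely colourable boxes (Lemmas~\ref{lem:FCStarAndBox} and~\ref{lem:FCStarAndSum}, Theorem~\ref{thm:treerank-indecomposable}). You are proposing to run the same machine with the invariant ``has an interval isomorphic to $\Bf$'', but no compatibility of that invariant with boxes is known and you sketch none; the self-similarity of $\Bf$ tells you every nontrivial interval of $\Bf$ is again $\Bf$, but it gives no mechanism for forcing a copy of $\Bf$ in a box down into a single fibre. Note too that the factors $\Kf^M(i,w,p,o)$ are graphs with an $M$-dependent edge relation, not sub-posets of $\Bf$, so even the phrase ``the factor has an interval isomorphic to $\Bf$'' needs interpretation before the induction hypothesis can be applied to it.

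The second gap is the right factor. For $\Lf=\omega^{\alpha}$ with $\alpha\geq 1$ every proper tail of $\Lf$ is again $\omega^{\alpha}$, so $\Cc_{\RIGHT}$ consists of full-rank $\Lf_{\RIGHT}$-oracle-automatic structures and the induction genuinely does not descend there, as you acknowledge. Your proposed fix --- organising the nested right fibres into an $\Lf$-automatic well-founded forest or scattered chain and invoking Theorem~\ref{thm:BoundsonScatteredOrders} or~\ref{thm:BoundsOnTreeRanks} --- does not apply: those theorems bound heights of well-founded forests and $\FC$-ranks of scattered linear orders, and neither $\Bf$ nor the system of its right fibres is of either type, so there is no rank to outrun. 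Pushing $c_1$ to the right changes which parameters are $(c_0,c_1)$-parameters but does not shrink $\Cc_{\RIGHT}$ in any way the induction can exploit.

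For contrast, the paper avoids the box machinery here entirely. Pumping and shrinking lemmas for ordinal automata show that any finite-word $\delta$-automatic structure with $\delta\geq\omega^{\omega}$ has, for each fixed $n$, a $\Sigma_n$-elementary substructure that is $\omega^k$-automatic for some $k<\omega$; since the countable atomless Boolean algebra is axiomatised by a $\Pi_2$ sentence, that substructure is again $\Bf$, reducing the problem to $\Lf=\omega^k$. The finite-rank case is then ruled out by a growth-rate argument extending Khoussainov--Nies--Rubin--Stephan: one builds a height-$n$ binary tree of pairwise disjoint nonzero elements, whose $2^n$ leaves yield $2^{2^n}$ distinct joins, while a Growth Lemma for locally finite $\omega^k$-automatic relations confines the supports of all these joins to a set of size polynomial in $n$, hence at most $|\Sigma_\diamond|^{\mathrm{poly}(n)}$ words --- a contradiction for large $n$.
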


This theorem is proved by first showing that, if the atomless Boolean
algebra is finite word $\Lf$-automatic for some ordinal $\Lf$, then it
already is $\omega^n$-automatic for some $n\in\N$. 
This follows because any finite word $\Lf$-automatic structure for $\Lf$ an ordinal above $\omega^{\omega}$ has a
sufficiently elementary substructure that has a $\omega^n$-automatic presentation for some $n\in\N$. 
In the case of the countable atomless Boolean algebra any $\Sigma_3$-elementary
substructure is isomorphic to the whole algebra. 
Extending Khoussainov et al.'s
monoid growth rate argument for automatic structures 
(cf.~\cite{DBLP:journals/lmcs/KhoussainovNRS07}) to the $\omega^n$-setting,
we can reject this assumption (cf. Appendix \ref{sec: Boolean algebra}). 
This answers a question of Frank Stephan.

\bibliography{ordinal-automata}
\bibliographystyle{plain}


\newpage
\appendix

\section{Basics on Scattered Linear Orders}

Recall the following basic (folklore) results. 

\begin{lemma} \label{lem:ClosedIntSmallerFCStar}
  Let $\Lf=(L, \leq)$ be a scattered linear order with
  $\FC(\Lf)=\alpha$.  For all $l,l'\in
  L$, there are some $n\in\N$ and scattered linear orders $\Lf_1,
  \Lf_2, \dots, \Lf_n$ of condensation rank strictly below $\alpha$ 
  such that $[l,l'] \cong  \Lf_1+ \Lf_2 + \dots + \Lf_n$. 
\end{lemma}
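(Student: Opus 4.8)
The plan is to reduce the closed interval $[l,l']$ to finitely many pieces, each sitting inside a single $\sim_\gamma$-class for some $\gamma<\alpha$, and then to argue that every such piece has $\FC$-rank at most $\gamma$. The argument splits according to whether $\alpha$ is a successor or a limit; the case $\alpha=0$ is degenerate, since then $\Lf$ is a single point and $l=l'$ leaves nothing to decompose, so I assume $\alpha\geq 1$.

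The key technical tool I would isolate first is a localisation lemma: for any interval (convex subset) $I\subseteq L$ and any ordinal $\gamma$, the restriction of $\sim_\gamma$ computed in $\Lf$ to $I$ coincides with the $\gamma$-th condensation of the suborder $I$ computed on its own. I would prove this by transfinite induction on $\gamma$. The base case ($\gamma=0$, equality) and the limit case (union of the lower condensations) are immediate. For the successor step $\gamma+1$, note that for $x,y\in I$ convexity gives $[x,y]\subseteq I$, so every $\sim_\gamma$-class meeting $[x,y]$ lies, in its relevant portion, inside $I$; by the induction hypothesis these classes correspond bijectively to the $\sim_\gamma$-classes of $I$ between $x$ and $y$, whence $[x,y]$ is finite in $\Lf/{\sim_\gamma}$ iff it is finite in $I/{\sim_\gamma}$. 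This is exactly where I expect the main obstacle to lie: one must check that the $\sim_\gamma$-classes at the two endpoints, which may protrude out of $I$, do not spoil the bijection, since only their intersection with $[x,y]\subseteq I$ is counted. A direct corollary is that any subinterval of a $\sim_\gamma$-class $e$ has $\FC$-rank at most $\gamma$: within $e$ all elements are $\sim_\gamma$-equivalent, so $e/{\sim_\gamma}$ is a single point, and the localisation lemma transports this to every subinterval.

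For the successor case $\alpha=\beta+1$, I would use that $\Lf/{\sim_{\beta+1}}$ being a one-point order means precisely that every closed interval of $\Lf/{\sim_\beta}$ is finite. In particular the classes $e_1<e_2<\dots<e_m$ lying between the images $\bar l$ and $\bar{l'}$ form a finite set, with $e_1=\bar l$ and $e_m=\bar{l'}$. I then decompose $[l,l']$ as the ordered sum of its traces on these classes: a final segment of $e_1$ (those $z\in e_1$ with $z\geq l$), the full classes $e_2,\dots,e_{m-1}$, and an initial segment of $e_m$ (those $z\in e_m$ with $z\leq l'$), with the obvious adjustment when $m=1$. Each summand is a subinterval of a $\sim_\beta$-class, hence of $\FC$-rank at most $\beta<\alpha$ by the corollary above, yielding the desired finite sum with $n=m$.

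For the limit case, since $\Lf/{\sim_\alpha}$ is a single point and $x\sim_\alpha y$ holds iff $x\sim_\gamma y$ for some $\gamma<\alpha$, there is a $\gamma<\alpha$ with $l\sim_\gamma l'$. Then the whole interval $[l,l']$ is contained in the single $\sim_\gamma$-class of $l$, so by the corollary it already has $\FC$-rank at most $\gamma<\alpha$, and the decomposition is trivial with $n=1$.
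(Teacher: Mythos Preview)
Your proof is correct. The paper's argument is shorter: it simply writes $\Lf = \sum_{i\in\Z} \Lf_i$ with each $\FC(\Lf_i) < \alpha$ (taken as a known Hausdorff-type fact), locates $l\in\Lf_j$ and $l'\in\Lf_{j'}$, and reads off $[l,l']$ as a final segment of $\Lf_j$, the intermediate $\Lf_i$, and an initial segment of $\Lf_{j'}$. In the successor case your argument is essentially the same---your classes $e_i$ are exactly the $\sim_\beta$-classes, i.e., the paper's $\Lf_i$---but for limit $\alpha$ you avoid invoking a global $\Z$-decomposition and instead use directly that $l \sim_\gamma l'$ for some $\gamma<\alpha$, which already gives $n=1$. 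The localisation lemma you isolate is precisely what the paper uses tacitly when it asserts that the end-pieces, being convex suborders of $\Lf_j$ and $\Lf_{j'}$, still have rank below $\alpha$; so your version is more self-contained, at the cost of length.
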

\begin{proof}
  $\Lf$ can be written as $\sum_{i\in\Z} \Lf_i$ for $\Lf_i$ scattered
  linear orders with $\FC(\Lf)<\alpha$. If $l$ comes from the $j$-th
  factor of this sum and $l'$ form the $j'$-th, then $[l,l']$ is
  isomorphic to $\Lf'_j + \sum_{i=j+1}^{j'-1} \Lf_i + \Lf'_{j'}$ where
  $\Lf'_j$ and $\Lf'_{j'}$ are  suborders of $\Lf_j$ and $\Lf_{j'}$
  whence they have rank below $\alpha$. 
\qed \end{proof}

\begin{lemma} \label{lem:FCstarSumsIncreaseRank}
  Let $\gamma\in\{\omega,\omega^*, \zeta\}$, 
  $\Lf_i$ be a scattered order of $\FCstar$ rank $\alpha$.
  The order $\Lf := \sum_{i\in\gamma} \Lf_i$ is of rank 
  $\FCstar(\Lf)=\alpha+1$. 
\end{lemma}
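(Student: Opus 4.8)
The plan is to analyse the single quotient $\Lf/{\sim_\alpha}$ and show that it is infinite but has all of its closed intervals finite; the lower bound $\FCstar(\Lf)>\alpha$ and the upper bound $\FCstar(\Lf)\le\alpha+1$ then both drop out immediately. Throughout I assume each factor $\Lf_i$ is nonempty, which is automatic when $\alpha\ge 1$ (a linear order of $\FCstar$-rank $\alpha\ge1$ is infinite, since a finite order has rank $0$) and which is the intended reading when $\alpha=0$.

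The central tool is a \emph{locality lemma}: for $x,y$ lying in the same factor $\Lf_i$ and for every ordinal $\beta$, one has $x\sim_\beta^\Lf y$ iff $x\sim_\beta^{\Lf_i} y$, where $\sim_\beta^{\Lf_i}$ is the condensation computed inside $\Lf_i$. I would prove this by transfinite induction on $\beta$, the successor step being the only interesting case. Since $\Lf_i$ is convex in $\Lf$, the closed interval $[x,y]$ is the same set read in $\Lf$ or in $\Lf_i$; by the induction hypothesis $\sim_\beta^\Lf$ and $\sim_\beta^{\Lf_i}$ induce the same partition of $[x,y]$, and any $\sim_\beta^\Lf$-class strictly between $[x]$ and $[y]$ is in fact contained in $(x,y)\subseteq\Lf_i$. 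Hence the quotient-interval between $[x]$ and $[y]$ has the same (finite or infinite) cardinality in $\Lf/{\sim_\beta}$ and in $\Lf_i/{\sim_\beta}$, which gives the successor step; the limit step is immediate. A consequence is that each $\Lf_i$ meets exactly $\lvert\Lf_i/{\sim_\alpha}\rvert$ many $\sim_\alpha^\Lf$-classes, hence finitely many.

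The step I expect to be the main obstacle is the \emph{spreading bound}: every $\sim_\alpha^\Lf$-class meets at most two consecutive factors. Suppose $x\in\Lf_i$, $y\in\Lf_j$ with $i<j$ and $x\sim_\alpha^\Lf y$, and assume for contradiction $j\ge i+2$, so that the whole factor $\Lf_{i+1}$ lies in $(x,y)$. Unravelling the definition of $\sim_\alpha$ along the (well-founded) ordinal recursion, $x\sim_\alpha^\Lf y$ forces $[x,y]$ to be finite in $\Lf/{\sim_\delta}$ for some $\delta<\alpha$. But $\FCstar(\Lf_{i+1})=\alpha>\delta$ means $\Lf_{i+1}/{\sim_\delta}$ is infinite, so by the locality lemma infinitely many distinct $\sim_\delta^\Lf$-classes meet $\Lf_{i+1}\subseteq(x,y)$, and each such class lies in the quotient-interval $[x,y]$ of $\Lf/{\sim_\delta}$; this contradicts finiteness. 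Hence $j\le i+1$.

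Finally I combine the two facts. For the lower bound, $\gamma$ is infinite, so there are infinitely many nonempty factors, each meeting at least one $\sim_\alpha^\Lf$-class, while each class meets at most two factors; thus $\Lf/{\sim_\alpha}$ is infinite, and since the condensations refine as the index grows, $\FCstar(\Lf)>\alpha$. For the upper bound I check that $\Lf/{\sim_\alpha}$ has all closed intervals finite: given classes $K\le K'$ meeting factors $\Lf_i$ and $\Lf_{i'}$, convexity together with the spreading bound confines every class between $K$ and $K'$ to the finitely many factors from $\Lf_i$ to $\Lf_{i'}$, each meeting only finitely many classes, so $[K,K']$ is finite. By the definition of $\sim_{\alpha+1}$, two elements are $\sim_{\alpha+1}$-related iff their $\sim_\alpha$-classes span a finite closed interval of $\Lf/{\sim_\alpha}$; as all such intervals are finite, every pair is related and $\Lf/{\sim_{\alpha+1}}$ is a single point. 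Hence $\FCstar(\Lf)\le\alpha+1$, and together with the lower bound this yields $\FCstar(\Lf)=\alpha+1$. (The sharper fact that $\Lf/{\sim_\alpha}$ is a $\gamma$-sum of finite orders, hence of type $\omega$, $\omega^*$, or $\zeta$, is not needed but falls out of the same analysis.)
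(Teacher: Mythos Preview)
Your proof is correct and follows essentially the same approach as the paper. The paper's argument implicitly uses both your locality lemma (when passing from ``$\Lf_i/{\sim_\beta}$ is infinite'' to ``infinitely many $\sim_\beta^\Lf$-classes meet $\Lf_i$'') and your spreading bound (phrased as ``the $\alpha$-condensation does not identify nodes of $\Lf_i$ and $\Lf_{i+2}$''), and then concludes the upper bound by asserting that $\Lf/{\sim_\alpha}$ is a $\gamma$-sum of finite orders---which is the sharper fact you note falls out of your analysis---rather than via your closed-interval argument; but these are equivalent routes to the same inequality, and your version is simply more explicit about the intermediate steps.
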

\begin{proof}
  Since $\FCstar(\Lf_i)=\alpha$, for all 
  $\beta<\alpha$ the $\beta$-th condensation of $\Lf_i$ contains
  infinitely many nodes. Thus, also the $\beta$-th condensation of
  $\Lf$ contains infinitely many equivalence classes containing
  elements in $\Lf_i$. 
  Thus, for each $i\in \gamma$ such that $i+2\in\gamma$ and
  for every $x_i\in\Lf_i$, $x_{i+2}\in\Lf_{i+2}$ 
  the $\beta$ condensation of $x_i$ and 
  the $\beta$-condensation of $x_{i+2}$ 
  are separated by infinitely many nodes (the $\beta$ condensations of
  $\Lf_{i+1}$). 
  Thus, the $\alpha$ condensation of $\Lf$ does not identify 
  nodes of $\Lf_i$ and $\Lf_{i+2}$. 
  Thus, it contains a suborder isomorphic to $\gamma$, whence 
  $\FCstar(\Lf)\geq \alpha+1$.
  On the other hand, since each $\Lf_i$ has rank $\alpha$ the
  $\alpha$-condensation of $\Lf$ is a $\gamma$-sum over finite linear
  orders. 
  Hence its $\alpha+1$-condensation is finite and
  $\FCstar(\Lf)\leq\alpha+1$. 
\qed \end{proof}

\begin{lemma}\label{lem:AllFCstarRankAsSubOrders}
  [Lemma 4.16 of \cite{Huschenbett12}]
  Let $\Lf$ be a linear order  and $\alpha<\FC(\Lf)$. 
  There is a closed interval $I$ of $\Lf$ such that
  $I$ is a scattered linear suborder of $\Lf$,
  $\FC(I) = \alpha+1$, and
  $\FCstar(I) = \alpha$. 
\end{lemma}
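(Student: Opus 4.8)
The plan is to read the interval off directly from the $\alpha$-th condensation of $\Lf$, with no induction on $\alpha$. Since $\alpha<\FC(\Lf)$, the quotient $N:=\Lf/{\sim_\alpha}$ is not a one-element order, so it has at least two elements, and being a condensation of a scattered order it is itself scattered. First I would record the folklore fact that a scattered linear order with at least two elements cannot be densely ordered, and hence has two consecutive elements: otherwise a standard back-and-forth construction would embed $\Q$, contradicting scatteredness. Applying this to $N$ yields two consecutive $\sim_\alpha$-classes $D<D'$, i.e.\ classes with no $\sim_\alpha$-class strictly between them. Choosing any $d\in D$ and $d'\in D'$, the candidate interval is $I:=[d,d']$, a closed interval of $\Lf$ and hence a scattered suborder.

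The key step is to extract both bounds on the rank of $I$ from the definitions of the condensations, using the (routine) fact that condensations restrict to sub-intervals, $\sim_\gamma^{I}={\sim_\gamma}\restriction I$ for all $\gamma$, which lets me pass between the intrinsic ranks $\FCstar(I),\FC(I)$ and the condensations of $\Lf$. For the upper bound, consecutiveness of $D,D'$ means $I$ meets exactly the two classes $D$ and $D'$, so $I/{\sim_\alpha}$ has exactly two elements; being finite but not a singleton this gives $\FCstar(I)\le\alpha$ and, since the two classes are then $\sim_{\alpha+1}$-equivalent, $\FC(I)=\alpha+1$. For the lower bound I must show $I/{\sim_\delta}$ is infinite for every $\delta<\alpha$. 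Here I would use that $d\not\sim_\alpha d'$ (as $D\ne D'$) forces $d\not\sim_{\delta+1}d'$ whenever $\delta+1\le\alpha$, because the relations $\sim_\gamma$ increase with $\gamma$; by the very definition of $\sim_{\delta+1}$ this says precisely that the interval $[d,d']$ in $\Lf/{\sim_\delta}$, namely $I/{\sim_\delta}$, is infinite. As every $\delta<\alpha$ satisfies $\delta+1\le\alpha$, this yields $\FCstar(I)\ge\alpha$, and therefore $\FCstar(I)=\alpha$ and $\FC(I)=\alpha+1$, as required.

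Two auxiliary facts carry the argument and are where I would take care. The first is the restriction property $\sim_\gamma^{I}={\sim_\gamma}\restriction I$ for intervals; it is folklore and follows by a quick induction on $\gamma$, since whether $x\sim_\gamma y$ depends only on the interval $[x,y]$, which is the same computed in $I$ or in $\Lf$. The second is the existence of consecutive elements in a scattered order with at least two points. The genuinely delicate half is the \emph{upper} bound on the rank: the lower bound $\FCstar(I)\ge\alpha$ holds for \emph{any} pair $d\not\sim_\alpha d'$, because the successor condensation is defined in terms of \emph{finiteness} of intervals, so its negation delivers \emph{infiniteness} of $I/{\sim_\delta}$ directly; it is avoiding an overshoot of the rank that forces the two classes to be chosen \emph{consecutive}, keeping $I/{\sim_\alpha}$ finite. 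Thus the whole statement reduces to placing the closed interval across a single adjacency of $\Lf/{\sim_\alpha}$, after which both ranks fall out of the definitions, uniformly and with no case split between successor and limit $\alpha$.
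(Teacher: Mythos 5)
The paper does not actually prove this lemma; it is imported as Lemma~4.16 of Huschenbett's work, so there is no in-paper argument to compare yours against. Judged on its own, your proof is correct and is the natural direct argument: pass to $\Lf/{\sim_\alpha}$, use that a scattered order with at least two points has a pair of consecutive elements, and place $I=[d,d']$ across that single adjacency. Your two supporting facts are exactly the right ones, and both hold: condensations restrict to intervals (for $x,y\in I$ the $\sim_\gamma$-classes strictly between $[x]$ and $[y]$ are entirely contained in $I$, so the interval $[x,y]$ in $I/{\sim_\gamma^I}$ and in $\Lf/{\sim_\gamma}$ have the same cardinality, and the induction goes through at successors and limits), and the monotonicity $\sim_\gamma\subseteq\sim_{\gamma'}$ for $\gamma\le\gamma'$ that turns $d\not\sim_\alpha d'$ into infiniteness of $I/{\sim_\delta}$ for every $\delta<\alpha$. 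The identification of the consecutiveness of $D,D'$ as the step that prevents overshooting the rank is also the right emphasis.

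One point to make explicit: the statement says ``linear order,'' but your argument uses scatteredness of $\Lf$ twice (to get that $\Lf/{\sim_\alpha}$ is scattered, hence non-dense, and to conclude that the closed interval $I$ is a scattered suborder). For a non-scattered $\Lf$ such as $\Q$ the conclusion genuinely fails --- no closed interval with two endpoints is scattered --- so the hypothesis $\alpha<\FC(\Lf)$ must be read as presupposing that $\FC(\Lf)$ exists, which by the paper's conventions happens exactly when $\Lf$ is scattered. That is how the lemma is used in the paper (always applied to scattered orders), so your implicit assumption is harmless, but you should state it rather than let it slip in via the phrase ``condensation of a scattered order.''
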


%

\section{Correctness of the Automaton in Example \ref{exa:FiniteWordsRec}}
\label{sec:FiniteWordsRecognisable}

  States 
  $e_l$ and $e_r$
  report errors from left and from right, respectively, i.e., a cut is
  forced to be visited in state $e_l$ if 
  it is a right limit step such that left of this limit infinitely many
  positive positions appear.
  
  On input $w$, the successor transitions mark the support of $w$ by
  state $p$ and all other successor positions in $w$ by $n$. 
  Let $P(w)\subseteq \Cuts(w)$ be defined by
  $(C,D)\in P(w)$ if $\exists x\in \supp(w)$ such that
  $x=\max(C)$. If $w$ has finite support, then 
  $$r(c):=
  \begin{cases}
    p &\text{if }c\in P(w)\\
    n & \text{otherwise}
  \end{cases}
  $$ defines an accepting run on $w$. 

  We now prove that there is no accepting run if $w$ is not a finite
  word. Heading for a contradiction assume that $r$ is an accepting
  run of $\Ac$ on $w$ and $w$ has infinite support. Then 
  $r((\emptyset, L)) = n = r((L, \emptyset))$. 
  We want to show that there is a cut $c$ such that $r(c)=e_l$ or
  $r(c)=e_r$. 

  If we are able to show this, we arrive at a
  contradiction: if $r(c)=e_l$ then $c$ is not the maximal cut. But
  there is no successor transition and no left limit transition from
  state $e_l$. Thus, $r$ cannot assign states to the cuts to the right
  of $c$, which is a contradiction. If state $e_r$ occurs, the
  argument is the same using the cuts to the left of $c$. 
  
  We show that there is a cut that is assigned an error state $e_l$ or
  $e_r$. 
  Assume that there is an infinite ascending chain
  $l_1<l_2<l_3<\dots$ in $\Lf$  such that 
  $\{l_i\mid i\in\N\}\subseteq \supp(w)$.
  For $C:=\{ x\in L\mid \exists i\in\N \quad x\leq l_i\}$ and
  $D:=L\setminus C$ the cut $c:=(C, D)$ has no direct
  predecessor. Moreover,  
  $p\in \lim_{c^-} r$ because state $p$ occurs at each cut associated
  to one of the $l_i$. Thus, if there is a right limit transition
  applicable at $c$, it assigns state $e_l$ to $c$. 
  If there is no infinite ascending chain in $\supp(w)$, then there is
  an infinite descending chain. The analogous argument shows that then
  state $e_r$ occurs.

\section{Proof of Theorem \ref{thm:BoundsonScatteredOrders}}
\label{sec:ProofOnScatteredOrders}

Huschenbett \cite{Huschenbett13} used the sum-of-box decomposition
technique in 
order to prove a strict bound on the finite condensation rank of
tree-automatic  scattered linear orders. His result relies on the fact
that the finite condensation rank behaves well with box-decompositions
in the following sense. Let $\alpha_0\oplus\dots\oplus\alpha_n$ denote
the natural sum (also known as commutative sum or Hessenberg sum) of
$\alpha_0,...,\alpha_n$.  

\begin{lemma}
  \label{lem:FCStarAndBox}
  [Proposition 4.11 in \cite{Huschenbett12}]
  For each  scattered linear order $\Af$ that is a tamely-colourable
  box-augmentation of 
  $\Bf_1, \dots, \Bf_n$, its rank is bounded by
  \begin{align*}
    \FCstar(\Af) \leq \FCstar(\Bf_1) \oplus \FCstar(\Bf_2) \oplus
    \dots \oplus \FCstar(\Bf_n).
      \end{align*}
\end{lemma}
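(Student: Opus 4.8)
The plan is to bound $\FCstar(\Af)$ by induction on $\FCstar(\Bf_1) \oplus \dots \oplus \FCstar(\Bf_n)$, using the box structure to push the condensations on $\Af$ down to simultaneous condensations on the factors $\Bf_1, \dots, \Bf_n$. The key observation driving this is that the tame colourability gives us colouring functions $\varphi_j \colon B_j^2 \to C_j$ with finite range and a set $M \subseteq \prod_j C_j$ such that $\eta(\bar b) \leq \eta(\bar b')$ iff $(\varphi_1(b_1,b_1'), \dots, \varphi_n(b_n, b_n')) \in M$. Hence the order type of $\Af$ between two points depends only on the tuple of colours, and this is exactly what lets us relate closeness in $\Af$ to closeness in each coordinate.

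First I would make the reduction to natural (Hessenberg) sum transparent. Let me write each $\gamma_i := \FCstar(\Bf_i)$, and recall that on each $\Bf_i$ the $\gamma_i$-condensation has only finitely many classes. The main technical claim I would isolate is the following: for every ordinal $\beta$, the $\beta$-th condensation $\sim_\beta$ on $\Af$ refines — up to finitely many classes being merged — the product of the condensations $\sim_{\beta_1}^{(1)}, \dots, \sim_{\beta_n}^{(n)}$ on the factors whenever $\beta_1 \oplus \dots \oplus \beta_n = \beta$. More precisely, I would argue by induction on $\beta$ that if two elements $\eta(\bar b)$ and $\eta(\bar b')$ are $\sim_\beta$-equivalent in $\Af$, then in each coordinate $j$ the elements $b_j, b_j'$ lie within a bounded-distance region of the $\beta_j$-condensation of $\Bf_j$ for a suitable decomposition $\sum_j \beta_j = \beta$. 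The base case $\beta = 0$ is immediate since $\sim_0$ is equality. For the successor and limit steps I would exploit that a closed interval $[\eta(\bar b), \eta(\bar b')]$ in $\Af$ is finite (the condition defining $\sim_{\beta+1}$ from $\sim_\beta$) forces, via the colouring constraint and finiteness of each $C_j$, that the corresponding intervals $[b_j, b_j']$ in the factors are finite once we pass to the appropriate condensation level in each coordinate.

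The engine of the whole argument is a standard fact about natural sum: if $\gamma_1 \oplus \dots \oplus \gamma_n = \delta$, then for any $\beta < \delta$ one can write $\beta = \beta_1 \oplus \dots \oplus \beta_n$ with $\beta_j \leq \gamma_j$ and at least one inequality strict, and the natural-sum addition is exactly designed so that this bookkeeping on the coordinates is faithful. Combining this with the induction, once we reach $\beta = \gamma_1 \oplus \dots \oplus \gamma_n$ each coordinate has been condensed to the level $\gamma_j$ where $\Bf_j / \sim_{\gamma_j}$ is finite, and because the colour set $\prod_j C_j$ is finite, the product of these finitely many classes, quotiented further by the finitely many possible colours, leaves $\Af / \sim_\beta$ finite. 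That is exactly the statement $\FCstar(\Af) \leq \gamma_1 \oplus \dots \oplus \gamma_n$.

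I expect the main obstacle to be the inductive step controlling how a finite interval in $\Af$ translates into finite intervals in the factors. The difficulty is that a single step of condensation in $\Af$ — collapsing a finite interval — can correspond to progress in several coordinates at once, and one must account correctly, using the natural sum, for which coordinate absorbs the increment of $\beta$ without overcounting. The tame colourability is essential here precisely because it guarantees the comparison between two points factors through finitely many colours, so that "finitely many elements between $\eta(\bar b)$ and $\eta(\bar b')$ in $\Af$" can only happen if, in each coordinate, the separating elements are drawn from finitely many condensation classes; otherwise the product structure would manufacture a dense or infinite interval. Making the colour-counting precise at limit stages of $\beta$ — where one passes to a union of earlier condensations in each coordinate and must verify the natural-sum decomposition is preserved in the limit — is the subtle point I would spend the most care on.
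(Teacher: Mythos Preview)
The paper does not prove this lemma at all: it is quoted verbatim as Proposition~4.11 of Huschenbett's thesis and used as a black box in the proof of Theorem~\ref{thm:BoundsonScatteredOrders}. So there is no ``paper's own proof'' to compare against; any assessment has to be of your sketch on its own merits.

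Your overall strategy --- use the tame colouring to reduce condensation on $\Af$ to coordinatewise condensation on the $\Bf_j$, and keep track of levels via the natural sum --- is the standard one and is what Huschenbett does. However, your central inductive claim is stated in the wrong direction. You write that $\eta(\bar b)\sim_\beta \eta(\bar b')$ in $\Af$ should force $b_j$ and $b_j'$ to be close in each $\Bf_j$. That implication would give a \emph{lower} bound on $\FCstar(\Af)$, not an upper bound. To show $\FCstar(\Af)\le \gamma_1\oplus\dots\oplus\gamma_n$ you need the opposite: that coordinatewise closeness implies closeness in $\Af$. Concretely, the claim to prove by induction is
\[
\bigl(\forall j\colon b_j \sim_{\beta_j}^{(j)} b_j'\bigr)
\quad\Longrightarrow\quad
\eta(\bar b)\sim_{\beta_1\oplus\dots\oplus\beta_n}\eta(\bar b')\,.
\]
Once you have this, setting $\beta_j=\gamma_j$ and using that each $\Bf_j/{\sim_{\gamma_j}}$ is finite gives that $\Af/{\sim_{\gamma_1\oplus\dots\oplus\gamma_n}}$ has at most $\prod_j |\Bf_j/{\sim_{\gamma_j}}|$ classes, hence is finite. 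Your final paragraph in fact argues exactly this direction, so you seem to have the right picture; but the formal inductive statement you wrote down does not match it, and as written the induction would not close.

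The inductive step for the displayed implication is where the tame colouring and the characteristic property of the natural sum are actually used: to pass from $\beta$ to $\beta+1$ one must pick a coordinate $j$ and increase $\beta_j$ by one, and the finiteness of the colour sets $C_j$ is what guarantees that a finite interval in the $\beta_j$-condensation of $\Bf_j$ (with the other coordinates fixed up to their condensation classes) yields a finite interval in the $\beta$-condensation of $\Af$. This is the step you identify as delicate, and it is; your sketch of it is still too vague to count as a proof, but the outline is correct once the direction of the implication is fixed.
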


Moreover, Khoussainov et al. have already shown that $\FCstar$ rank
behaves well with sum-augmentations.
\begin{lemma} 
  \label{lem:FCStarAndSum}
  [Proposition 4.4 in \cite{KhoussainovRS05}]
  For each  scattered linear order $\Af$ that is a sum-augmentation of
  $\Bf_1, \dots, \Bf_n$, its rank is determined by
  \begin{align*}
    \FCstar(\Af) = \max\{\FCstar(\Bf_1), \FCstar(\Bf_2),
    \dots, \FCstar(\Bf_n)\}.
  \end{align*}
\end{lemma}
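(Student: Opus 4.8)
The plan is to prove the two inequalities $\FCstar(\Af)\ge\max_i\FCstar(\Bf_i)$ and $\FCstar(\Af)\le\max_i\FCstar(\Bf_i)$ separately. Since $\Af$ is a sum-augmentation of $\Bf_1,\dots,\Bf_n$, each $\Bf_i$ (with domain $B_i$) is isomorphic to the suborder of $\Af$ induced on the $i$-th block of the partition. Hence the first inequality reduces to the monotonicity statement that $\FCstar(\Bf)\le\FCstar(\Af)$ whenever $\Bf$ is an induced suborder of $\Af$.

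For this monotonicity I would first show, by transfinite induction on $\alpha$, that for all $x,y$ in the domain of $\Bf$ one has $x\sim_\alpha^{\Af}y \Rightarrow x\sim_\alpha^{\Bf}y$; that is, the restriction of the $\alpha$-th condensation of $\Af$ to $\Bf$ refines the $\alpha$-th condensation of $\Bf$. The successor step uses that every closed interval of $\Bf$ sits inside the corresponding interval of $\Af$, so the finitely many $\sim_\beta^{\Af}$-classes in $[x,y]$ cover the $\Bf$-points there and, by the induction hypothesis, collapse to finitely many $\sim_\beta^{\Bf}$-classes; the limit step is immediate. This refinement yields a surjection from the set of $\sim_\alpha^{\Af}$-classes meeting $\Bf$ onto $\Bf/{\sim_\alpha^{\Bf}}$, so finiteness of $\Af/{\sim_\alpha^{\Af}}$ forces finiteness of $\Bf/{\sim_\alpha^{\Bf}}$, giving $\FCstar(\Bf)\le\FCstar(\Af)$ and hence the first inequality.

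The second inequality is the substantive direction. Set $\alpha:=\max_i\FCstar(\Bf_i)$, so each $\Bf_i/{\sim_\alpha^{\Bf_i}}$ is finite, and assign to every $\sim_\alpha^{\Af}$-class $K$ its type $\Phi(K):=(\tau_1(K),\dots,\tau_n(K))$, where $\tau_i(K)$ is the unique $\sim_\alpha^{\Bf_i}$-class meeting $K\cap B_i$, or a fresh symbol $*$ when $K\cap B_i=\emptyset$. This is well defined because condensation classes are convex and, by the refinement just established, any two points of $K\cap B_i$ are $\sim_\alpha^{\Bf_i}$-equivalent; moreover $\Phi$ is monotone coordinatewise. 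Since the codomain $\prod_i(\Bf_i/{\sim_\alpha^{\Bf_i}}\sqcup\{*\})$ is finite, it suffices to prove that $\Phi$ is injective, for then $\Af/{\sim_\alpha^{\Af}}$ is finite and $\FCstar(\Af)\le\alpha$.

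Injectivity is where the real work lies, and I would establish it by transfinite induction on $\alpha$, the guiding principle being that a gap separating two $\sim_\alpha^{\Af}$-classes must be witnessed inside a single component. Suppose $K<K'$ are distinct $\sim_\alpha^{\Af}$-classes with $\Phi(K)=\Phi(K')$ and pick $x\in K$, $y\in K'$. At a successor $\alpha=\beta+1$, distinctness means that infinitely many $\sim_\beta^{\Af}$-classes lie between $x$ and $y$; at a limit $\alpha$, it means $x$ and $y$ remain separated at every level $\beta<\alpha$. In either case the inductive injectivity of the level-$\beta$ type map shows the separating classes carry distinct level-$\beta$ types, so---there being only finitely many components---one $\Bf_i$ carries the separation cofinally, and this must promote to a genuine $\sim_\alpha^{\Bf_i}$-separation (or a $*$ versus non-$*$ distinction) between $\tau_i(K)$ and $\tau_i(K')$, contradicting $\Phi(K)=\Phi(K')$. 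The hard part is exactly this promotion: ensuring that a gap appearing at level $\alpha$ in $\Af$ is reflected by a gap at level $\alpha$ in some fixed $\Bf_i$, rather than merely at lower levels, and doing so uniformly through the successor and limit stages. This bookkeeping is the technical heart of the cited Proposition~4.4 of \cite{KhoussainovRS05}.
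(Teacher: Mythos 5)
Note first that the paper does not prove this lemma at all: it imports it as Proposition~4.4 of \cite{KhoussainovRS05}, so there is no in-paper argument to compare against. Your proof of the easy inequality $\FCstar(\Af)\geq\max_i\FCstar(\Bf_i)$, via the refinement property $x\sim_\alpha^{\Af}y\Rightarrow x\sim_\alpha^{\Bf_i}y$ for $x,y\in B_i$, is correct.

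The other inequality, however, has a genuine gap: the type map $\Phi$ is not injective, so the reduction ``it suffices to prove that $\Phi$ is injective'' cannot be completed. Concretely, take $\Af=1+\Z+1$ with endpoints $x<y$, partitioned into $B_1=\{x,y\}$ and $B_2=\Z$. Then $\FCstar(\Bf_1)=0$ and $\FCstar(\Bf_2)=1$, so $\alpha=1$, and the $\sim_1^{\Af}$-classes are $\{x\}$, $\Z$, $\{y\}$. Since $[x,y]\cap B_1=\{x,y\}$ is finite, $x\sim_1^{\Bf_1}y$, hence $\tau_1(\{x\})=\tau_1(\{y\})$, while $\tau_2(\{x\})=\tau_2(\{y\})=*$; thus $\Phi(\{x\})=\Phi(\{y\})$ although $\{x\}\neq\{y\}$. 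This example also shows exactly why the sketched ``promotion'' step fails: the separation between $K=\{x\}$ and $K'=\{y\}$ is carried entirely by a component ($B_2$) that meets neither class, so no coordinate of $\Phi$ --- neither a class distinction nor a $*$-versus-non-$*$ distinction --- records it. For the same reason the naive pigeonhole (infinitely many pairwise $\sim_\alpha^{\Af}$-inequivalent representatives must land in one $B_i$ and hence be pairwise $\sim_\alpha^{\Bf_i}$-inequivalent) is unavailable, because $x\not\sim_\alpha^{\Af}y$ does not imply $x\not\sim_\alpha^{\Bf_i}y$. A correct argument has to be quantitative: one must bound $\lvert \Af/{\sim_\alpha^{\Af}}\rvert$ by a finite function of the numbers $\lvert \Bf_i/{\sim_\alpha^{\Bf_i}}\rvert$ (equivalently, show that every fiber of $\Phi$ is finite with a uniform bound), and that bookkeeping is precisely the content of the cited Proposition~4.4, not something your invariant lets you bypass.
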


\begin{proposition}
  Let $\alpha$ be a scattered order of $\FC$ rank
  $1+\gamma$ ($0$, respectively) for some ordinal $\gamma$. 
  Every $\alpha$-oracle-automatic scattered linear order has $\FCstar$
  rank strictly below $\omega^{\gamma+1}$ ($\omega^0=1$, respectively).
\end{proposition}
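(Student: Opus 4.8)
The plan is to prove this by transfinite induction on the $\FC$-rank of the underlying order $\alpha$, using the sum-of-box decomposition from Theorem~\ref{thm:Decomposition} together with the compatibility lemmas (Lemma~\ref{lem:FCStarAndBox} and Lemma~\ref{lem:FCStarAndSum}) that control how $\FCstar$ behaves under box- and sum-augmentations. Let me set up the notation first. Suppose $\Af=(A,\preceq)$ is an $\alpha$-oracle-automatic scattered linear order, presented by a parameter automaton $\Ac$ (recognising the domain as a set of parameters) and an edge automaton $\Ac_E$ recognising $\preceq$, with state sets $Q$ and $Q_E$. The base case $\FC(\alpha)=0$ is trivial: then $\alpha$ is a one-element order, so there are only finitely many $\alpha$-words, hence $\Af$ is finite and $\FCstar(\Af)=0<1$.

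\medskip\noindent\textbf{The inductive step.} Assume the claim holds for all scattered orders of $\FC$-rank below $1+\gamma$. Write $\alpha=\sum_{z\in\Z}\alpha_z$ with each $\FC(\alpha_z)<1+\gamma$, as provided by the definition of $\FCstar$. First I would handle the genuinely infinite part. The idea is to invoke Theorem~\ref{thm:Decomposition} for suitable cut parameters $z_0<z_1$: it writes any definable substructure $\Af_p$ as a tamely-colourable sum-of-box-augmentation of structures drawn from $\Cc_\LEFT$, the finitely many classes $\Fc_{\alpha_{z_0}},\dots,\Fc_{\alpha_{z_1}}$, and $\Cc_\RIGHT$. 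The building blocks in $\Cc_\LEFT$ and $\Cc_\RIGHT$ are themselves $\alpha_\LEFT$- and $\alpha_\RIGHT$-oracle-automatic, where $\alpha_\LEFT=\sum_{z<z_0}\alpha_z$ and $\alpha_\RIGHT=\sum_{z>z_1}\alpha_z$ both have $\FC$-rank at most $1+\gamma$; the middle factors have $\FC$-rank strictly below $1+\gamma$. Applying the induction hypothesis to the middle factors gives a uniform bound $\delta<\omega^{\gamma'+1}$ (for the appropriate $\gamma'<\gamma$) on their $\FCstar$-rank. Since there are only finitely many factors and the augmentation is tamely colourable, Lemma~\ref{lem:FCStarAndBox} bounds the $\FCstar$-rank of each box by the natural sum of the constituent ranks, and Lemma~\ref{lem:FCStarAndSum} tells us the sum-augmentation contributes only the maximum. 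The $\Cc_\LEFT$ and $\Cc_\RIGHT$ pieces must be absorbed by choosing $z_0,z_1$ so that the same $\alpha$ is being analysed, which requires an outer induction as described below.

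\medskip\noindent\textbf{The structure of the induction.} The cleanest route mirrors Schlicht and Stephan's argument for ordinals. I would prove by induction on $\gamma$ the stronger uniform statement: there is a single bound $\beta<\omega^{\gamma+1}$ such that \emph{every} $\alpha$-oracle-automatic scattered linear order recognisable with state bounds $\lvert Q\rvert,\lvert Q_E\rvert$ has $\FCstar$-rank below $\beta$. Uniformity is what makes the box-decomposition usable, since the classes $\Cc_\LEFT,\Cc_\RIGHT$ contain only finitely many structures whose state complexity is controlled by $\lvert Q\rvert,\lvert Q_E\rvert$. The key arithmetic observation is that $\omega^{\gamma+1}$ is closed under natural sum and under taking successors, so finitely many factors each below $\omega^{\gamma+1}$ (for the left/right pieces, handled by a secondary induction on how much of $\alpha$ has been stripped off) combined with middle factors below some $\omega^{\gamma'+1}<\omega^{\gamma+1}$ still yield a box whose rank is below $\omega^{\gamma+1}$. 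Taking the supremum over the finitely many boxes via Lemma~\ref{lem:FCStarAndSum} keeps us below $\omega^{\gamma+1}$.

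\medskip\noindent\textbf{The main obstacle.} The delicate point is the treatment of the two ``boundary'' classes $\Cc_\LEFT$ and $\Cc_\RIGHT$, whose underlying orders $\alpha_\LEFT,\alpha_\RIGHT$ may again have $\FC$-rank $1+\gamma$ rather than strictly less. Thus a single application of the decomposition does not immediately reduce the rank parameter, and one cannot directly invoke the induction hypothesis on them. The resolution is to iterate the decomposition along the $\Z$-indexed sum and use a compactness/well-foundedness argument: since $\FCstar(\alpha)=1+\gamma$, the $\gamma$-th condensation of $\alpha$ is a $\Z$-like (or finite) order, and along it the contributions of the boundary pieces can be absorbed into a transfinite accumulation that stays below $\omega^{\gamma+1}$ because $\omega^{\gamma+1}=\omega^\gamma\cdot\omega$ and each step adds only an ordinal below $\omega^\gamma\cdot c$ for growing finite $c$. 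Verifying that this accumulation genuinely remains strictly below $\omega^{\gamma+1}$, and that the uniform state bound is preserved through the iteration, is the part that will require the most care; everything else reduces to the two compatibility lemmas and ordinal arithmetic.
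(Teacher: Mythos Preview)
Your proposal correctly identifies the obstacle but does not resolve it. The iteration-and-accumulation idea you sketch in the final paragraph will not close the gap: if the left and right pieces $\alpha_\LEFT,\alpha_\RIGHT$ can again have $\FC$-rank $1+\gamma$, then iterating the decomposition just reproduces the same situation indefinitely, and there is no well-founded descent on which to hang an induction. Saying ``each step adds only an ordinal below $\omega^\gamma\cdot c$ for growing finite $c$'' is precisely the problem, not its solution---the supremum over all $c$ is $\omega^{\gamma+1}$, so you have not shown the rank stays strictly below it.

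The paper's proof avoids this entirely by arguing by contradiction and using a \emph{pigeonhole/gap argument} that you are missing. Suppose $\FCstar(\Af)\geq\omega^{\gamma+1}$. Then for every $n$ there is a parameter $p_n$ (defining a closed interval) with $\FCstar(\Af_{p_n})=\omega^\gamma\cdot n$. The crucial point is that the sets $\Cc_\LEFT,\Cc_\RIGHT$ from Theorem~\ref{thm:Decomposition} have cardinality bounded by $\exp(\lvert Q\rvert^2+2\lvert Q_E\rvert^2)$ \emph{independently of the parameter} $p_n$. Hence among the values $\omega^\gamma\cdot 1,\omega^\gamma\cdot 2,\ldots$ there must be a gap: some $m$ with no structure in $\Cc_\LEFT\cup\Cc_\RIGHT$ having $\FCstar$-rank in the interval $(\omega^\gamma\cdot m,\,\omega^\gamma\cdot 4m]$. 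Now decompose $\Af_{p_{4m}}$. Any $\Cf\in\Cc_\LEFT\cup\Cc_\RIGHT$ appearing in a box is a substructure of $\Af_{p_{4m}}$, so $\FCstar(\Cf)\leq\omega^\gamma\cdot 4m$; by the gap, in fact $\FCstar(\Cf)\leq\omega^\gamma\cdot m$. The middle factors have $\FCstar$-rank below $\omega^\gamma$ by induction. Lemma~\ref{lem:FCStarAndBox} then gives $\omega^\gamma\cdot 4m\leq \omega^\gamma\cdot m\oplus\omega^\gamma\cdot m\oplus(\text{something}<\omega^\gamma)<\omega^\gamma\cdot 4m$, the desired contradiction. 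This pigeonhole step is the missing idea; without it your argument has no way to control the boundary pieces.
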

\begin{proof}
  In the case $\FC(\alpha)=0$ the domain of an $\alpha$-automatic structure
  has at most $\lvert \Sigma \rvert$ many elements. The theorem
  follows because every finite linear order has $\FCstar$ rank $0$. 

  Now let $\FC(\alpha)=1+\gamma$. As induction hypothesis assume that
  the theorem holds for all orders $\beta$ with $\FC(\beta)<1+\gamma$. 
  Heading for a contradiction assume that $\Lf=(L, \leq)$ is an
  $\alpha$-oracle-automatic scattered linear order such that
  $\FCstar(\Lf)\geq \omega^{\gamma+1}$. Let $\leq$ be recognised by
  some automaton with state set $Q_\leq$. 
  Due to  Lemma \ref{lem:AllFCstarRankAsSubOrders} the automaton $\Ac$
  corresponding to the 
  formula $\varphi(x,y_1,y_2) := y_1\leq x \leq y_2$ is a parameter
  automaton such that for each $n\in\N$ there is a parameter $p_n$
  such that $\Lf_{p_n}$ is a scattered linear order with
  $\FCstar(\Lf_{p_n}) = \omega^{\gamma}\cdot n$. Assume that $\Ac$ has
  state set $Q$.
  
  Now, fix some $n_0\in \N$ such that 
  $n_0 >  4^{2+2\cdot \exp(\lvert Q \rvert^2+2\lvert Q_\leq\rvert^2)}$.
  Due to Theorem \ref{thm:Decomposition}, there are sets
  $\Cc_0, \Cc_1$ of size $\exp(\lvert Q \rvert^2+2\lvert Q_\leq\rvert^2)$
  such that for each $n\leq n_0$, $\Lf_{p_n}$ is a tamely-colourable
  sum-of-box-augmentation of $\Cc_0, \Cc_1$ and 
  sets of $\beta_i$-oracle-automatic structures where
  $\FC(\beta_i)<\FC(\alpha)$ (cf. Lemma \ref{lem:ClosedIntSmallerFCStar}).
  By choice of $n_0$, there is some $1\leq m < \frac{n_0}{4}$ such that
  for all  structures  $\Af\in \Cc_0\cup \Cc_1$
  \begin{align}
    \label{eq:FCRankGaps}
    \begin{aligned}
      &\FCstar(\Af) \leq  \omega^{\gamma} \cdot m \text{ or}\\
      &\FCstar(\Af) > \omega^{\gamma} \cdot 4m.
    \end{aligned}
  \end{align}
  Now consider the decomposition of 
  $\Lf_{p_{4m}}$. Due to Lemma \ref{lem:FCStarAndSum} 
  there is a suborder $\Lf'$ of $\Lf_{p_{4m}}$  with
  $\FCstar(\Lf')=\omega^{\gamma} \cdot 4m$ that is tamely-colourable
  box-augmentation of structures
  $(\Cf_0, \Cf_1, \Bf_1, \dots, \Bf_k)$ where
  $\Cf_0\in \Cc_0, \Cf_1\in \Cc_1$, and $\Bf_i$ a
  $\beta_i$-oracle-automatic structure for each $1\leq i \leq k$. 
  Note that for each $1\leq i \leq k$, by induction hypothesis
  $\FCstar(\Bf_i) < \omega^{\gamma_i+1}$ for some $\gamma_i<\gamma$. 
  Thus,
  \begin{equation*}
    \FCstar(\Bf_1) \oplus
    \dots \oplus  \FCstar(\Bf_k) < \omega^{\max\{\gamma_i\mid 1\leq i
      \leq k\}+1}\leq \omega^{\gamma}.
  \end{equation*}
  Moreover, since $\Cf_0$ and $\Cf_1$ are substructures of $\Lf'$, 
  we have $\FCstar(\Cf_i)\leq \omega^\gamma \cdot 4m$ whence 
  \eqref{eq:FCRankGaps}
  implies that 
  $\FCstar(\Cf_i) \leq \omega^{\gamma}\cdot m$ for $i\in\{0,1\}$. 
  Due to the properties of $\oplus$ and Lemma \ref{lem:FCStarAndBox}
  we arrive at the contradiction 
  \begin{align*}
    \FCstar(\Lf')=\omega^{\gamma} \cdot 4m &\leq
    \FCstar(\Cf_0) \oplus \FCstar(\Cf_1) \oplus \FCstar(\Bf_1) \oplus
    \dots \oplus
    \FCstar(\Bf_k)\\  
    &\leq   \omega^\gamma\cdot m  \oplus \omega^\gamma \oplus
    \omega^\gamma \cdot m \\
    &<    \omega^{\gamma} \cdot 4m.
  \end{align*}
\qed \end{proof}

Theorem \ref{thm:BoundsonScatteredOrders} now follows as a corollary
of this Proposition.

\begin{corollary}
  Let $\alpha$ be a scattered order of $\FCstar$ rank
  $1+\gamma$ ($0$, respectively) for some ordinal $\gamma$. 
  Every finite word $\alpha$-oracle-automatic scattered linear order
  has $\FCstar$ 
  rank strictly below $\omega^{\gamma+1}$ ($\omega^0=1$, respectively).
\end{corollary}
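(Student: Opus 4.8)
The plan is to deduce the corollary from the preceding Proposition, whose hypothesis is phrased in terms of $\FC$ rather than $\FCstar$, by combining the inequality $\FCstar(\alpha)\leq\FC(\alpha)\leq\FCstar(\alpha)+1$ recalled in the preliminaries with the finite-word hypothesis. Assume $\FCstar(\alpha)=1+\gamma$. If in fact $\FCstar(\alpha)=0$, then $\alpha$ is a finite order, so every finite $\alpha$-word has finite support over a finite index set and the domain of any finite word $\alpha$-oracle-automatic structure is finite; hence $\FCstar(\Af)=0<1$, which settles the degenerate case. In the main case the inequality leaves exactly two possibilities for $\FC(\alpha)$.

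If $\FC(\alpha)=1+\gamma$, the Proposition applies verbatim and gives $\FCstar(\Af)<\omega^{\gamma+1}$ for \emph{every} $\alpha$-oracle-automatic scattered linear order, finite word or not; nothing more is needed. The interesting case is $\FC(\alpha)=1+\gamma+1$, where the Proposition only yields the weaker bound $\omega^{\gamma+2}$, so the finite-word hypothesis must be brought to bear. Here $\FC(\alpha)=\FCstar(\alpha)+1$ forces $\alpha/{\sim_{1+\gamma}}$ to be a finite order, so $\alpha=\Lf_1+\dots+\Lf_k$ where the $\Lf_i$ are its $\sim_{1+\gamma}$-classes; since condensations are local to intervals, each $\Lf_i$ collapses to a point under $\sim_{1+\gamma}$, whence $\FC(\Lf_i)\leq 1+\gamma<\FC(\alpha)$.

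I would then pad this finite sum to a $\Z$-sum by empty orders and invoke Theorem \ref{thm:Decomposition} with $z_0,z_1$ chosen so that $\Lf_\LEFT$ and $\Lf_\RIGHT$ are empty and the middle factors are exactly $\Lf_1,\dots,\Lf_k$; taking the parameter automaton to recognise the whole domain and the trivial parameter exhibits $\Af$ itself as a tamely colourable sum-of-box-augmentation whose box factors lie in $\Fc_{\Lf_1},\dots,\Fc_{\Lf_k}$ (the $\Lf_\LEFT$- and $\Lf_\RIGHT$-factors being one-element). Restricting a finite $\alpha$-word to $\Lf_i$ yields a finite $\Lf_i$-word, so these factors are genuinely finite word $\Lf_i$-oracle-automatic scattered linear orders, and since $\FC(\Lf_i)\leq 1+\gamma$ the Proposition bounds each of them by $\FCstar<\omega^{\gamma+1}$. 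As $\omega^{\gamma+1}$ is additively principal, it is closed under the natural sum $\oplus$; hence Lemma \ref{lem:FCStarAndBox} keeps every box-augmentation below $\omega^{\gamma+1}$, and Lemma \ref{lem:FCStarAndSum} (a maximum) keeps the enclosing sum-augmentation below $\omega^{\gamma+1}$ as well, giving $\FCstar(\Af)<\omega^{\gamma+1}$.

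The main obstacle is this second case: one must check that Theorem \ref{thm:Decomposition} is applicable to the finite decomposition $\alpha=\Lf_1+\dots+\Lf_k$ (i.e. that each $\FC(\Lf_i)$ is strictly below $\FC(\alpha)$ after padding by empty orders), that the box factors it produces are genuinely finite word $\Lf_i$-oracle-automatic, and that the ordinal arithmetic closes up — namely that a natural sum of finitely many ordinals each below $\omega^{\gamma+1}$ stays below $\omega^{\gamma+1}$. Once the comparison $\FCstar\leq\FC\leq\FCstar+1$ is used to isolate the two cases, the first case and the degenerate case are immediate.
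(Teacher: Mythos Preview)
Your proof is correct and follows essentially the same route as the paper: decompose $\alpha$ as a finite sum of orders with $\FC\leq 1+\gamma$, invoke Theorem~\ref{thm:Decomposition} to exhibit $\Af$ as a tamely colourable sum-of-box augmentation over the $\Fc_{\Lf_i}$, bound each factor via the Proposition, and close up using Lemmas~\ref{lem:FCStarAndBox} and~\ref{lem:FCStarAndSum} together with the additive indecomposability of $\omega^{\gamma+1}$. The only difference is that the paper does not perform your case split on $\FC(\alpha)$ versus $\FCstar(\alpha)$: it applies the decomposition argument uniformly, since $\FCstar(\alpha)=1+\gamma$ already guarantees a finite sum $\alpha=\alpha_1+\dots+\alpha_k$ with $\FC(\alpha_i)\leq 1+\gamma$, and the proof of Theorem~\ref{thm:Decomposition} does not actually require the strict inequality $\FC(\Lf_z)<\FC(\Lf)$ stated in its setup.
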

\begin{proof}
  If $\alpha$ is a scattered linear order such that
  $\FCstar(\alpha)=1+\gamma$, then there are linear orders 
  $\alpha_i$  with 
  $\FC(\alpha_i)\leq 1+\gamma$
  for $1\leq i \leq k$ such that
  $\alpha=\sum_{i=1}^k \alpha_i$.
  
  Theorem \ref{thm:Decomposition} implies that each finite word
  $\alpha$-oracle-automatic scattered linear order $\Lf$ is a tamely
  colourable sum-of-box augmentations of 
  $(\Fc_{\alpha_1}, \dots, \Fc_{\alpha_k})$,  the classes of finite
  word $\alpha_i$-oracle-automatic structures. 
  Due to Lemmas \ref{lem:FCStarAndBox} and \ref{lem:FCStarAndSum}
  there are  $\alpha_i$-oracle-automatic scattered linear orders
  $\Lf_i$ (for $1\leq i \leq k$) such that
  $\FCstar(\Lf) \leq \FCstar(\Lf_1) \oplus \dots \oplus
  \FCstar(\Lf_k)$. Since $\FCstar(\Lf_i) < \omega^{\gamma+1}$ for each
  $1\leq i \leq k$, we immediately conclude that
  $\FCstar(\Lf) < \omega^{\gamma+1}$. 
\qed \end{proof}

\section{Ranks of Forests}
\label{app:RanksofForests}

We now introduce a variant of the height of a well-founded partial
order called \emph{infinity rank} and
denoted by $\infrank$.

\begin{definition}
  Let $\Pf=(P, \leq)$ be a well-founded partial order.
  We define the ordinal valued $\infrank$ of a node $p\in P$ inductively by
  \begin{align*}
    \infrank(p, \Pf)=\sup\{ \alpha+1\mid \exists^\infty p' ( p'<p \text{ and }
    \infrank(p', \Pf)\geq \alpha)\}.
  \end{align*}
  The $\infrank$ of $\Pf$ is then
  \begin{align*}
    \infrank(\Pf)=
    \sup\{\alpha+1\mid \exists^\infty p\in P\quad  
    \infrank(p, \Pf)\geq \alpha\}. 
  \end{align*}
\end{definition}

\begin{lemma}\label{lem:RankAndInfRank}
  \cite{KaLiLo12}
  For $\Pf$  a well-founded partial order,
  we have 
  \begin{equation*}
    \infrank(\Pf) \leq \rank(\Pf)< \omega\cdot (\infrank(\Pf) +1).    
  \end{equation*}
\end{lemma}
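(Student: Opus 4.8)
The plan is to prove both inequalities by first establishing node-level versions and then transferring them to the whole order. Throughout write $i(p):=\infrank(p,\Pf)$ and $r(p):=\rank(p,\Pf)$, and for a node $p$ set $S_p:=\{\alpha\mid \exists^\infty p'<p:\ i(p')\geq\alpha\}$, so that $i(p)=\sup\{\alpha+1\mid \alpha\in S_p\}$; analogously let $S:=\{\alpha\mid \exists^\infty p\in P:\ i(p)\geq\alpha\}$ with $i(\Pf)=\sup\{\alpha+1\mid \alpha\in S\}$. For the left inequality $\infrank(\Pf)\leq\rank(\Pf)$ I would first show $i(p)\leq r(p)$ by well-founded induction on $p$: if $\alpha\in S_p$ then in particular some $p'<p$ has $i(p')\geq\alpha$, so by induction $r(p')\geq\alpha$ and hence $r(p)\geq r(p')+1\geq\alpha+1$; taking the supremum over $\alpha\in S_p$ gives $i(p)\leq r(p)$. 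The global inequality follows the same way: for $\alpha\in S$ pick $p$ with $i(p)\geq\alpha$, so $r(\Pf)\geq r(p)+1\geq\alpha+1$.

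Two structural facts drive the right inequality: \emph{monotonicity} and \emph{finiteness of the top layer}. Monotonicity: if $p'<p$ then every witness for $\alpha\in S_{p'}$ is also below $p$ by transitivity of $<$, so $S_{p'}\subseteq S_p$ and hence $i(p')\leq i(p)$; the identical argument with $S_p\subseteq S$ yields $i(p)\leq i(\Pf)$ for every $p$. Finiteness: only finitely many $p'<p$ can satisfy $i(p')\geq i(p)$, since otherwise $i(p)\in S_p$ would force $i(p)\geq i(p)+1$, which is absurd; likewise only finitely many nodes of $\Pf$ satisfy $i(p)\geq i(\Pf)$.

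With these in hand I would prove $r(p)<\omega\cdot(i(p)+1)$ by well-founded induction on $p$. By monotonicity every predecessor $p'$ has $i(p')\leq i(p)$, so I split the predecessors into the \emph{low} ones with $i(p')<i(p)$ and the finitely many \emph{high} ones with $i(p')=i(p)$. For a low predecessor the induction hypothesis gives $r(p')<\omega\cdot(i(p')+1)\leq\omega\cdot i(p)$, so these contribute a supremum of $r(p')+1$ of at most $\omega\cdot i(p)$. For each of the finitely many high predecessors the induction hypothesis gives $r(p')<\omega\cdot i(p)+\omega$, hence $r(p')\leq\omega\cdot i(p)+N$ for some finite $N$ uniform over them; therefore $r(p)=\sup_{p'<p}(r(p')+1)\leq\omega\cdot i(p)+N+1<\omega\cdot i(p)+\omega=\omega\cdot(i(p)+1)$. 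The global bound is obtained by repeating this split over the nodes of $\Pf$ using $S$ in place of $S_p$: the finitely many nodes with $i(p)=i(\Pf)$ satisfy $r(p)\leq\omega\cdot i(\Pf)+N$ for a common finite $N$ while the remaining nodes satisfy $r(p)+1\leq\omega\cdot i(\Pf)$, so $r(\Pf)\leq\omega\cdot i(\Pf)+N+1<\omega\cdot(i(\Pf)+1)$.

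The main obstacle is precisely the right inequality, and specifically the apparent danger that a predecessor of very large rank could push $r(p)$ past $\omega\cdot(i(p)+1)$. Monotonicity of the infinity rank is exactly what excludes this: no predecessor can have strictly larger infinity rank than $p$, so the only predecessors capable of driving $r$ beyond $\omega\cdot i(p)$ are the finitely many with $i(p')=i(p)$, and their bounded finite overshoot is absorbed by the slack between $\omega\cdot i(p)+\omega$ and $\omega\cdot(i(p)+1)$. Establishing monotonicity and the finiteness of the top layer cleanly from the $\exists^\infty$-definition is the conceptual crux; once they are available, the two inductions are routine ordinal arithmetic.
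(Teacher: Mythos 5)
Your proof is correct. Note that the paper itself gives no proof of this lemma --- it is imported verbatim from \cite{KaLiLo12} --- so there is nothing internal to compare against; your argument (monotonicity of $\infrank$ along $<$, finiteness of the set of predecessors of maximal $\infrank$, and the two well-founded inductions) is the standard one and all steps check out, including the absorption of the finite overshoot $N+1$ into the gap between $\omega\cdot\infrank(p,\Pf)+\omega$ and $\omega\cdot(\infrank(p,\Pf)+1)$.
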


In this section, we prove the following bound on the ranks of
$\alpha$-automatic order forests.
\begin{theorem}\label{thm:BoundsOnTreeRanksDet}
  Let   $\alpha$ be some scattered linear order.
  \begin{enumerate}
  \item 
    Every $\alpha$-oracle-automatic order forest $\Ff=(F, \leq)$ such
    that
    \begin{itemize}
    \item $F$ is also the domain of some $\alpha$-oracle-automatic
      scattered linear order, and
    \item    $\FC(\alpha)= 1+\gamma$
    \end{itemize}
    has rank
    strictly below $\omega^{1+\gamma+1}$ and
    $\infrank$ strictly below $\omega^{\gamma+1}$.
  \item 
    \begin{itemize}
    \item 
    If $\FC(\alpha) <\omega$, then 
    every $\alpha$-oracle-automatic order forest has rank
    strictly below $\omega^{ \FC(\alpha)+1}$.    
  \item     
    If $\FC(\alpha) = \omega +c_0$ for some $c_0<\omega$, then 
    every $\alpha$-oracle-automatic order forest has rank
    strictly below $\omega^{\omega \cdot (c_0 + 1)}$.    
  \item 
    If $\FC(\alpha) = \omega \cdot c_1 +c_0$ for $c_0,c_1<\omega$ and
    $c_1\geq 2$, then  every $\alpha$-oracle-automatic order forest has rank
    strictly below $\omega^{\omega^2\cdot (c_1-1) + \omega  \cdot (c_0
      + 1)}$.    
  \item If $\FC(\alpha) \geq \omega^2$, 
    then every $\alpha$-oracle-automatic order forest has rank
    strictly below $\omega^{ \omega \cdot \FC(\alpha)+
      \omega}$.\footnote{
      In particular, 
      if $\FC(\alpha) = 
      \omega^n \cdot c_n +
      \omega^{n-1} \cdot c_{n-1} +
      \dots +
      \omega \cdot c_{1} +
      c_0$ such that $n\geq 2$, $c_1,c_2,\dots, c_n <\omega$, and
      $c_n\neq 0$,
      then every $\alpha$-oracle-automatic order forest has rank
      strictly below $\omega^{\omega^{n+1} \cdot c_n +
        \omega^{n-1+1} \cdot c_{n-1} \dots \omega^2\cdot c_1 + \omega
        \cdot( c_0+1) }$.
    }
  \end{itemize}
  \end{enumerate}  
\end{theorem}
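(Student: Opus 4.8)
The plan is to treat the two parts by different routes: Part~1 (and the sharp finite-rank clause of Part~2) through a Kleene--Brouwer encoding, and the remaining clauses of Part~2 through the sum-of-box decomposition of Theorem~\ref{thm:Decomposition} together with the compatibility of $\infrank$ with augmentations.

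For Part~1, fix an $\alpha$-oracle-automatic scattered linear order $\preceq$ on $F$ (its existence is the hypothesis). I would first define the Kleene--Brouwer order $\leq_{\KB}$ of $\Ff$ relative to $\preceq$: a node is $\KB$-below each of its ancestors, and two $\leq$-incomparable nodes are compared by $\preceq$ at the children of their least common ancestor through which they pass (comparing roots, respectively the nodes themselves, when no common ancestor exists). Since each up-set $\{a' : a \leq a'\}$ is finite, least common ancestors, immediate predecessors and the relevant branching children are all first-order definable from $\leq$; hence $\leq_{\KB}$ is first-order definable from $\leq$ and $\preceq$ and is therefore $\alpha$-oracle-automatic by closure under first-order definable relations. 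It is scattered because $\Ff$ is well-founded (so there is no infinite descending branch) and $\preceq$ is scattered (so sibling sets embed no copy of $\Q$). Theorem~\ref{thm:BoundsonScatteredOrders} then bounds $\FCstar(\leq_{\KB}) < \omega^{\gamma+1}$.

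The key remaining step of Part~1 is the inequality $\infrank(\Ff) \leq \FCstar(\leq_{\KB})$, which I would prove by an induction mirroring Lemma~\ref{lem:FCstarSumsIncreaseRank}. For a node $p$ the set $I_p := \{x : x \leq p\}$ of $p$ and its descendants is a convex interval of $\leq_{\KB}$ with maximum $p$. If $\infrank(p) \geq \beta+1$ then $p$ has infinitely many pairwise $\leq$-incomparable descendants of $\infrank \geq \beta$ (reducing to children by well-founded recursion on the subtree); their intervals are infinitely many pairwise disjoint $\leq_{\KB}$-intervals inside $I_p$, each of $\FCstar$-rank $\geq \beta$ by induction, and infinitely many consecutive such intervals raise the rank by one exactly as in Lemma~\ref{lem:FCstarSumsIncreaseRank}, so $\FCstar(I_p) \geq \beta+1$. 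Taking suprema yields $\infrank(\Ff) \leq \FCstar(\leq_{\KB}) < \omega^{\gamma+1}$, whence $\rank(\Ff) < \omega \cdot (\infrank(\Ff)+1) \leq \omega^{1+\gamma+1}$ by Lemma~\ref{lem:RankAndInfRank}. The first clause of Part~2 is the special case: when $\FC(\alpha) < \omega$ the set of all finite $\alpha$-words carries an $\alpha$-automatic scattered order, so its restriction to $F$ supplies the required $\preceq$, and Part~1 applies with exponent $\FC(\alpha)+1$.

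For the remaining clauses of Part~2 no such $\preceq$ need exist, so I would argue by induction on $\FC(\alpha)$ using Theorem~\ref{thm:Decomposition}: choosing a window $[z_0,z_1]$ presents any definable subforest as a tamely-colourable sum-of-box augmentation of finitely many building blocks automatic over $\Lf_{\LEFT}$, the middle pieces $\Lf_{z_0},\dots,\Lf_{z_1}$, and $\Lf_{\RIGHT}$; combined with the compatibility of $\infrank$ with sum- and box-augmentations (from \cite{KaLiLo12}, the box case bounded by a natural sum $\oplus$) this bounds $\infrank(\Ff)$ in terms of the infranks of the blocks. The middle pieces have $\FC < \FC(\alpha)$ and are controlled by the induction hypothesis. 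The main obstacle is the two tails $\Lf_{\LEFT},\Lf_{\RIGHT}$, whose $\FC$-rank need not drop; these I would handle by a secondary induction exploiting their one-sided (initial/final segment) structure, and it is the precise ordinal bookkeeping of iterating this decomposition --- each peeling multiplying the infinite part of the exponent by $\omega$ while the finite part is absorbed sharply by the Kleene--Brouwer base case --- that produces exactly the piecewise exponents $\omega(c_0+1)$, $\omega^2(c_1-1)+\omega(c_0+1)$ and $\omega\cdot\FC(\alpha)+\omega$ stated in the theorem. This ordinal arithmetic, rather than any single conceptual step, is where the real work lies.
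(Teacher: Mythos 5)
Your treatment of Part~1 and of the finite-rank clause of Part~2 is correct and is essentially the paper's own argument: form the Kleene--Brouwer order of $\Ff$ relative to the given scattered order $\preceq$, observe it is $\alpha$-oracle-automatic and scattered, bound its $\FCstar$-rank by Theorem~\ref{thm:BoundsonScatteredOrders}, and transfer that bound to $\infrank(\Ff)$ by the induction you sketch (this is exactly Lemma~\ref{lem:rankAndKB}, whose successor step uses Lemma~\ref{lem:FCstarSumsIncreaseRank} on the pairwise disjoint convex $\KB$-intervals of an infinite antichain); Lemma~\ref{lem:RankAndInfRank} then converts $\infrank$ into rank, and Lemma~\ref{lem:alpha-autmatic-words-are-scattered-ordered} supplies $\preceq$ when $\FC(\alpha)<\omega$.

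For the remaining clauses of Part~2, however, there is a genuine gap. You invoke ``compatibility of $\infrank$ with sum- and box-augmentations \dots the box case bounded by a natural sum $\oplus$''. No such natural-sum bound is available for forests: the result actually imported from \cite{KaLiLo12} (Theorem~\ref{thm:treerank-indecomposable}) is the much weaker \emph{indecomposability} statement --- if no building block has $\infrank$ equal to the additively indecomposable ordinal $\omega^{\beta}$, then neither does the augmented forest --- and the whole architecture of the argument is shaped by having only this avoidance form. Relatedly, your plan for the tail classes does not work as stated: the orders $\Lf_{\LEFT}$ and $\Lf_{\RIGHT}$ can have the same $\FC$-rank as $\alpha$ itself, so neither the primary induction on $\FC(\alpha)$ nor an unspecified ``secondary induction exploiting their one-sided structure'' controls the structures in $\Cc_{\LEFT}$ and $\Cc_{\RIGHT}$. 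What actually controls them is that these two classes are \emph{finite}, of size bounded in terms of the automata only and independent of the parameter. The paper exploits this by assuming $\infrank(\Ff)\geq\omega^{\delta+\omega}$, choosing parameters $p_1,\dots,p_{n_0}$ with $\infrank(\Ff_{p_n})=\omega^{\delta+n}$ for $n_0$ larger than $2\exp(q^2+2q_<^2)$, and pigeonholing to find some $m$ such that no member of $\Cc_{\LEFT}\cup\Cc_{\RIGHT}$ has $\infrank$ equal to $\omega^{\delta+m}$; since the middle blocks have $\infrank<\omega^{\delta}$ by the induction hypothesis, Theorem~\ref{thm:treerank-indecomposable} forces $\infrank(\Ff_{p_m})\neq\omega^{\delta+m}$, a contradiction. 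This pigeonhole-plus-indecomposability step is the missing idea; it is also precisely what produces the increment $\delta\mapsto\delta+\omega$ per level of the induction (the contradiction is only reached for \emph{some} $n\leq n_0$, so one pays $\sup_n\omega^{\delta+n}=\omega^{\delta+\omega}$), which your ``ordinal bookkeeping'' gestures at but cannot derive without it.
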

\begin{remark}
  If $\alpha$ is an ordinal or $\FC(\alpha)<\omega$, 
  we show in the next section that  every  $\alpha$-oracle-automatic  
  set $F$ of finite $\alpha$-words allows a scattered linear order.
  Thus, if $\alpha$ satisfies one of these conditions, then the better
  bounds hold.
\end{remark}

\subsection{A Scattered Order of Scattered Words}
\label{sec:OrderingWords}

We first show that scattered orders $\alpha$ of finite rank allow a
scattered order of all finite $\alpha$-words that is
$\alpha$-automatic. Afterwards, 
we show that the analogous result holds in case that $\alpha$ is an
ordinal. 
Our first claim is proved by induction on the $\FC$-rank and the
$\FCstar$-rank of $\alpha$. We prepare our result by defining an
automaton that determines at every cut the left and the right rank of
this cut. Given a cut $c=(C,D)$ without direct predecessor, the left rank is
the minimal rank of the induced suborders of nonempty upwards closed
subsets of $C$. Analogously, the right rank is the minimal rank of the
induced suborders of nonempty downwards closed subsets of $D$.

\begin{definition} \label{def:AutLimitStages}
  For $\Sigma$ arbitrary, 
  let $\Cc_n=(Q_n, \Sigma, I_n,F_n,\Delta_n)$ be an automaton with state set 
  $Q_n:=\{0, 1, \dots, n\} \times \{0, 1, \dots, n\}$, 
  initial states $I_n=\{0\}\times \{0, 1, \dots, n\}$ and
  final state $F_n=\{0, 1, \dots, n\} \times \{0\}$.
  In order to define its transition relation, we use the following
  notation for $i\leq n$, let
  $\Pc_i$ be defined by
  \begin{align*}    
    \{ S\in 2^{Q_n}\mid 
    \forall j> i\ \forall k\quad (j,k),(k,j) \notin S \text{ and }
    \exists k\leq i\quad
    (i,k)\in S \text{ or } (k,i)\in S\}.
  \end{align*}
  The transition relation of $\Cc_n$ is
  \begin{align*}
    \Delta_n =& \left\{( (i,0),\sigma,(0,j))\mid \sigma\in\Sigma\text{
        and }i,j\in\{0,
      1, \dots, n\} \right\}\\
    &\cup \{( (i,j), X)\mid X\in \Pc_j\}\\
    &\cup \{( X, (i,j))\mid X\in \Pc_i\}
  \end{align*}
\end{definition}

\begin{lemma}
  Let $\alpha$ be some scattered linear order and $w$ an arbitrary
  $\alpha$-word. 
  Interpreting $\Cc_n$ as an $\alpha$-automaton, there is an accepting
  run $r$ of $\Cc_n$ on $w$ if and only if $\FCstar(\alpha)\leq n$.
  In this case, $r$ is the unique accepting run and for every cut 
  $c=(C,D)$ the state at $c$ is
  \begin{itemize}
  \item in $\{0\}\times\{0, 1, \dots, n\}$ if $c$ has a direct predecessor,
  \item in $\{0, 1, \dots, n\}\times \{0\} $ if $c$ has a direct successor,
  \item in $\{k\} \times \{0, 1, \dots, n\}$ (with $k\geq 1$) if
    $c$ has no direct predecessor, and for each cut $c'<c$ there is a cut
    $c''$ such that $c'<c''<c$  and
    $\FC(\alpha{\restriction}_{(c'',c)})=k$, and
  \item in $\{0, 1, \dots, n\}\times \{k\}$ (with $k\geq 1$) if
    $c$ has no direct successor, and for each cut $c'>c$ there is a cut
    $c''$ such that $c'>c''>c$  and
    $\FC(\alpha{\restriction}_{(c,c'')})=k$. 
  \end{itemize}
\end{lemma}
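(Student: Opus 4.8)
The plan is to exhibit a single canonical labeling of the cuts by their local left- and right-ranks, to show that it is forced, and to show that it is an accepting run exactly when all these ranks stay $\leq n$, which happens iff $\FCstar(\alpha)\leq n$. Concretely, for a cut $c=(C,D)$ I set the first coordinate to $0$ if $c$ has a direct predecessor and otherwise to the eventual (stabilised) value of $\FC(\alpha{\restriction}_{(c'',c)})$ as $c''$ increases towards $c$; dually for the second coordinate using $\FC(\alpha{\restriction}_{(c,c'')})$ as $c''$ decreases towards $c$. The map $c''\mapsto \FC(\alpha{\restriction}_{(c'',c)})$ is non-increasing and ordinal valued, so the limit exists, and it is exactly the quantity $k$ occurring in the two limit clauses of the statement. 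The first step is to check that this labeling meets every transition constraint: around an element both flanking cuts have a direct neighbour on the touching side, so the successor transitions (which only force the touching coordinate to be $0$) are satisfied; the initial and final conditions hold because the bottom cut has empty $C$ and the top cut has empty $D$, forcing the respective coordinate to $0$.

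The heart of the argument is the two limit clauses, which I prove by transfinite induction on the condensation rank. I would establish that for a cut $c$ without direct predecessor the local left-rank obeys the recursion that it is one more than the supremum of the left-ranks of the cuts accumulating at $c$ from below; phrased through condensations, $c$ has left-rank $\geq k$ iff boundaries of $\sim_{k-1}$-classes occur cofinally below $c$. This is precisely what a right-limit transition tests: the set $\lim_{c^-}r$ collects the states occurring cofinally below $c$, its maximal coordinate records the highest accumulating local rank, and membership in the appropriate class $\Pc$ pins down the first coordinate of $r(c)$. The dual statement for left-limit transitions and the second coordinate is symmetric. Carrying out this induction, and matching the indexing of the classes $\Pc_i$ against the ``$+1$'' in the recursion, is the main obstacle; it also yields uniqueness, since the $\Pc_i$ are pairwise disjoint (each fixes the maximal coordinate of its members) and, by finiteness of the state set, $\lim_{c^-}r$ is nonempty at any limit cut, so the incoming limit set determines one coordinate uniquely while the successor transitions determine the other.

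It remains to connect the local ranks to $\FCstar(\alpha)$. Here I would prove that all local left- and right-ranks are bounded by $\FCstar(\alpha)$ and that this bound is attained. For the upper bound, if $\FCstar(\alpha)=m$ then $\alpha/{\sim_m}$ is finite, so every sufficiently small neighbourhood $(c'',c)$ lies inside one of the finitely many top-level $\sim_m$-intervals $J$, whence $\FC(\alpha{\restriction}_{(c'',c)})\leq \FC(J)\leq m$. For attainment, if $\FCstar(\alpha)>n$ then $\alpha/{\sim_n}$ is infinite and hence has an accumulation of classes, producing a cut at which $\sim_n$-boundaries occur cofinally from one side and therefore a local rank $>n$. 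Combining these: when $\FCstar(\alpha)\leq n$ the canonical labeling uses only states of $Q_n$ and is the unique accepting run, and when $\FCstar(\alpha)>n$ the forced labeling would require a coordinate exceeding $n$, so no accepting run exists. The delicate point of this final step is the gap between $\FCstar$ and $\FC$: one must see why the global bound is $\FCstar(\alpha)$ rather than $\FC(\alpha)$, which is exactly the fact that small neighbourhoods never witness the full $\FC$-rank.
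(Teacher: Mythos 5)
Your proposal is correct and follows essentially the same route as the paper: both define the canonical labeling by the stabilised (equivalently, minimal) value of $\FC(\alpha{\restriction}_{(c'',c)})$ and its dual, verify it against the transitions, obtain uniqueness by an induction on the local rank exploiting the disjointness of the classes $\Pc_i$, and refute the existence of a run when $\FCstar(\alpha)>n$ by locating a cut where $\sim_n$-classes accumulate from one side. The only cosmetic difference is that the paper packages the last step via Lemma~\ref{lem:AllFCstarRankAsSubOrders} and a comparison of runs of $\Cc_n$ with runs of $\Cc_{n+1}$, whereas you argue directly from the infinitude of $\alpha/{\sim_n}$; the substance is the same.
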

\begin{proof}
  First, let $n \geq \FCstar(\alpha)$. This implies, that for all cuts
  $c''$ and $c$, the suborder induced by $(c'',c)$ has $\FC$-rank at
  most $\FCstar(\alpha)\leq n$. 
  Moreover, if $c$ is a cut without direct predecessor, and if
  $c_1<c_2<c_3<\dots    <c$ is an infinite chain of cuts whose limit is $c$,
  then  $\FC(\alpha{\restriction}_{(c_i,c)})$ stabilises at some
  $i_0$. Thus, the following function $r$ is well-defined. It is a function
  $r: \Cuts(\alpha)\to Q_n$ where for each cut $c=(C,D)$ we have
  $r(C,D)= (i,j)$ such that
  \begin{enumerate}
  \item $i=0$ if $c$ has a direct predecessor or $C=\emptyset$, 
  \item  otherwise, $i= \min\{ \FC( (c',c))\mid c'<c\}$,
  \item $j=0$ if $c$ has a direct successor or $D=\emptyset$, 
  \item  otherwise, $j= \min\{ \FC( (c,c'))\mid c'>c\}$.
  \end{enumerate}
  A straightforward induction on the left and right rank of each cut
  in $\alpha$ shows that $r$ is consistent with the transition
  relation, i.e.,  $r$ is an accepting run of $\Cc_n$ on each
  $\alpha$-word.  
  
  We next show that $r$ is the unique run  of $\Cc_n$ on
  $\alpha$-words.
  Heading for a contradiction assume that $r'$ is another accepting
  run on some $\alpha$-word and that $c=(C,D)$ satisfies
  $r(c) = (i,j) \neq r'(c)=(i',j')$. Without loss of generality (the
  other case is symmetric), we
  may assume that $i\neq i'$ and  $c$ has been chosen such that $i$ is
  minimal with this property. 
  We distinguish the following cases:
  \begin{itemize}
  \item Assume that $i=0$. Since $r'$ is accepting, $c$ cannot be the minimal
    cut. Thus, $c$ has a direct predecessor $c'$. But independent of
    the
    successor transition used between $c'$ and $c$, $r'(c)\in
    \{0\}\times\{0, 1, \dots, n\}$ whence $i=i'=0$ contradicting the
    assumption $i\neq i'$.
  \item Assume that $i\geq 1$. The right limit transition applied by
    $r$ at $c$ 
    shows that there is a cut $c'<c$ such that for all
    $c''\in (c',c)$, $r(c'')\in \{0, 1, \dots, i-1\}^2$. 
    By minimality of $i$, $r$ and $r'$ agree on this interval. 
    But then again the applicable right limit transitions always
    imply that $i'=i$ contradicting $i'\neq i$. 
  \end{itemize}
  
  Finally, we have to show that there are no accepting runs of $\Cc_n$
  on $\alpha$-words if $\FCstar(\alpha)>n$. 
  Assume that $\FCstar(\alpha)>n$. Due to Lemma
  \ref{lem:AllFCstarRankAsSubOrders}, $\alpha$ contains an interval
  $\alpha'$ with $\FCstar(\alpha')=n+1$. We show that there is no
  function $r:\alpha'\to Q_n$ which is consistent with the transition
  relation $\Delta_n$. 
  Up to symmetry, $\alpha'$ contains an upwards closed interval of the
  form  $\sum_\omega \beta_i$  with $\FC(\beta_i)=n$. 
  As shown in the first part, there is an accepting run $r'$ of
  $\Cc_{n+1}$ on this sum. For the maximal cut $c_{\max}$ of $\alpha'$,
  we have $r'(c_{\max})= (n+1,0)$. In fact, one easily sees that the
  previous arguments apply to any (possibly non-accepting run) on
  $\alpha'$ in the sense that any run on $\alpha'$ satisfies
  $r'(c_{\max})\in \{n+1\}\times \{0, 1, \dots, n+1\}$. 
  Since $\Delta_n\subsetneq \Delta_{n+1}$, any run of $\Cc_n$ on
  $\alpha$ is also a run of $\Cc_{n+1}$ that does not use states from
  $\{n+1\}\times\{0, 1, \dots, n+1\}$. But we have seen that any run
  of $\Cc_{n+1}$ on $\alpha'$ would label $c_{\max}$ with such a
  state. Thus, there is no run of $\Cc_n$ on $\alpha'$ whence there
  can neither be a run of $\Cc_n$ on $\alpha$. 
\qed \end{proof}

The automaton $\Cc_n$ will be useful to decompose an order $\alpha$
with $\FCstar(\alpha)=n$ into finitely many pieces
$\alpha=\alpha_1+\alpha_2+\dots+ \alpha_k$ of $\FC$-rank at most $n$. 

\begin{lemma}
  Let $\alpha$ be an order with $\FCstar(\alpha)=n$ and $r$ the
  accepting run of $\Cc_n$ on $\alpha$-words. Let
  $c,d$ be consecutive cuts of maximal rank in the sense that 
  \begin{itemize}
  \item $c$ is minimal or $r(c)=(i,j)$ with
    $\max(i,j)=n$,
  \item $d$ is maximal or $r(d)=(k,l)$ with $\max(k,l)=n$, and
  \item for all $e\in (c,d)$, $r(e)=(x,y)$ we have $\max(x,y)<n$.
  \end{itemize}
  Then the interval $(c,d)$ of $\alpha$ has $\FC$-rank at most $n$. 
\end{lemma}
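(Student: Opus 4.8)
The statement asserts that between two consecutive cuts of maximal rank (where the run $r$ of $\Cc_n$ records the value $n$ in at least one coordinate), the intervening interval $(c,d)$ has $\FC$-rank at most $n$. The plan is to exploit the correspondence, established in the preceding lemma, between the states assigned by $r$ and the local ranks of $\alpha$ around each cut. The key observation is that every cut $e \in (c,d)$ carries a state $(x,y)$ with $\max(x,y) < n$, so at every interior cut both the left rank and the right rank are strictly below $n$.

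\medskip

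\textbf{Key steps.} First I would recall the characterisation from the previous lemma: for a cut $e=(C,D)$ with $r(e)=(x,y)$, the first coordinate $x$ equals $\min\{\FC(\alpha{\restriction}_{(e',e)}) \mid e'<e\}$ whenever $e$ has no direct predecessor (and $x=0$ otherwise), and dually for $y$ on the right. Thus $\max(x,y)<n$ on $(c,d)$ means that every interior cut has a witnessing neighbourhood of rank strictly below $n$ on at least one side. Second, I would argue that this forces the $n$-th condensation $\sim_n$ of $\alpha{\restriction}_{(c,d)}$ to be trivial, i.e. to identify all interior points into boundedly many classes, by showing that any point $l$ in the interval lies in an interval of $\FC$-rank $<n$ on its left or right. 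Concretely, one wants to cover $(c,d)$ by intervals each of $\FC$-rank at most $n-1$ arranged so that the $(n-1)$-th condensation of the whole interval is finite; this yields $\FCstar(\alpha{\restriction}_{(c,d)}) \leq n-1$ and hence $\FC(\alpha{\restriction}_{(c,d)}) \leq n$.

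\medskip

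\textbf{Carrying out the condensation argument.} I would proceed by analysing the $(n-1)$-th condensation $\alpha{\restriction}_{(c,d)}/{\sim_{n-1}}$. Because every interior cut $e$ has $\max(x,y)<n$, no interior cut can be a limit (from either side) of intervals whose ranks reach $n$; equivalently, each point of $(c,d)$ sits inside an interval of rank below $n$ whose boundary does not require visiting a state with coordinate $n$. The goal is to show $\alpha{\restriction}_{(c,d)}/{\sim_{n-1}}$ is finite. Heading for a contradiction, if it were infinite then it would contain an $\omega$-, $\omega^*$-, or $\zeta$-shaped suborder, and by Lemma~\ref{lem:FCstarSumsIncreaseRank} that would push the $\FCstar$-rank of the interval up to $n$, producing an interior cut that is a limit of rank-$(n-1)$ intervals on one side. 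At such a limit cut the corresponding limit transition of $\Cc_n$ would assign a state with an $n$ in the appropriate coordinate (exactly as in the definition of $\Pc_i$), contradicting $\max(x,y)<n$ for all $e\in(c,d)$.

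\medskip

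\textbf{Main obstacle.} The delicate point is the boundary behaviour at $c$ and $d$ themselves. Since $c$ and $d$ may have rank $n$, one must be careful that the open interval $(c,d)$ genuinely excludes the rank-$n$ phenomena, so that the limit-transition argument only ever sees interior cuts. I expect the main work to be verifying that an infinite $(n-1)$-th condensation of $(c,d)$ really does create an \emph{interior} limit cut of left- or right-rank $n$, rather than pushing all the rank-$n$ behaviour out to the endpoints $c$ or $d$; this requires checking that the limit of any strictly increasing (or decreasing) sequence of interior cuts is again interior, which holds precisely because $(c,d)$ is an open interval between two fixed cuts. Once that is secured, the contradiction closes and we conclude $\FCstar(\alpha{\restriction}_{(c,d)})\leq n-1$, hence $\FC(\alpha{\restriction}_{(c,d)})\leq n$ as claimed.
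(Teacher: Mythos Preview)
Your plan has a genuine gap, and it is precisely the obstacle you yourself flag at the end. The claim that ``the limit of any strictly increasing (or decreasing) sequence of interior cuts is again interior'' is false: a sequence of interior cuts that is cofinal in $(c,d)$ has limit $d$, and a coinitial one has limit $c$. Concretely, take $\alpha=\omega^n$ with $c$ the minimal cut and $d$ the maximal cut. Every interior cut carries a state with both coordinates $\leq n-1$, yet the $(n-1)$-th condensation of $(c,d)$ is $\omega$, hence infinite; the only cut receiving coordinate $n$ is $d$ itself. So your intermediate target $\FCstar\bigl(\alpha{\restriction}_{(c,d)}\bigr)\leq n-1$ is not merely hard to reach --- it is false in general. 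The lemma only asserts $\FC\bigl(\alpha{\restriction}_{(c,d)}\bigr)\leq n$, which is strictly weaker, and your route via $\FCstar\leq n-1$ cannot get there.

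The paper avoids this by not aiming for $\FCstar\leq n-1$ and not arguing about the single level $n$ in one shot. Instead it proves, by induction on $i$, the stronger-looking but more tractable statement: for \emph{any} cuts $c\leq d$, if every cut strictly between them has state in $\{0,\dots,i-1\}^2$, then $\FC((c,d))\leq i$. The inductive step is exactly where the boundary issue is dealt with: the limit transitions force the right coordinate of $r(c)$ and the left coordinate of $r(d)$ to be at most $i$, so by the explicit description of $r$ there are cuts $c<c_1\leq d_1<d$ with $\FC((c,c_1))$ and $\FC((d_1,d))$ already below $i$. One then shows that only finitely many cuts in $(c_1,d_1)$ carry a coordinate equal to $i-1$ (an infinite accumulation would produce a limit cut strictly inside $(c,d)$ with too large a coordinate), and applies the induction hypothesis to each of the finitely many resulting sub-intervals. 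The lemma is the case $i=n$. Your single-step condensation argument cannot substitute for this induction, because without the peeling-off of boundary neighbourhoods there is nothing preventing all the rank-$n$ behaviour from piling up at $c$ or $d$ --- which is exactly what happens in the $\omega^n$ example.
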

\begin{remark}
  In particular, this lemma implies that in an order $\alpha$ with
  $\FCstar(\alpha)=n$ there are only finitely many
  cuts of left or right rank $n$.
\end{remark}
\begin{proof}
  By induction on $i$, we  prove that
  for arbitrary cuts $c \leq d$ the following holds.
  If for all cuts $e$ strictly between $c$
  and $d$ we have  $r(e)\in \{0, 1, \dots, i-1\}^2$ then
  $\FC( (c,d))\leq i$. 
  
  For $i=0$, the condition implies that $c=d$ whence
  $\FC( (c,d) )=\FC(\emptyset) = 0$. 
  Now assume that this claim holds for $i-1$ and that
  for all cuts $e\in (c,d)$ we have
  $r(e)\in \{0, 1, \dots, i-1\}^2$. 
  By definition of the limit transitions, we know that
  $r(c)\in \{0, 1, \dots, n\}\times \{0, 1, \dots, i\}$ and that
  $r(d)\in \{0, 1, \dots, i\}\times \{0, 1, \dots, n\}$. 
  From our construction of the accepting run $r$ (compare the previous
  proof), we conclude that there are cuts
  $c<c_1\leq d_1< d$ such that
  $\FC((c,c_1)) \leq i-1$ and
  $\FC((d_1,d))\leq i-1$. 
  Next, we claim that there are only finitely many cuts $c_1 < e <
  d_1$ such that $r(e)\in  M_{i-1}:=
  (\{i-1\}\times \{0, 1, \dots, \{i-1\}) \cup
  (\{0, 1, \dots, \{i-1\}) \times \{i-1\})$.
  Otherwise there would be an infinite ascending or descending chain
  of cuts in $M_{i-1}$ whose limit $e$ would satisfy $c_1\leq e \leq
  d_1$ and $r(e)\notin \{0, 1, \dots, i-1\}^2$ contradicting our
  assumptions on the interval $(c,d)$. 
  Thus, let $c_1=e_1 < e_2 < \dots < e_{n-1} < e_n=d_1$ be a finite
  sequence of cuts such that for all $c_1\leq e \leq d_1$ we have
  $r(e)\in M_{i-1}$ only if there is a $1\leq j \leq n$ with $e=e_j$.
  Thus, $(c,d) = (c,c_1) + \sum_{i=1}^{n-1} (e_i,e_{i+1}) + (d_1,d)$
  is a finite sum of intervals that (by induction hypothesis) have
  $\FC$-rank at most $i-1$. Thus, $\FC((c,d))\leq i$ as desired.
\qed \end{proof}

Let us collect one more fact about $\Cc_{n+1}$. 
Assume that $\alpha$ is an order with
$\FC(\alpha)=\FCstar(\alpha)=n+1$. 
This implies that $\alpha = \sum_{\gamma\in\Gamma} \alpha_\gamma$
where
$\Gamma\in\{\omega, \omega^*, \Z\}$ and $\FC(\alpha_\gamma)\leq n$
where for infinitely many $\gamma\in\Gamma$ we have
$\FC(\alpha_\gamma)=n$. Thus,  $\Cc_n$ has an accepting run on each
$\alpha_\gamma$ that agrees with the run of $\Cc_{n+1}$ on $\alpha$ on
the interval $\alpha_\gamma$. Hence, the run of $\Cc_{n+1}$ assumes
only finitely many often a state from $M_n:=\{n\}\times\{1, 2, \dots,
n\} \cup \{1, 2, \dots, n\} \times n\}$ on each $\alpha_\gamma$. 
The next lemma follows immediately.

\begin{lemma}\label{lem:MaxRankCutsEmbedinZ}
  Let $\alpha$ be an order with
  $\FC(\alpha)=n+1$. Let $r$ be the accepting run of
  $\Cc_{n+1}$ on some $\alpha$-word. 
  The suborder induced by the cuts
  $\{c\mid r(c)\in M_n\}$ form a suborder of $\Z$. 
\end{lemma}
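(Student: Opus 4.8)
The plan is to reduce the statement to a local-finiteness property of the set $S:=\{c\mid r(c)\in M_n\}$ of $M_n$-cuts. A linear order embeds into $\Z$ if and only if it is \emph{locally finite}, i.e.\ every closed interval is finite (local finiteness already forces countability and yields an embedding by counting distances from a fixed base point). So it suffices to show that between any two $M_n$-cuts there lie only finitely many further $M_n$-cuts. If $\FCstar(\alpha)=n$, the Remark preceding the lemma already bounds the number of cuts of left or right rank $n$ by a finite number, so $S$ is finite and we are done; hence I will assume $\FC(\alpha)=\FCstar(\alpha)=n+1$.

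In this case I would use the decomposition $\alpha=\sum_{\gamma\in\Gamma}\alpha_\gamma$ with $\Gamma\in\{\omega,\omega^*,\zeta\}$ and $\FC(\alpha_\gamma)\le n$ established just before the lemma. The crucial input, also noted there, is that $r$ restricted to a class $\alpha_\gamma$ coincides with the accepting run of $\Cc_n$ on $\alpha_\gamma$ and therefore visits an $M_n$-state only finitely often; thus each class $\alpha_\gamma$ contains only finitely many $M_n$-cuts in its interior.

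To organise the remaining cuts I would consider the monotone projection $\pi\colon\Cuts(\alpha)\to\Cuts(\Gamma)$ sending $(C,D)$ to $(\{\gamma\mid\alpha_\gamma\subseteq C\},\{\gamma\mid\alpha_\gamma\cap D\neq\emptyset\})$. Each fibre of $\pi$ meets the interior of at most one class together with at most one boundary cut between two consecutive classes, so $\pi$ restricted to $S$ has finite fibres. Moreover no $M_n$-cut is mapped to a limit cut of $\Gamma$: the two extreme cuts of $\alpha$ carry a coordinate $0$ and so are not in $M_n$, while any other cut projecting to a limit of $\Gamma$ would run cofinally through infinitely many classes on one side, forcing that side to have rank $n+1$ by Lemma \ref{lem:FCstarSumsIncreaseRank}, again excluding it from $M_n$. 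Hence $\pi$ maps $S$ into the internal cuts of $\Gamma$, which for $\Gamma\in\{\omega,\omega^*,\zeta\}$ form a locally finite order.

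Finally I would combine these facts: for $c<d$ in $S$ the interval $[\pi(c),\pi(d)]$ is finite by local finiteness of $\Gamma$, and each of these finitely many fibres contains only finitely many elements of $S$, so only finitely many $M_n$-cuts lie between $c$ and $d$. This is the local finiteness of $S$ and hence the desired embedding into $\Z$. I expect this last step to be the main obstacle in spirit: the point that must be handled with care is that a $\Gamma$-indexed family of finite blocks of $M_n$-cuts cannot assemble into an $\omega+\omega$-type order (which does not embed into $\Z$), and it is exactly the local finiteness of $\Gamma\in\{\omega,\omega^*,\zeta\}$, transported through the finite-to-one projection $\pi$, that rules this out.
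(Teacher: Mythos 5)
Your proof is correct and follows essentially the same route as the paper, which derives the lemma "immediately" from the preceding discussion: decompose $\alpha$ as $\sum_{\gamma\in\Gamma}\alpha_\gamma$ with $\Gamma\in\{\omega,\omega^*,\Z\}$ and observe that each block contributes only finitely many $M_n$-cuts (via the Remark on orders of $\FCstar$-rank $n$). You merely make explicit the scaffolding the paper omits — the local-finiteness criterion for embeddability into $\Z$, the finite-to-one projection onto $\Cuts(\Gamma)$, the exclusion of the extremal/limit cuts, and the case $\FCstar(\alpha)=n$ — all of which is sound.
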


Thus, there is an accepting run of
$\Cc_{n+1}$ on every $\alpha$-word but no run of
$\Cc_n$ on some $\alpha$-word.

\begin{lemma} \label{lem:alpha-autmatic-words-are-scattered-ordered}
  Suppose that $\alpha$ is a scattered linear order with
  $\FC(\alpha)<\omega$. Then there is an $\alpha$-oracle-automatic
  scattered linear order on the 
  set of finite $\alpha$-words.   
\end{lemma}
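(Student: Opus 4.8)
The plan is to prove, by induction on the finite ordinal $\FC(\alpha)$, the following uniform statement: for each $n$ there is one automaton $\mathcal{B}_n$ (independent of $\alpha$) such that, interpreted over any scattered order $\beta$ with $\FC(\beta)\le n$ and reading a suitable oracle $o_\beta$ that records the nested rank decomposition of $\beta$, it recognises a scattered linear order $<_\beta$ on the finite $\beta$-words. The base case $\FC(\alpha)=0$ is trivial: then $\alpha$ is a single point, the finite $\alpha$-words are just the elements of $\Sigma_\diamond$, and any fixed order on this finite set is scattered and recognised by a single successor transition. For the step, suppose $\FC(\alpha)=n+1$. By Lemma~\ref{lem:MaxRankCutsEmbedinZ} together with the two lemmas preceding it (applied to the run of $\Cc_{n+1}$ from Definition~\ref{def:AutLimitStages}), the cuts of maximal rank embed into $\Z$ and cut $\alpha$ into pieces of strictly smaller $\FC$-rank; that is, $\alpha=\sum_{z\in Z'}\alpha_z$ with $Z'$ a suborder of $\Z$ and $\FC(\alpha_z)\le n$. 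The oracle marks these piece boundaries, so each piece carries its own oracle $o_{\alpha_z}$ and is handled by the inductively given automaton $\mathcal{B}_n$.

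For a finite word $w$ in $\alpha$ write $w=(w_z)_{z\in Z'}$ for its restrictions to the pieces. Since $\supp(w)$ is finite, only finitely many $w_z$ are nonempty, so for nonempty $w$ the set of active pieces has a greatest element $r(w)$ and a least element $\ell(w)$. I would then order \emph{extent-first}: the empty word is least, and for nonempty $w,v$ I set $w<_\alpha v$ if $(r(w),\ell(w))<_{\mathrm{lex}}(r(v),\ell(v))$ in $Z'$; if these pairs agree, I compare $w$ and $v$ lexicographically over the finitely many pieces $\ell\le z\le r$, using the inductive orders $<_{\alpha_z}$ at the least piece where $w_z\ne v_z$. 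The essential point — and the reason a naive lexicographic order is not used, since that would reproduce the dense order of the dyadic rationals already for $\alpha=\omega$ — is that promoting the two extents to primary keys confines every genuine lexicographic comparison to a \emph{finite} block of coordinates.

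Scatteredness then follows structurally. By construction $<_\alpha$ is precisely the sum, indexed by the pairs $(r,\ell)$ with $\ell\le r$ in $Z'$ under the lexicographic order, of the orders on the sets of words with the prescribed extents $(r,\ell)$. The index $\{(r,\ell):\ell\le r\}\subseteq Z'\times Z'$ with the lexicographic order is a suborder of a lexicographic product of two copies of $Z'\subseteq\Z$, hence scattered; each summand is a finite lexicographic product of the scattered orders of finite $\alpha_z$-words under $<_{\alpha_z}$, hence scattered; and a sum of scattered orders over a scattered index is scattered (cf.~\cite{rosenstein82}). Adding the single least element for the empty word preserves this.

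It remains to realise $<_\alpha$ by an $\alpha$-oracle-automaton, and this is the main obstacle. With the oracle supplying the piece boundaries, a single left-to-right pass over $w\otimes v\otimes o$ suffices: the automaton maintains (i) a comparison flag $\rho\in\{=,w{>}v,v{>}w\}$ for the rightmost active pieces, updated at every boundary from who is active in the piece just read; (ii) an analogous flag $\lambda$ for the leftmost active pieces, fixed once both words have become active; (iii) a lexicographic flag recording the trichotomy at the first piece where the two restrictions differ; and, underlying all of these, (iv) a simulation of $\mathcal{B}_n$ on the current piece, re-initialised at each boundary, whose acceptance at the boundary yields $w_z<_{\alpha_z}v_z$ (with a parallel symbol test giving equality). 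At the maximal cut the automaton combines these finitely many data with the priority ``$\rho$ before $\lambda$ before the lexicographic flag''. The delicate part is exactly this coordination across limit cuts: reading off and resetting the sub-run of $\mathcal{B}_n$ at each boundary cut by means of limit transitions in the same style as $\Cc_{n+1}$, while simultaneously carrying the extent flags, and then combining everything with the correct priority. Alternatively, one can phrase $<_\alpha$ as a first-order formula over the automatic predicates ``a given position lies in the piece containing another'', ``$w$ is active in a piece strictly right (resp.\ left) of every active piece of $v$'', and the per-piece orders, and conclude by the closure of finite word $\alpha$-oracle-automatic relations under first-order definitions.
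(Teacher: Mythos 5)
Your proposal is correct and follows essentially the same route as the paper's proof: decompose $\alpha$ via the maximal-rank cuts (which embed into $\Z$ by Lemma~\ref{lem:MaxRankCutsEmbedinZ}), use the pair of boundary cuts enclosing the support as the primary key so that the index order embeds into a lexicographic power of $1+\Z+1$, and reduce the residual comparison to a finite lexicographic block handled by the inductively constructed automaton on lower-rank pieces. The only differences are cosmetic — you compare the right extent before the left extent where the paper compares $c(v)$ before $d(v)$, and you inline the finite-block lexicographic comparison that the paper delegates to the auxiliary automaton $\Bc_n$.
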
 

\begin{proof} 
  We define automata $\Ac_n$ (and $\Bc_n$, respectively) for each
  $n<\omega$ which 
  uniformly define 
  $\alpha$-automatic scattered linear orders on the finite
  $\alpha$-words over a fixed alphabet $\Sigma$ for all 
  scattered linear orders $\alpha$ with $\FC(\alpha)\leq n$ (and
  $\FCstar(\alpha)\leq n$,
  respectively).  
  Note that for $\alpha$ with $\FC(\alpha)=0$ there is a finite number
  of $\alpha$-words over 
  $\Sigma$ whence the construction of $\Ac_0$ is trivial.
  
  Suppose that we have constructed $\Ac_n$. We define $\Bc_n$ as follows. If
  $\FCstar(\alpha)\leq n$, 
  the run of the automaton $\Cc_n$ partitions $\alpha$ uniquely into a
  finite sum of intervals $\alpha=\alpha_1+\alpha_2+\dots+ \alpha_m$
  of $\FC$-rank $\leq n$ by taking the states from $\{n\}\times\{0, 1,
  \dots, n\} \cup \{0, 1, \dots, n\} \times\{n\}$ as splitting
  points. 
  Then 
  $\Bc_n$ orders $\alpha$-words lexicographically by comparing the restrictions
  to the intervals $\alpha_i$ via $\Ac_n$. 
  If $\Ac_n$ orders $\alpha_i$-words as some order $L_i$, then 
  $\Bc_n$ orders $\alpha$-words as the scattered sum 
  $\sum_{a_1\in L_1} \sum_{a_2\in L_2} \dots \sum_{a_m\in L_m} 1$ of
  one element orders which clearly is scattered again. 
  
  Suppose that $\FC(\alpha)\leq n+1$. Let $r$ be the accepting run of
  $\Cc_{n+1}$ on every 
  $\alpha$-word. 
  Recall that from Lemma \ref{lem:MaxRankCutsEmbedinZ}, we
  conclude that the cuts of rank $n$ embed into $\Z$, Thus, the cuts
  \begin{align*}
    C:=\{c \mid c \text{ minimal or maximal or } r(c)\in M_n\}
  \end{align*}
  are a suborder of $1+\Z+1$. 
  Given an $\alpha$-word $w$ we define $c(w)$ to be maximal element
  $c\in C$ such that $c<\supp(w)$ and define $d(w)$ to be the minimal
  element $c\in C$ such that $\supp(w)<c$. 
  We define $\Ac_{n+1}$ as
  follows.
  Given finite $\alpha$-words
  $v,w$, let $v\leq w$ if
  \begin{enumerate}
  \item $c(v) < c(w)$ , or
  \item $c(v) = c(w)$ and $d(v) < d(w)$, or
  \item $c(v) = c(w)$, $d(v)=d(w)$ and
    $\Bc_n$ applied to the interval between $c(v)$ and $d(v)$ reports 
    $v<w$. Note that $\FCstar((c(v),d(v))\leq n$ because the accepting
    run of $\Cc_{n+1}$ on $\alpha$ assumes only finitely many states
    of rank $n$ on this subinterval. Thus, also $\Cc_n$ accepts
    $(c(v),d(v))$-words. 
  \end{enumerate}
  This defines an $\alpha$-automatic linear order $(W_\alpha,\preceq)$
  on the set of finite $\alpha$-words $W_\alpha$. Since $\preceq$
  embeds into a $(1+\Z+1)^2$-sum  of scattered linear orders
  (induced by $\Bc_n$), where $(1+\Z+1)^2$ is ordered
  lexicographically, $(W_\alpha,\preceq)$   is scattered.  
\qed \end{proof} 

\begin{lemma}
  Let $\alpha$ be some ordinal. Then there is an $\alpha$-automatic
  well-order of all finite $\alpha$-words over an alphabet $\Sigma$.
\end{lemma}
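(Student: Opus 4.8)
The plan is to order the finite $\alpha$-words by comparing them at the \emph{largest} position at which they differ, treating $\diamond$ as the least letter. Fix a linear order $\diamond < \sigma_1 < \dots < \sigma_k$ on $\Sigma_\diamond$. For distinct finite $\alpha$-words $v,w$ the set of positions $l$ with $v(l)\neq w(l)$ is contained in $\supp(v)\cup\supp(w)$, hence finite and non-empty, so it has a greatest element $p$; I declare $v\prec w$ iff $v(p)<w(p)$. First I would check that $\prec$ is a well-order. Writing $\supp(v)=\{p_1<\dots<p_m\}$ and letting $d_j\in\{1,\dots,k\}$ be the index of $v(p_j)$ in $\Sigma$, I map $v$ to the ordinal $\Phi(v):=\omega^{p_m}\cdot d_m+\dots+\omega^{p_1}\cdot d_1$. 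This is a Cantor normal form (strictly decreasing exponents $p_m>\dots>p_1<\alpha$, coefficients in $\{1,\dots,k\}\subseteq\N$), so $\Phi$ is injective by uniqueness of the normal form, and a routine case distinction at the largest differing position $p$ (treating separately the subcases where both letters or exactly one letter at $p$ is $\diamond$) shows $v\prec w \iff \Phi(v)<\Phi(w)$. Hence $\prec$ is isomorphic to a set of ordinals and is therefore a total well-order.

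Next I construct the recognising automaton. Over the alphabet $\Sigma_\diamond\times\Sigma_\diamond$ I use the three-state automaton $\Dc$ with $Q=\{<,=,>\}$, $I=\{=\}$, $F=\{<,=\}$, whose successor transitions read a pair $(a,b)$ and move to state $<$ or $>$ according to the comparison of $a$ and $b$ if $a\neq b$, and keep the current state if $a=b$. The intended run assigns to every cut $c$ the comparison verdict determined by the largest differing position strictly below $c$ (and $=$ if there is none); since positions are read in increasing order, each new difference overrides the previous verdict, so the verdict at the maximal cut $(L,\emptyset)$ is exactly the comparison at the globally largest differing position. Thus $\Dc$ accepts $v\otimes w$ iff $v\preceq w$, and together with the domain automaton of Example~\ref{exa:FiniteWordsRec} recognising the finite $\alpha$-words, this exhibits $\preceq$ as an $\alpha$-automatic well-order.

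The one point that really needs verifying — and the only place where finiteness of support is essential — is the behaviour of $\Dc$ at limit cuts. Because $v$ and $w$ differ at only finitely many positions, approaching any cut $c$ without direct predecessor the verdict is eventually constant, so $\lim_{c^-}r$ is the singleton $\{s\}$ of that eventual verdict; I therefore include the right-limit transitions $(\{s\},s)$ for each $s\in Q$, which simply carry the state through the limit. The symmetric argument for cuts without direct successor yields $\lim_{c^+}r=\{r(c)\}$ and the left-limit transitions $(s,\{s\})$. The main obstacle is thus to argue that on every genuine convolution word each limit set is forced to be a singleton, so that the run computing the verdict is essentially unique and no spurious accepting run can arise; given that, correctness of $\Dc$ follows immediately from the verdict-at-largest-difference description above.
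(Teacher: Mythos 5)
Your proof is correct, and it defines essentially the same ordering as the paper (the paper first compares $\max(\supp(\cdot))$ and then goes backward-lexicographically below that position; treating $\diamond$ as the least letter and comparing at the largest differing position collapses these two cases into one, and in fact repairs a small glitch in the paper's formulation, which literally omits the case where the words differ exactly at their common support maximum). Where you genuinely diverge is in the proof of well-foundedness: the paper argues by contradiction, taking an infinite descending chain, stabilising $\max(\supp(w_i))$, and then iteratively extracting subsequences that agree at the successively largest differing positions, using finiteness of $\Sigma$ and well-foundedness of $\alpha$ to terminate. You instead embed the order directly into the ordinals via the Cantor normal form $\Phi(v)=\omega^{p_m}d_m+\dots+\omega^{p_1}d_1$ and check that $\prec$ corresponds to $<$ under $\Phi$ using left cancellation $\rho+x<\rho+y\iff x<y$. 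Your route is shorter and arguably more transparent, at the price of invoking CNF uniqueness and the arithmetic of ordinal addition; the paper's route is elementary but more laborious. On automaticity the paper says only ``apparently this order is $\alpha$-automatic,'' whereas you exhibit the three-state comparison automaton and correctly isolate the one point that needs care, namely that finiteness of the set of differing positions forces every limit set of the (unique, left-to-right deterministic) run to be a singleton, so that the singleton limit transitions suffice and no spurious accepting run exists. That last step is easy to complete exactly as you indicate, so I see no gap.
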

\begin{proof}
  Fix a linear order $\leq_\Sigma$ on $\Sigma$.
  Let $w,v$ be $\alpha$-words. We set
  $w<v$ if either $\max(\supp(w))<\max(\supp(v))$ or
  $\max(\supp(w)) = \max(\supp(v))$ and
  there is a $\beta<\max(\supp(w))$ such that 
  $w(\beta) <_\Sigma v(\beta)$ and for all $\alpha>\beta'>\beta$,
  $w(\beta')=v(\beta')$. 
  Apparently this order is $\alpha$-automatic. 
  Note that for $\alpha=\omega$ this is a the
  length-backward-lexicographic order (we first compare words with
  respect to size and words of the same size are compared
  lexicographically from the last letter to the first one). 
  In order to show that this defines a well-order, first note that it
  is reflexive, transitive and antisymmetric, i.e., a linear order.
  Heading for a contradiction, assume that there is some ordinal
  $\alpha$ such that the order on $\alpha$-words contains an infinite
  descending chain $w_1 > w_2 > w_3 > \dots$.
  The chain $\alpha_i:=\max(\supp(w_i))$ is a monotone
  decreasing sequence in $\alpha$. Since $\alpha$ is an ordinal, it
  stabilises at some $k\in\N$. We conclude that
  the sequence $v_j:=w_{k+j}$ satisfies
  $\max(\supp(v_{j}))= \alpha_k$ for all $j\in\N$.

  We now iterate the following argument: let $\alpha'<\alpha_k$ be
  maximal such that there are $v_j, v_k$ such that $v_j(\alpha') \neq
  v_k(\alpha')$. Since $\Sigma$ is finite, there is an infinite
  subsequence $v_{i_1} > v_{i_2} > \dots$ such that
  $v_{i_k}$ and $v_{i_j}$ agree at $\alpha'$, i.e.,
  $v_{i_k}(\alpha')=v_{i_j}(\alpha')$. Replace the sequence $v_{k}$ by
  the sequence $v_{i_k}$. Since this is an decreasing chain and above
  $\alpha'$ all $v_{i_k}$ agree, we can repeat this argument with some
  smaller $\alpha''<\alpha'$ which is maximal such that some $v_{i_k}$
  do not agree on $\alpha''$. 
  Since $\alpha$ is an ordinal and since
  $\alpha'>\alpha''>\alpha'''>\dots$, this sequence must be finite. 
  But this process terminates if and only if $v_{i_k} = v_{i_j}$ for
  all $j,k\in\N$. This contradicts the assumption that the $v_{i_j}$
  form a strictly decreasing infinite chain.
\qed \end{proof}

\subsection{Kleene-Brouwer Orders of Trees}
\label{sec:KleeneBrouwer}

Let $\Tf=(T, \sqsubseteq)$ be a tree and let $\Lf=(T, \preceq)$ be a
linear order. Then we can define
the  \emph{Kleene-Brouwer order} (also called Lusin-Sierpi\'nski
order)  $\KB(\Tf, \Lf):=(T, \lessdot)$ given by
$t \lessdot t'$ if either 
$t \sqsubseteq t'$ or there are $t\sqsubseteq s, t'\sqsubseteq s'$
such that  
$\{r\in T\mid s\sqsubset r\} = \{r\in T\mid s'\sqsubset r\}$ and 
$s\prec s'$. This
generalises the order induced by postorder traversal to 
infinitely branching trees where the children of each node are ordered
corresponding to the linear order $\preceq$. 
Since $\alpha$-oracle-automatic structures are closed under
first-order definitions, the following observation is immediate.
\begin{proposition}
  If $\Tf$ is an tree and $\Lf$ a linear order such that both are
  $\alpha$-oracle-automatic, then $\KB(\Tf, \Lf)$ is
  $\alpha$-oracle-automatic. 
\end{proposition}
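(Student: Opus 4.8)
The plan is to show that the Kleene--Brouwer relation $\lessdot$ is first-order definable from the tree order $\sqsubseteq$ and the linear order $\preceq$, and then to invoke the closure of finite word $\alpha$-oracle-automatic structures under first-order definable relations. Since $\Tf=(T,\sqsubseteq)$ and $\Lf=(T,\preceq)$ carry the same domain $T$ (as the definition of $\KB(\Tf,\Lf)$ already presupposes), they may be combined into a single structure $\mathfrak{M}:=(T,\sqsubseteq,\preceq)$. Assuming, as is implicit in the statement, that $\Tf$ and $\Lf$ are given by a common domain automaton and oracle, $\mathfrak{M}$ is again $\alpha$-oracle-automatic: one keeps the shared domain automaton together with the two automata recognising $\sqsubseteq$ and $\preceq$.

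Next I would simply transcribe the definition of $\KB(\Tf,\Lf)$ into a formula. For $t,t'\in T$,
\[
  t \lessdot t' \ \Longleftrightarrow\ t \sqsubseteq t' \ \lor\ \exists s\,\exists s'\,\bigl( t \sqsubseteq s \ \wedge\ t' \sqsubseteq s' \ \wedge\ \forall r\,(s \sqsubset r \leftrightarrow s' \sqsubset r) \ \wedge\ s \prec s'\bigr),
\]
where $s \sqsubset r$ abbreviates $s \sqsubseteq r \wedge s \neq r$. The only clause deserving a second look is the condition $\{r\mid s\sqsubset r\}=\{r\mid s'\sqsubset r\}$ appearing in the definition: although stated as an equality of sets, it is equivalent to the first-order subformula $\forall r\,(s\sqsubset r \leftrightarrow s' \sqsubset r)$, so no quantification over sets is required. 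Hence $\lessdot$ is a first-order definable relation over $\mathfrak{M}$.

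Finally I would apply the closure result for first-order definable relations. Since $\alpha$ is a countable scattered linear order, the class of finite word $\alpha$-oracle-automatic structures is closed under first-order definable relations, so $\lessdot$ is $\alpha$-oracle-automatic and therefore $\KB(\Tf,\Lf)=(T,\lessdot)$ is $\alpha$-oracle-automatic, as claimed. The entire content of the argument lies in producing the displayed formula and noting that it is genuinely first-order; there is no substantive obstacle, which is exactly why the statement can be asserted to be immediate.
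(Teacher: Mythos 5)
Your proposal is correct and matches the paper's argument: the paper treats this proposition as an immediate consequence of the closure of $\alpha$-oracle-automatic structures under first-order definable relations, which is exactly what you do after writing out the (evidently first-order) defining formula for $\lessdot$. The only content beyond the paper's one-line justification is your explicit observation that the set equality $\{r\mid s\sqsubset r\}=\{r\mid s'\sqsubset r\}$ unfolds to $\forall r\,(s\sqsubset r\leftrightarrow s'\sqsubset r)$, which is a fine and correct elaboration.
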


For the following section, it is important that
$(T, \lessdot)$ is scattered if
$(T, \preceq)$ is a scattered linear order. 

\begin{lemma}
  Let $\Tf=(T, \sqsubseteq)$ be a tree and $\Lf=(T, \preceq)$ a
  scattered linear order, then $\KB(\Tf, \Lf)=(T, \lessdot)$ is
  scattered.
\end{lemma}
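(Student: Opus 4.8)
The plan is to argue by contradiction: I would assume that $\Q$ embeds into $(T,\lessdot)$ and extract from the branching structure of $\Tf$ a copy of $\Q$ inside $(T,\preceq)$, contradicting the assumption that $\preceq$ is scattered. The whole argument rests on two structural facts about the Kleene--Brouwer order that I would record first. (a) For every $a\in T$ the subtree below $a$, say $T_a:=\{x\in T\mid x\sqsubseteq a\}$, is $\lessdot$-convex; this follows by a short case analysis of the branch of an intermediate point $z$ (with $x\lessdot z\lessdot y$ and $x,y\sqsubseteq a$) relative to the branch of $a$, using the definition of $\lessdot$. (b) The restriction of $\lessdot$ to $T_a$ is the ordered sum, over the children $c$ of $a$ listed in $\preceq$-order, of the orders on the $T_c$, with the single point $a$ placed on top; i.e. $(T_a,\lessdot)\cong\bigl(\sum_{c}(T_c,\lessdot)\bigr)+\{a\}$, where the index $(\{c\mid c\text{ a child of }a\},\preceq)$ is a suborder of $\preceq$ and hence scattered. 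The analogous decomposition holds at the top level, with the roots of $\Tf$ as index.

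Next I would invoke the closure property that a sum indexed by a scattered order of scattered orders is again scattered. Suppose $D\subseteq T$ with $(D,\lessdot)\cong\Q$. Applying (b) at the top level, $D$ is a $\preceq$-indexed (hence scattered) sum of the traces $D\cap T_{\rho}$ over the roots $\rho$; since $D$ is non-scattered and the index is scattered, some root $\rho_0$ has $D\cap T_{\rho_0}$ non-scattered. Iterating (b) inside $T_{\rho_0}$ yields either a child of $\rho_0$ whose trace is non-scattered, in which case I descend, or else the non-scatteredness is forced by the $\preceq$-order on the children of $\rho_0$ themselves. In the latter case I am done: by (a) distinct children occupy $\preceq$-ordered, $\lessdot$-disjoint convex intervals, so a non-scattered trace spread over the children produces a dense (i.e. $\Q$-like) suborder of the children under $\preceq$, contradicting scatteredness of $(T,\preceq)$.

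The main obstacle is ruling out the \emph{first} alternative occurring forever, i.e. an infinite $\sqsubseteq$-descending chain $\rho_0\sqsupset\rho_1\sqsupset\cdots$ along which the non-scatteredness keeps localising into a single deeper child. This is exactly the crux of the lemma, and it is where the finiteness of branches of the order forest must be used: since each node has only finitely many $\sqsubseteq$-ancestors, only finitely many split levels are available along any fixed branch, and I would leverage this — together with fact (a) — to show that the non-scatteredness cannot be postponed indefinitely but must at some node be caused by the $\preceq$-arrangement of that node's children, yielding the dense family of siblings and the contradiction. I expect making this ``cannot be postponed'' step precise to be the delicate part of the proof; it is the place where the well-foundedness of the forest in the upward direction (finite $\{a'\mid a\sqsubseteq a'\}$) enters essentially, and a careful formulation here is what turns the descent into a genuine contradiction and establishes that $(T,\lessdot)$ is scattered.
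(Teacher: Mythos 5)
Your structural facts (a) and (b) are correct and are exactly the engine of the paper's proof: the paper observes that $\KB(\Tf,\Lf)=\bigl(\sum_{t\in(T_0,\preceq)}\KB(\Tf(t),\Lf)\bigr)+t_0$, where $T_0$ is the set of children of the root ordered by the scattered order $\preceq$, and then closes the argument with the Hausdorff-type fact (Proposition 2.17 of Rosenstein) that a scattered sum of scattered orders is scattered. Where your proposal genuinely breaks is at the step you yourself flag as the crux: ruling out the infinite descent $\rho_0\sqsupset\rho_1\sqsupset\cdots$. The property you invoke there --- that each node has only finitely many $\sqsubseteq$-ancestors --- is just the order-forest condition, and it does \emph{not} forbid infinite branches; it only says branches are finite in the upward direction. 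The full binary tree (with the ancestor order reversed to match the paper's convention) satisfies the finite-ancestor condition, admits a scattered, even well-ordered, linear order $\preceq$ on its nodes, and yet its Kleene--Brouwer order contains a copy of $\Q$ (for instance the interval between the two children of the root already contains a dense suborder). So no argument resting on the finite-ancestor property alone can close the gap; the lemma is simply false without an additional hypothesis.

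The hypothesis that is actually needed, and that the paper uses, is well-foundedness of $\Tf$ (no infinite branches), which is the standing assumption throughout this section on well-founded forests. With it, your descent terminates trivially --- an infinite chain $\rho_0\sqsupset\rho_1\sqsupset\cdots$ is an infinite branch --- or, more cleanly, one replaces the contradiction-and-descent scheme by induction on $\rank(\Tf)$: at rank $1$ the tree is a single root; otherwise each $\KB(\Tf(t),\Lf)$ for $t$ a child of the root is scattered by the induction hypothesis (the subtree has strictly smaller rank), and decomposition (b) together with closure under scattered sums finishes the proof. If you repair your write-up by replacing the finite-ancestor appeal with well-foundedness (or by recasting the whole argument as this rank induction), the rest of what you wrote goes through.
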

\begin{proof}
  The proof is by induction on the rank of $\Tf$. If $\Tf$ has rank
  $1$, it consists only of the root whence $\KB(\Tf, \Lf)$ is the linear
  order of $1$ element which is scattered. 
  Otherwise, let $T_0$ be the set of children of the root and let
  $t_0$ be the root of $\Tf$. 
  $T_0$ induces a scattered suborder $(T_0, \preceq)$ of $\Lf$. 
  Now (abusing notation slightly) 
  $\KB(\Tf, \Lf) = \left(\sum_{t\in (T_0, \preceq)} \KB(\Tf(t), \Lf)\right) +
  t_0$ which is a scattered sum of 
  scattered orders. Proposition 2.17 in \cite{rosenstein82} shows that
  $\KB(\Tf, \Lf)$ is scattered. 
\qed \end{proof}

\subsection{Bounds for Forests on Scattered Orders of Finite Rank}
\label{sec:Treeboundsmall}

In this section, we prove the main theorem in the case that
$\FC(\alpha)$ is finite, $\alpha$ is an ordinal, or in general, the
set of finite $\alpha$-words allows a scattered linear order. In the next
Section we then prove the other cases.

\begin{lemma}\label{lem:rankAndKB}
  Let $\Tf=$ be a nonempty $\alpha$-oracle-automatic order tree with domain
  $T$ and $\Lf$ a scattered $\alpha$-oracle-automatic order with
  domain $T$. 
  If $\FCstar(\KB(\Tf, \Lf)) < \beta$,
  then $\infrank(\Tf)<\beta$. 
\end{lemma}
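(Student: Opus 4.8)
The plan is to prove the single inequality $\infrank(\Tf)\le\FCstar(\KB(\Tf,\Lf))$, from which the statement is immediate: if $\FCstar(\KB(\Tf,\Lf))<\beta$ then $\infrank(\Tf)\le\FCstar(\KB(\Tf,\Lf))<\beta$. For a node $t$ write $\Tf(t)$ for the subtree below $t$ and $\KB_t:=\KB(\Tf(t),\Lf)$. By the recursive description used in the preceding scatteredness lemma, $\KB_t$ is exactly the order induced by $\KB(\Tf,\Lf)$ on the interval $I_t$ consisting of $t$ together with its descendants, with $t$ the $\lessdot$-maximum of $I_t$, and $\KB_t=(\sum_{s\in(\mathrm{ch}(t),\preceq)}\KB_s)+\{t\}$, where $\mathrm{ch}(t)$ is the set of children of $t$. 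Since $\Tf$ has a single root $t_0$ and discarding the single node $t_0$ is irrelevant to an ``$\exists^\infty$'', one checks $\infrank(\Tf)=\infrank(t_0,\Tf)$ and $\KB(\Tf,\Lf)=\KB_{t_0}$, so it suffices to prove the node statement $\infrank(t,\Tf)\le\FCstar(\KB_t)$ for every node $t$.

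I would prove this node statement by well-founded induction along the tree order. It is essential to induct on the nodes (a structurally smaller subtree) rather than on the ordinal $\infrank(t)$, in order to avoid circularity in the case below; well-foundedness of $\Tf$, which also makes $\infrank$ meaningful and $\KB_t$ scattered so that $\FCstar$ is defined, is exactly what licenses this induction. Fixing $t$ and assuming the statement for all proper descendants, it suffices to show that whenever $\exists^\infty p<t$ with $\infrank(p,\Tf)\ge\alpha$ one has $\FCstar(\KB_t)\ge\alpha+1$: taking the supremum over all such $\alpha$ then yields $\FCstar(\KB_t)\ge\infrank(t,\Tf)$, since that supremum is by definition $\infrank(t,\Tf)$.

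The infinitely many witnesses $p<t$ split among the child-subtrees $\Tf(s)$, giving a dichotomy. Either some single $\Tf(s)$ contains infinitely many of them, so $\infrank(s,\Tf)\ge\alpha+1$ and the induction hypothesis gives $\FCstar(\KB_s)\ge\alpha+1$, which suffices because $\KB_s$ is an interval of $\KB_t$ and $\FCstar$ is monotone under taking suborders; or infinitely many distinct children $s$ each carry at least one witness $p\in\Tf(s)$, in which case the induction hypothesis applied to $p$ together with monotonicity gives $\FCstar(\KB_s)\ge\alpha$ for infinitely many children $s$. The second alternative is the crux, and I would settle it by a sum-rank lemma adapting Lemma~\ref{lem:FCstarSumsIncreaseRank}: if a scattered sum $\sum_{i\in J}L_i$ has infinitely many nonempty summands of $\FCstar$-rank $\ge\alpha$, then $\FCstar(\sum_{i\in J}L_i)\ge\alpha+1$. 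Since $J=(\mathrm{ch}(t),\preceq)$ is infinite and scattered, the high-rank indices contain a copy of $\omega$ or $\omega^*$; passing by monotonicity to the corresponding sub-sum and choosing one point from every other summand, these points lie in pairwise distinct $\sim_\alpha$-classes, because between two consecutive chosen points sits an entire summand of rank $\ge\alpha$, which has infinitely many $\sim_{\alpha'}$-classes for every $\alpha'<\alpha$ and so prevents the two endpoints from collapsing at stage $\alpha$. Hence $\KB_t/{\sim_\alpha}$ is infinite and $\FCstar(\KB_t)\ge\alpha+1$. I expect this second alternative to be the main obstacle: it requires both the node-level induction (so that the first alternative is not circular) and the careful verification that the $\alpha$-th condensation cannot merge the infinitely many high-rank summands. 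Note that automaticity of $\Tf$ and $\Lf$ is not used; the statement is purely order-theoretic, the automaticity only guaranteeing elsewhere that $\KB$ is scattered so that $\FCstar$ makes sense.
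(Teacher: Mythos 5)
Your proof is correct, and its combinatorial core is the same as the paper's: both arguments come down to locating infinitely many pairwise disjoint subtrees of $\infrank\geq\alpha$ whose Kleene--Brouwer orders sit as consecutive nonempty intervals of an infinite ($\omega$- or $\omega^*$-indexed) sum inside $\KB(\Tf,\Lf)$, and then combining monotonicity of $\FCstar$ under suborders with a ``sum of infinitely many rank-$\geq\alpha$ pieces has rank $\geq\alpha+1$'' principle. The scaffolding differs, though. The paper argues by contraposition and transfinite induction on the bound $\beta$, with separate successor and limit cases; in the successor case it extracts an arbitrary infinite antichain $d_1,d_2,\dots$ of nodes whose subtrees have $\infrank$ exactly $\beta'$, and applies Lemma~\ref{lem:FCstarSumsIncreaseRank} directly to the resulting $\gamma$-sum with $\gamma\in\{\omega,\omega^*,\zeta\}$. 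You instead prove the node-local inequality $\infrank(t,\Tf)\leq\FCstar(\KB(\Tf(t),\Lf))$ by well-founded induction on the tree, using the children decomposition of $\KB$ already implicit in the scatteredness lemma, together with a dichotomy on where the witnesses of $\infrank(t,\Tf)$ live. This buys a uniform treatment (no limit case, no need to normalize to subtrees of $\infrank$ \emph{exactly} $\beta'$), at the price of needing the mild generalization of Lemma~\ref{lem:FCstarSumsIncreaseRank} that you state --- infinitely many summands of rank $\geq\alpha$ interleaved with arbitrary others --- whose proof by selecting every other high-rank index and checking that the interposed summand blocks $\sim_\alpha$-collapse is fine. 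Your closing remark that automaticity is never used is likewise consistent with the paper's proof.
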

\begin{proof}
  The proof is by contraposition and induction on $\beta$. 
  \begin{itemize}
  \item If $\beta=0$, there is nothing to show.
  \item Assume that $\infrank(\Tf)=\beta=\beta'+1$ and for each tree
    $\Tf'$ with $\infrank(\Tf')=\beta'$ we have
    $\FCstar(\KB(\Tf', \Lf)) \geq \beta'$. 
    By definition of $\infrank(\Tf)$ there is an infinite antichain
    $d_1, d_2, d_3,\dots$ in $\Tf$ such that the subtree
    $\Tf(d_i)$ rooted at $d_i$ satisfies
    $\infrank(\Tf(d_i))=\beta'$. By induction hypothesis, 
    $\FCstar( \KB(\Tf(d_i), \Lf))\geq \beta'$. 
    Moreover, $\Lf$ orders $\{d_i\mid i\in\N\}$ as order type 
    $\gamma\in\{\omega, \omega^*, \zeta\}$
    Thus, $\KB(\Tf, \Lf)$ contains a suborder of the form
    $\sum_{x\in \gamma} \KB(\Tf(d_x), \Lf))$ with
    $\FCstar(\KB(\Tf(d_x), \Lf))=\beta'$. 
    Due to Lemma \ref{lem:FCstarSumsIncreaseRank}, 
    we conclude that
    \begin{align*}
      \FCstar(\KB(\Tf, \Lf)) \geq 
      \FCstar(\sum_{x\in \gamma} \KB(\Tf(d_x), \Lf))       
      = \beta'+1 = \beta.      
    \end{align*}
  \item Assume that $\infrank(\Tf)=\beta$ is a limit ordinal. 
    By definition for each $\beta'<\beta$ there is $d\in \Tf$ such
    that
    $\infrank(\Tf(d)) \geq \beta'$ whence
    $\FCstar(\KB(\Tf(d), \Lf )) \geq \beta'$ by induction. 
    Thus, $\FCstar(\KB(\Tf, \Lf)) \geq \sup\{\beta' \mid \beta'<\beta\} =
    \beta$. 
  \end{itemize}
\qed \end{proof}

\begin{corollary}
  \label{cor:rankAndKB}
  Let $\Tf$ be a nonempty $\alpha$-automatic order tree. 
  If $\FC(\KB(\Tf, \Lf)) < \beta$,
  then $\infrank(\Tf)<\beta+1$.   
\end{corollary}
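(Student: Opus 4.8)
The plan is to derive the corollary directly from Lemma~\ref{lem:rankAndKB} by exploiting the elementary relationship $\FCstar(\Lf)\leq\FC(\Lf)\leq\FCstar(\Lf)+1$ recorded in the preliminaries. First I would observe that the hypothesis $\FC(\KB(\Tf,\Lf))<\beta$ together with $\FCstar\leq\FC$ immediately gives $\FCstar(\KB(\Tf,\Lf))\leq\FC(\KB(\Tf,\Lf))<\beta$. Applying Lemma~\ref{lem:rankAndKB} with this bound then yields $\infrank(\Tf)<\beta$, which is even stronger than the claimed $\infrank(\Tf)<\beta+1$. So at first glance the corollary looks like an immediate and slightly wasteful weakening of the lemma.

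The subtlety I would check is the shift from ``oracle-automatic'' in the lemma to plain ``automatic'' in the corollary, and whether the intended reading of the corollary is in terms of a strict versus non-strict inequality that matters when $\beta$ is a limit. Since $\FCstar(\Lf)$ and $\FC(\Lf)$ differ by at most one, the bound $\FC(\KB(\Tf,\Lf))<\beta$ is no weaker than $\FCstar(\KB(\Tf,\Lf))<\beta$, so the hypothesis of the corollary entails the hypothesis of the lemma verbatim. The role of the $+1$ in the conclusion is therefore simply to package the result in a form where one feeds in an $\FC$-bound (which is the quantity controlled by Theorem~\ref{thm:BoundsonScatteredOrders}) rather than an $\FCstar$-bound, at the cost of relaxing the output bound to account for the possible off-by-one discrepancy between the two ranks.

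Concretely, the proof I would write is short. Assume $\Tf$ is a nonempty $\alpha$-automatic order tree and that $\FC(\KB(\Tf,\Lf))<\beta$. Since every scattered linear order $\Kf$ satisfies $\FCstar(\Kf)\leq\FC(\Kf)$ (as noted after the definition of the finite condensation rank), we obtain $\FCstar(\KB(\Tf,\Lf))\leq\FC(\KB(\Tf,\Lf))<\beta$. Now $\KB(\Tf,\Lf)$ is scattered by the preceding lemma (since $\Lf$ is a scattered $\alpha$-automatic order), and $\Lf$ is an $\alpha$-oracle-automatic scattered order on $T$, so the hypotheses of Lemma~\ref{lem:rankAndKB} are met. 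Applying that lemma gives $\infrank(\Tf)<\beta\leq\beta+1$, which is the desired conclusion.

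I expect no real obstacle here; the only point demanding a moment's care is confirming that the monotonicity $\FCstar\leq\FC$ points in the direction needed to transfer the hypothesis, and that the $\alpha$-automatic tree together with the ambient scattered order $\Lf$ genuinely instantiate the oracle-automatic setting of Lemma~\ref{lem:rankAndKB}. Both are routine: automatic presentations are a special case of oracle-automatic presentations (with the constant $\diamond$ oracle), and the scatteredness of $\KB(\Tf,\Lf)$ is exactly the content of the lemma proved immediately before. Thus the corollary is genuinely a one-line consequence once the inequality between the two condensation ranks is invoked.
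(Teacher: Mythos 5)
Your derivation is correct and matches the paper's (implicit) argument: the corollary is stated without proof immediately after Lemma~\ref{lem:rankAndKB}, and the inequality $\FCstar\leq\FC$ transfers the hypothesis to that lemma exactly as you describe. You rightly observe that this in fact yields the stronger bound $\infrank(\Tf)<\beta$, so the $+1$ in the stated conclusion is mere slack; note that the paper's subsequent application (in the proof of Theorem~\ref{thm:BoundsOnTreeRanksDet} part~(1)) actually relies on the $<\beta$ form your argument delivers.
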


Combining this result with our bound on the $\FC$ ranks of
$\alpha$-oracle-automatic we can now prove the first part of 
Theorem \ref{thm:BoundsOnTreeRanksDet}.

\begin{proof}[Proof of Theorem \ref{thm:BoundsOnTreeRanksDet} part (1)] 
  Assume that  $\Tf=(T,\leq)$ is an
  $\alpha$-oracle-automatic 
  order tree  such that
  $\Lf$ is an $\alpha$-oracle-automatic scattered order with domain $T$. 
  Since $\KB(\Tf, \Lf)$ is an 
  $\alpha$-oracle-automatic scattered linear order,
  $\FC(\KB(\Tf, \Lf)) <\omega^{\gamma+1}$ due to Theorem 
  \ref{thm:BoundsonScatteredOrders}. 
  Due to Corollary \ref{cor:rankAndKB}, $\infrank(\Tf)<
  \omega^{\gamma+1}$.  
  By application of Lemma \ref{lem:RankAndInfRank} we finally obtain
  $\rank(\Tf)<\omega^{1+\gamma+1}$. 

  Note that this result easily extends to forests because for each
  $\alpha$-oracle-automatic forest, we can turn it into a
  $\alpha$-oracle-automatic tree by
  adding a new root. This tree has the same $\infrank$ as the forest
  we started with. 
\qed \end{proof}

\subsection{Bounds for Forests on Scattered Orders 
  of Infinite Rank} 
\label{sec:Treeboundlarge}

Since we do not know whether there is an $\alpha$-automatic scattered
linear ordering of all finite $\alpha$-words for all linear orders
$\alpha$  with $\FC(\alpha)\geq \omega$, we have to do a direct analysis of
the sum-of-box decompositions of $\alpha$-automatic forests. 
Fortunately, we can rely on the analogous analysis in the case of
tree-automatic structures from \cite{KaLiLo12}.
The essence of this analysis can be rewritten as the following result.
\begin{theorem} \label{thm:treerank-indecomposable}
  \cite{KaLiLo12}
  If $\Ff$ is a forest that is tamely-colourable
  sum-of-box-augmentation of classes $\Cc_1, \dots, \Cc_k$ such that
  for all 
  structures $\Ff'\in \bigcup_{i=1}^k \Cc_1$ we have  
  $\infrank(\Ff')\neq \omega^\alpha$, 
  then $\infrank(\Ff) \neq \omega^\alpha$. 
\end{theorem}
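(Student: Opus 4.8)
The plan is to prove the contrapositive through a pair of complementary bounds on $\infrank$ under tamely-colourable sum-of-box-augmentations, combined with the additive indecomposability of $\omega^\alpha$. Write $\Ff$ as a sum-augmentation of boxes $\Bf_1, \dots, \Bf_n$, where the $i$-th summand $\Bf_i$ is a box-augmentation of building blocks $\Bf_{i,1}, \dots, \Bf_{i,k}$ with $\Bf_{i,j}\in\Cc_j$. I would first record an elementary \emph{monotonicity lemma}: if a well-founded poset $\Pf'$ is isomorphic to an induced subposet of $\Pf$, then $\infrank(\Pf')\leq\infrank(\Pf)$; this follows by a routine transfinite induction, since the set of predecessors of a node in $\Pf'$ is contained in its set of predecessors in $\Pf$. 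Each $\Bf_{i,j}$ is isomorphic to a line inside $\Bf_i$ (fix all box-coordinates except the $j$-th), and each $\Bf_i$ is an induced subposet of $\Ff$; applying monotonicity twice yields the lower bound $\infrank(\Ff)\geq\infrank(\Bf_{i,j})$ for every building block.

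The heart of the matter is the complementary upper bound: \emph{if every building block $\Bf_{i,j}$ satisfies $\infrank(\Bf_{i,j})<\omega^\alpha$, then $\infrank(\Ff)<\omega^\alpha$.} I would obtain this from an $\infrank$-compatibility result for tamely-colourable sum-of-box-augmentations, proved by induction on the ranks of the building blocks and exploiting that \emph{all} edges of $\Ff$ -- both inside a single box and across distinct summands -- factor through the finitely many coordinate colourings $\varphi_1,\dots,\varphi_k$ via the acceptance set $M$. The bound expresses $\infrank(\Ff)$ as a finite natural-sum combination of the building-block ranks $\infrank(\Bf_{i,j})$ (the $\infrank$-analogue of Lemma \ref{lem:FCStarAndBox}) together with a finite additive correction accounting for the at most $n$ summands. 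Since $\omega^\alpha$ is additively indecomposable -- one has $\beta_1\oplus\dots\oplus\beta_m<\omega^\alpha$ whenever all $\beta_i<\omega^\alpha$, and for $\alpha\geq 1$ the limit ordinal $\omega^\alpha$ also absorbs any finite additive correction -- every such combination of ordinals below $\omega^\alpha$ stays below $\omega^\alpha$, giving $\infrank(\Ff)<\omega^\alpha$. The degenerate base case $\alpha=0$ is handled directly: $\infrank(\Bf_{i,j})<\omega^0=1$ forces every building block to be finite, so $\Ff$ is a finite sum of finite boxes, hence finite, with $\infrank(\Ff)=0<1$.

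With both bounds available the theorem is immediate. Assume every building block has $\infrank\neq\omega^\alpha$. If some block has $\infrank>\omega^\alpha$, the lower bound gives $\infrank(\Ff)\geq\infrank(\Bf_{i,j})>\omega^\alpha$; otherwise every block has $\infrank<\omega^\alpha$, and the upper bound gives $\infrank(\Ff)<\omega^\alpha$. In either case $\infrank(\Ff)\neq\omega^\alpha$. I expect the main obstacle to be the box-layer estimate $\infrank(\Bf_i)\leq\bigoplus_{j}\infrank(\Bf_{i,j})$: this is exactly where tame colourability is indispensable, since without the coordinatewise factorisation of the order the cross-coordinate edges could raise the rank uncontrollably, and the estimate must be established by a simultaneous induction that tracks how the finitely many colours of each coordinate interact through $M$. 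The sum-augmentation layer is milder but genuinely nontrivial, because (as simple antichain examples already show) a sum-augmentation can raise $\infrank$ by a finite amount; the saving grace is that, the number of summands being finite, this increase is finite and therefore invisible to the additively indecomposable limit ordinal $\omega^\alpha$.
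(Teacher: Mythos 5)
The paper does not prove this statement at all: it is imported verbatim from \cite{KaLiLo12} as a citation, so there is no internal proof to compare against. Your proposal is therefore an attempt to reprove the cited result, and while its outer logic is sound, it has a genuine gap exactly where the real content lies. The parts you do carry out are fine: monotonicity of $\infrank$ under induced substructures is correct and gives the lower bound; and the final case split (some block above $\omega^\alpha$ forces $\infrank(\Ff)>\omega^\alpha$; otherwise the hypothesis forces all blocks strictly below $\omega^\alpha$ and you invoke the upper bound) correctly reduces the theorem to the single claim that a tamely colourable sum-of-box-augmentation of forests all of whose building blocks have $\infrank<\omega^\alpha$ itself has $\infrank<\omega^\alpha$.

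That claim, in the form you state it --- $\infrank(\Bf_i)\leq\bigoplus_j\infrank(\Bf_{i,j})$ up to a finite correction for the box layer --- is precisely the hard theorem of \cite{KaLiLo12}, and your proposal never actually proves it; it only asserts that it follows ``by induction on the ranks of the building blocks'' using the colourings. Note that the analogous bound is \emph{false} for general well-founded partial orders: the paper's own example in the appendix exhibits $\Pf=(\alpha\times\beta,\sqsubseteq)$, a tamely colourable box-augmentation of two antichains of rank $1$, with $\rank(\Pf)=\rank(\alpha)$ arbitrary. So any correct proof must locate exactly where the forest condition (every up-set is a finite chain) combines with tame colourability to block this explosion --- for instance, by analysing how the finite chain above $\eta(b_1,\dots,b_k)$ forces the ``diagonal'' comparabilities $\eta(\bar b')<\eta(\bar b)$ to be controlled by the coordinatewise ones. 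Your sketch does not identify this mechanism, and without it the induction you gesture at has no base to stand on. As written, the proposal is a correct reduction of the theorem to the main lemma of \cite{KaLiLo12} together with easy glue, not a proof of the theorem.
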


Using this decomposition result, the second part of
Theorem \ref{thm:BoundsOnTreeRanksDet} is obtained by induction.

\begin{proof}[Proof of Theorem \ref{thm:BoundsOnTreeRanksDet} part (2)] 
  Because of the first part of this theorem and Lemma 
  \ref{lem:alpha-autmatic-words-are-scattered-ordered}, the claim for
  orders $\alpha$ with $\FC(\alpha)<\omega$ has already been proved. 
  
  We now establish the following claim.
  Assume that $\alpha$ is a scattered linear order of rank
  $\FC(\alpha)=\gamma \geq \omega$.   
  Let $\delta \geq \omega $ be an ordinal such that for all $\alpha'$ with
  $\FC(\alpha')< \gamma$ and all 
  $\alpha'$-oracle-automatic forests $\Ff'$, 
  $\infrank(\Ff')<\omega^\delta$. 
  Then every $\alpha$-oracle-automatic forest $\Ff$ satisfies
  $\infrank(\Ff) < \omega^{\delta + \omega}$.
    
  Heading for a contradiction assume that $\Ff$ is an
  $\alpha$-oracle-automatic forest with 
  \mbox{$\infrank(\Ff)\geq \omega^{\delta + \omega}$.}
  Then there is a parameter automaton $\Ac$ (corresponding to the
  formula $x<y$ and parameters $p_n$ for $n\in\N$ such that
  $\infrank(\Ff_{p_n}) = \omega^{\delta + n}$. 
  Assume that $\Ac$ has $q$ many states and the order automaton of
  $\Ff$ has $q_<$ many states. 
  Now fix $n_0 > 2 \exp(q^2+2q_<^2)$. 
  Due to Theorem \ref{thm:Decomposition}, 
  there are sets
  $\Cc_0, \Cc_1$ of size $\exp(q^2+2q_<^2)$
  such that for each $n\leq n_0$, $\Ff_{p_n}$ is a tamely-colourable
  sum-of-box augmentation of $\Cc_0, \Cc_1$ and 
  some sets  of $\alpha_i$-oracle-automatic structures where
  $\FC(\alpha_i)< \gamma$ for each $i$.
  By choice of $n_0$, there is some $1 \leq m \leq n_0$ such that
  \begin{equation*}
     \infrank(\Af) \neq \omega^{\delta + m}
  \end{equation*}
  for all  structures  $\Af\in \Cc_0\cup \Cc_1$.
  Moreover, by definition of $\delta$ every $\alpha_i$-oracle-automatic
  forest has  $\infrank$ strictly below $\omega^\delta$. 
  Thus, $\Ff_{p_{m}}$ is a tamely-colourable sum-of-box augmentation 
  of classes of structures such that none of these
  structures has $\infrank$  $\omega^{\delta + m}$.
  But this contradicts directly  Theorem
  \ref{thm:treerank-indecomposable} because
  $\infrank(\Ff_{p_{m}} ) = \omega^{\delta+m}$.

  Using Lemma \ref{lem:RankAndInfRank} this claim carries 
  over from $\infrank$ to rank because 
  for
  $\gamma\geq \omega$ some forest has 
  rank strictly below $\omega^\gamma$ if and only if
  it has $\infrank$ strictly below $\omega^\gamma$ (note that
  $\omega\cdot \omega^\gamma = \omega^\gamma$).
  
  The proof of the theorem now follows by a straightforward induction
  on $\FC(\alpha)$ using the claim proved above. 
\qed \end{proof}

\subsection{Optimality of the Bounds on Forests}
\label{sec:HighRanks}

The upper bounds on the ranks of trees stated in the first part of
Theorem \ref{thm:BoundsOnTreeRanksDet} are optimal in the sense that we
can reach all lower ranks as stated in the following theorem. 

\begin{theorem}\label{thm:TreesLowerBounds}
  \begin{enumerate}
  \item For all  $i,c\in\N$ there is an $\omega^i$-automatic tree
    $\Tf_{i,c}$ with $\rank(\Tf_{i,c})=\omega^{i}\cdot c$.
  \item 
    For all ordinals $\gamma\geq\omega$ and all $c\in\N$, 
    there is an
    $\omega^{1+\gamma}$-automatic tree $\Tf_{\gamma,c}$ with
    $\rank(\Tf_{\gamma,c})= \omega^\gamma \cdot c$. 
  \end{enumerate}
\end{theorem}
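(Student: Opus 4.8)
The plan is to give a single construction proving both parts at once. Writing $\delta$ for the relevant exponent, I set $\Lf=\omega^\delta$, where for part~(1) $\delta=i\in\N$ and for part~(2) $\delta=\gamma\geq\omega$; in the latter case $1+\gamma=\gamma$, so indeed $\omega^{1+\gamma}=\omega^\gamma=\Lf$. The target rank is then uniformly $\omega^\delta\cdot c$ in both cases, and I build one $\omega^\delta$-automatic forest of exactly that rank.

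The nodes of $\Tf_{\delta,c}$ will be the nonempty finite subsets of the ordinal $\omega^\delta\cdot c$. I encode such a subset $S$ as a finite $\omega^\delta$-word over the alphabet $2^{\{0,\dots,c-1\}}$ by letting the letter at position $\beta<\omega^\delta$ be $\{k<c\mid (k,\beta)\in S\}$, where I identify $\omega^\delta\cdot c$ with $\{0,\dots,c-1\}\times\omega^\delta$ ordered with the block index $k$ \emph{most} significant (the empty letter $\emptyset$ playing the role of $\diamond$). The forest order is the \emph{top-segment} (ancestor) relation: $t$ is an ancestor of $s$ iff $t=\{x\in s\mid x\geq\mu\}$ for some $\mu\in\omega^\delta\cdot c$, i.e.\ $t$ arises from $s$ by deleting a downward part. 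For each $s$ there are only finitely many such $t$, linearly ordered by inclusion, so this is genuinely an order forest in the sense of the definition of Section~\ref{sec:Automata}.

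I would then verify automaticity and compute the rank. The domain (nonempty finite $\omega^\delta$-words over $2^{\{0,\dots,c-1\}}$) is $\omega^\delta$-automatic as in Example~\ref{exa:FiniteWordsRec}. For the order, an automaton reads the convolution $s\otimes t$ along $\beta<\omega^\delta$ while holding in finite memory a guessed threshold block $k^\ast<c$ and a single monotone flag recording whether the read position has passed a guessed threshold position $\beta^\ast$; at each position it checks that $t$ retains exactly the elements of $s$ lying above $(k^\ast,\beta^\ast)$. Thus the ancestor relation is $\omega^\delta$-automatic, and by closure under first-order definitions (the Proposition of Section~\ref{sec:Automata}, applicable since $\omega^\delta$ is scattered) so is the induced forest. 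For the rank, the descendants of a nonempty $S$ are exactly the sets $S'\supseteq S$ whose extra elements lie strictly below $\min S$, with new minimum any $\mu<\min S$; hence $\rank(S)=\sup_{\mu<\min S}(\mu+1)=\min S$. The roots are the singletons $\{x\}$, of rank $x$, so the forest has rank $\sup_{x<\omega^\delta\cdot c}(x+1)=\omega^\delta\cdot c$, as required.

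The main obstacle I anticipate is the automaton bookkeeping at limit cuts: one must check that the guessed threshold is realised consistently by the right- and left-limit transitions of a Bruy\`ere--Carton automaton over all of $\omega^\delta$, so that no spurious runs arise at limit positions; this is the only non-routine point, the rank computation being elementary. A minor caveat is that $\omega^\delta\cdot c$ is a limit ordinal, so no single-rooted tree can have exactly that rank; the construction naturally produces a forest, which is what the forest bounds of Theorem~\ref{thm:BoundsOnTreeRanksDet} concern, and adjoining a common root yields a tree of rank $\omega^\delta\cdot c+1$, without affecting the optimality statement $\sup_c\omega^\delta\cdot c=\omega^{\delta+1}$.
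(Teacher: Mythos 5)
Your construction is correct, and it takes a genuinely different route from the paper's --- one that is in fact more uniform. The paper proves the two parts separately: for part (1) it starts from a concrete $\omega$-automatic tree $\Tf_0$ of rank $\omega$ and bootstraps, combining Lemma \ref{lem:TreeRankFactorcViaConvolution} (attaching copies of a tree at its own leaves via convolution, which multiplies the rank by $c$) with a lemma that turns an $\omega^i$-automatic tree $\Tf$ into an $\omega^{i+1}$-automatic presentation of $\bigsqcup_{c\in\N}\Tf_c$, iterated $i$ times; for part (2) it uses the suffix-ordered tree $\Tc_\alpha=(D_\alpha,\suffix{\alpha})$ on a restricted set of $\{\diamond,1\}$-words, establishes $\rank(\Tc_{\omega\cdot\alpha'})=\alpha'$ by a separate induction, and again invokes Lemma \ref{lem:TreeRankFactorcViaConvolution} to obtain the factor $c$. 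Your single forest of nonempty finite subsets of $\omega^\delta\cdot c$ under the final-segment order, with $\rank(S)=\min S$ (your induction on $\min S$ is sound: every $\mu<\min S$ is realised as $\min S'$ for the descendant $S'=S\cup\{\mu\}$, and all descendants have smaller minimum), subsumes both parts at once and absorbs the factor $c$ into the alphabet $2^{\{0,\dots,c-1\}}$, so you avoid both the iteration and the convolution lemma. What you give up is the reusable intermediate machinery the paper builds along the way; what you take on is the one non-routine obligation you correctly identified, namely writing out the Bruy\`ere--Carton automaton for the final-segment relation --- this does go through, since the guessed block index $k^\ast$ is constant, the threshold flag is monotone, and over an ordinal only left-limit transitions arise, so the cofinal-state sets determine the flag at limit cuts without spurious runs. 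Your closing caveat is also well taken: under the paper's definition $\rank(\Af)=\sup\{\rank(a,\Af)+1\mid a\in A\}$, a single-rooted tree always has successor rank, so the limit values $\omega^i\cdot c$ are attained exactly only by forests (on a literal reading the paper's own $\Tf_0$ has rank $\omega+1$, not $\omega$); since the optimality claim concerns suprema and Theorem \ref{thm:BoundsOnTreeRanksDet} is stated for forests, this is harmless, and your version is the more careful one on this point.
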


In order to prove the first part of Theorem
\ref{thm:TreesLowerBounds},  we
want to  construct  for all $i\in\N$ and $c\in\N$ an
$\omega^i$-automatic tree of $\infrank$  $\omega^{i-1}\cdot c$ and
rank $\omega^{i}\cdot c$. 

We define a finite word
$\omega$-automatic tree as follows. Let $T=(\{\varepsilon\} \cup
\{ (n,m) \mid n\leq m \}$ and $\Tf_0=(T, \leq)$ where
\begin{align*}
  &\varepsilon \leq t \text{ for all }t\in T, \\
  &(n,m) \leq (n',m') \text{ if } m=m' \text{ and } n\leq n'.
\end{align*}
$\Tf_0$ is clearly well-founded, finite word $\omega$-automatic, and
satisfies $\infrank(\Tf_0)=1$ and $\rank(\Tf_0)=\omega$. 

Next, we show that for any $i,c\in\N$ and any given $\omega^i$-automatic
tree $\Tf$ there is also an $\omega^i$-automatic tree $\Tf'$ such that
$\infrank(\Tf')=\infrank(\Tf)\cdot c$ and $\rank(\Tf)=\rank(\Tf)\cdot c$. 

\begin{lemma}\label{lem:TreeRankFactorcViaConvolution}
  Let $c\in\N$ and $\Tf$ an  $\alpha$-automatic tree. 
  Then there is
  an $\alpha$-automatic tree $\Tf_c$ such that
  $\infrank(\Tf_c)=\infrank(\Tf)\cdot c$ and $\rank(\Tf)=\rank(\Tf)\cdot c$. 
\end{lemma}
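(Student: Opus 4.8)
The plan is to realise $\Tf_c$ as the $c$-fold iterated \emph{graft} of $\Tf$ onto itself: intuitively we stack $c$ disjoint copies $\Tf^{(1)}, \dots, \Tf^{(c)}$ of $\Tf$ into a tower, with $\Tf^{(1)}$ at the bottom, by identifying the root of $\Tf^{(k)}$ with each leaf (i.e.\ each $\leq$-minimal node) of $\Tf^{(k+1)}$. Concretely, a node of $\Tf_c$ is described by a level $k\in\{1,\dots,c\}$, a choice of leaves $\ell_c, \ell_{c-1}, \dots, \ell_{k+1}$ of $\Tf$ recording through which leaf one descends out of each higher copy, and an arbitrary node $u$ of the $k$-th copy. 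I would encode this tuple as the convolution $w_1\otimes w_2\otimes\dots\otimes w_c$ of $c$ many $\Tf$-words (padding the unused components $w_1,\dots,w_{k-1}$ with $\diamond$), so that the domain of $\Tf_c$ is again a set of $\alpha$-words over the product alphabet.

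First I would argue that $\Tf_c$ is $\alpha$-oracle-automatic. The predicates ``$w_j$ lies in the domain of $\Tf$'' (run $c$ copies of the domain automaton in parallel on the components of the convolution), ``$w_j$ is a leaf'', ``$k$ is the active level'', and the grafted order $\leq_c$ are all first-order definable from $\leq$ and the leaf predicate of $\Tf$ together with the coordinatewise structure of the convolution. Since the finite word $\alpha$-oracle-automatic structures are closed under first-order definable relations, $\Tf_c$ is $\alpha$-oracle-automatic; that it is a well-founded order forest follows because every $\leq_c$-branch meets only the $c$ copies and the up-set of any node is the concatenation of finitely many finite $\Tf$-chains, hence a finite chain.

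The heart of the argument is the rank computation. I would prove, by induction on $\rank_\Tf(u)$ (respectively $\infrank_\Tf(u)$), the pointwise identities
\begin{align*}
  \rank_{\Tf_c}(k,u) = \rank(\Tf)\cdot(k-1) + \rank_\Tf(u)
  \quad\text{and}\quad
  \infrank_{\Tf_c}(k,u) = \infrank(\Tf)\cdot(k-1) + \infrank_\Tf(u).
\end{align*}
The base case is the identification of a leaf of copy $k+1$ with the root of copy $k$, which contributes exactly one further summand $\rank(\Tf)$; the inductive step uses that the nodes lying inside the grafted lower copies are dominated by the grafted root and that ordinal addition is continuous in its right argument, so that taking suprema over the children commutes with the outer summand $\rank(\Tf)\cdot(k-1) + {}$. (Here the order of summation is essential: the contribution of the tower appears on the left, so one obtains $\rank(\Tf)\cdot(k-1)+\rank_\Tf(u)$ and not the collapsing $\rank_\Tf(u)+\rank(\Tf)\cdot(k-1)$.) Evaluating these identities at the root of the top copy $\Tf^{(c)}$, where $u$ is the root of $\Tf$ and hence $\rank_\Tf(u)=\rank(\Tf)$ (resp.\ $\infrank_\Tf(u)=\infrank(\Tf)$), yields $\rank(\Tf_c)=\rank(\Tf)\cdot c$ and $\infrank(\Tf_c)=\infrank(\Tf)\cdot c$, as required.

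The step I expect to be the main obstacle is the $\infrank$ half of the pointwise identity: unlike ordinary rank, $\infrank$ is governed by the $\exists^\infty$ quantifier, so I must check that grafting a copy below \emph{every} leaf genuinely raises the infinity rank by the full summand $\infrank(\Tf)$ — in particular that the grafted roots supply infinitely many nodes of each relevant infinity rank below a given node — and that no spurious increase occurs at the finitely many junction nodes. This is the forest-theoretic analogue of Lemma \ref{lem:FCstarSumsIncreaseRank}, and I would handle the limit stages by the same continuity-of-suprema bookkeeping used there.
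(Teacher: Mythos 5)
Your construction is, at its core, the paper's own: the paper sets $T_c=\bigcup_{i=0}^{c-1}L^{\otimes i}\otimes T$ with $L$ the (first-order definable, hence automatic) set of leaves, so that a node is a convolution of a tuple of leaves recording the descent path together with an arbitrary node of $\Tf$; it orders these so that $\Tf_{c+1}$ is $\Tf_c$ with a fresh copy of $\Tf$ attached below each leaf, and settles the rank claim by ``an easy induction on $c$''. Your encoding, your appeal to closure under first-order definitions, and your pointwise induction are that argument spelled out in more detail.

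The one point where you deviate is the junction convention, and it introduces a genuine error. You \emph{identify} the root of the grafted copy with the leaf it is grafted onto, and at the end you assert $\rank_\Tf(u)=\rank(\Tf)$ for $u$ the root. Under the paper's definition $\rank(\Af)=\sup\{\rank(a,\Af)+1\mid a\in A\}$ a tree satisfies $\rank(\Tf)=\rank(\text{root},\Tf)+1$, so that assertion is off by one, and your pointwise identity becomes inconsistent at the junction nodes: reading such a node as a leaf of copy $k+1$ gives rank $\rank(\Tf)\cdot k$, while reading it as the root of copy $k$ gives $\rank(\Tf)\cdot(k-1)+\rank_\Tf(\text{root})$, and these always differ by one. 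Concretely, for $\Tf$ a two-element chain ($\rank(\Tf)=2$) your identified $\Tf_2$ is a three-element chain of rank $3$, not $\rank(\Tf)\cdot 2=4$; in general identification yields $\rank_\Tf(\text{root})\cdot c+1$, which equals $\rank(\Tf)\cdot c$ only when $\rank_\Tf(\text{root})$ is infinite. The paper avoids this by keeping the old leaf and the new root as \emph{distinct} elements (the fresh copy hangs strictly below the leaf), which makes the leaf's new rank exactly $\rank(\Tf)$ and makes your formula $\rank_{\Tf_c}(k,u)=\rank(\Tf)\cdot(k-1)+\rank_\Tf(u)$ correct as stated; this also removes the non-injectivity of your encoding at the junctions. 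The $\infrank$ half of your argument is essentially unaffected, since $\infrank(\Tf)$ coincides with the $\infrank$ of the root, and the remaining concerns you raise (continuity of left addition, the $\exists^\infty$ bookkeeping) are handled exactly as you anticipate.
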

\begin{proof}
  Let $\Tf = (T, \leq)$ and $L\subseteq T$ be the set of leaves of
  $\Tf$ ($L$ is $\alpha$-automatic because it is first-order definable
  if $\Tf$). 
  Set $T_c = \bigcup_{i=0}^{c-1} L^{\otimes i} \otimes T$ where
  $L^{\otimes 1} = L$ and $L^{\otimes {i+1}} = L^{\otimes i} \otimes
  L$. 
  The order of $\Tf_c$ is given by
  \begin{equation*}
    l_1 \otimes l_2 \otimes \dots \otimes l_i \otimes t \leq_c 
    l'_1 \otimes l'_2 \otimes \dots \otimes l'_j \otimes t'    
  \end{equation*}
  iff
  either $i=j, l_1=l'_1, \dots, l_i=l'_i$, and $t\leq t'$ or
  $i<j$ and $t\leq l'_{i+1}$. 
  
  Note that $\Tf_1=\Tf$ and $\Tf_{c+1}$ is obtained from $\Tf_c$ by
  attaching a copy of $\Tf$ to each leaf of $\Tf_c$. Thus, an easy
  induction on $c$ proves the claim. 
\qed \end{proof}

In the case $\alpha=\omega^i$ By replacing the convolution by
composition of $\omega^i$-words, we 
construct a finite word $\omega^{i+1}$-automatic
representation of the forest $\bigsqcup_{c\in\N} \Tf_c$ for any
$\omega^i$-automatic tree $\Tf$. 

\begin{lemma}
  For $\Tf$ a finite word $\omega^i$-automatic tree, the forest
  $\Ff:=\bigsqcup_{c\in\N} \Tf_c$ is finite word $\omega^{i+1}$-automatic.
\end{lemma}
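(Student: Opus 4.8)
The plan is to exploit the identity $\omega^{i+1}=\omega^i\cdot\omega=\sum_{n\in\omega}\omega^i$, which lets me place arbitrarily many (but, for a finite word, only finitely many nontrivial) $\omega^i$-blocks consecutively. First I would fix the decomposition of a finite $\omega^{i+1}$-word $W$ into its \emph{blocks} $W^{(0)}W^{(1)}W^{(2)}\cdots$, where $W^{(n)}$ is the restriction of $W$ to the interval $[\omega^i\cdot n,\omega^i\cdot(n+1))\cong\omega^i$; since $\supp(W)$ is finite, all but finitely many blocks are constantly $\diamond$. Recall that $T$, the set of leaves $L$ (first-order definable in $\Tf$, hence finite word $\omega^i$-automatic), and the order relation of $\Tf$, recognised by some $\Ac_{\le}$, are all finite word $\omega^i$-automatic.

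Next I would set up the encoding. Adjoining a fresh marker symbol $\#$, I encode the element $(c,\,l_1\otimes\cdots\otimes l_d\otimes t)$ of $\Tf_c$ (with $0\le d\le c-1$, each $l_k\in L$ and $t\in T$) by the $\omega^{i+1}$-word whose block $n$ is $l_{n+1}$ for $n<d$, whose block $d$ is $t$, whose blocks $d<n<c$ are constantly $\diamond$, that carries a single $\#$ at position $\omega^i\cdot c$, and that is $\diamond$ elsewhere. This is a finite word, and from it $c$ is read off as the block index of $\#$ while $d$ is the index of the last $\Sigma$-labelled block preceding $\#$; hence distinct elements of $\Ff=\bigsqcup_{c\in\N}\Tf_c$ receive distinct codes and the code set is in bijection with $\Ff$. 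To make the tree block detectable even when $t$ is the root, I would assume without loss of generality that no element of $T$ is the empty $\omega^i$-word, which is arranged by prefixing one fixed letter to every word of $\Tf$.

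Then I would recognise the domain and the order by a \emph{composition} of the given $\omega^i$-automata with a finite control running over the block sequence. In each block the control delegates, via the limit transitions at the block boundaries $\omega^i\cdot n$, to a subroutine testing whether that block lies in $L$, lies in $T$, equals the all-$\diamond$ block, or carries $\#$; globally it checks the shape ``leaf blocks, then one tree block, then $\diamond$-blocks, then $\#$''. For the order I run the same scheme on the convolution $W\otimes W'$: I verify that both are codes with $\#$ in the same block (that is, $c=c'$, since elements of different summands are incomparable), locate the tree-block indices $d,d'$, and accept iff either $d=d'$, all aligned blocks below $d$ agree, and $\Ac_{\le}$ accepts the block-$d$ pair $(t,t')$, or $d<d'$ and $\Ac_{\le}$ accepts the block-$d$ pair $(t,l'_{d+1})$. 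The one point worth checking is that this last, seemingly offset, comparison is in fact aligned: since the first $d'$ blocks of $W'$ hold $l'_1,\dots,l'_{d'}$, the leaf $l'_{d+1}$ sits in block $d$ of $W'$, exactly where $t$ sits in $W$, so $\Ac_{\le}$ is applied to two aligned blocks.

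The main obstacle is the composition step itself: I must justify that an $\omega^{i+1}$-automaton can simulate fixed $\omega^i$-automata independently on each block and combine the outcomes through a finite control over the $\omega$-indexed block sequence, transferring states across block boundaries by the right and left limit transitions. This is precisely the run-decomposition along a sum of orders already exploited in the proof of Theorem~\ref{thm:Decomposition}; the only new feature is that the sum $\sum_{n\in\omega}\omega^i$ is infinite, but since every code has finite support only finitely many blocks are nontrivial and the control simply rests in a final state on the cofinite $\diamond$-tail, reaching the top cut accepting. Granting this, the remaining items, namely finiteness and bijectivity of the encoding and the alignment observation above, are routine.
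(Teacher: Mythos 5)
Your proof is correct and takes essentially the same route as the paper: you encode an element of $\Tf_c$ as the concatenation of its leaf blocks, its tree block, and padding, with a single marker at position $\omega^i\cdot c$ (the paper uses $\bot$ where you use $\#$), and recognise domain and order by block-aligned simulation of the $\omega^i$-automata. The only difference is that you spell out the automaton and explicitly add the normalisation that no element of $T$ is the all-$\diamond$ word to make the block decomposition (and hence the encoding) unambiguous --- a point the paper's ``it is easy to see'' silently glosses over.
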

\begin{proof}
  Let $\bot$ be a fresh symbol not occurring in the alphabet $\Sigma$ of 
  the representation of $\Tf$. 
  Let $W_c$ be the set of finite $\omega^{i+1}$-words whose letters
  all occur before position $\omega^{i}\cdot c$ and that have $\bot$
  exactly at position $\omega^{i}\cdot c$. We write $\bot_c$ for the
  word of $W_c$ whose only letter is $\bot$. 
  We identify an element of $\Tf_c$ with a word in $W_c$ as follows.
  Assume that $t_c\in \Tf_c$ has the form 
  $t_c= l_1\otimes \dots \otimes l_k \otimes t$ where each $l_i$ is a
  $\omega^i$-word denoting a
  leaf of $\Tf$ and $t$ is an $\omega^i$-word denoting an arbitrary
  element of $\Tf$.  
  Now let $t_c'$ be the word $l_1 + l_2 + \dots + l_k + t +
  \bot_{c-k-1}$ where $+$ denotes the concatenation of
  $\alpha$-words. Note that $t_c'\in W_c$.
  Since the order of two elements of $\Tf_c$ is defined by
  componentwise comparisons on the convolutions, this results in an
  $\omega^{i+1}$-automatic  presentation of $\Tf_c$ whose domain is a
  subset of $W_c$. 
  It is easy to see that the union of all these representations is
  an $\omega^{i+1}$-automatic forest. 
\qed \end{proof}

Of course, we can add a new  root to $\Ff$ and obtain an
$\omega^{i+1}$-automatic tree $\Tf'$ with 
$\infrank(\Tf')=\sup\{\infrank(\Tf)\cdot c \mid c\in \N\}$ and 
$\rank(\Tf')=\sup\{\rank(\Tf)\cdot c\mid c\in\N\}$.

Iterated application of this lemma to the tree $\Tf_1$ shows that for
each $i\in\N$ there is an $\omega^i$-automatic tree of rank
$\omega^{i+1}$ (and $\infrank$ $\omega^i$). Application of Lemma
\ref{lem:TreeRankFactorcViaConvolution} then proves the first part
of Theorem \ref{thm:TreesLowerBounds}.

We now use a variant of the previous construction in order to prove
the second part 
of Theorem \ref{thm:TreesLowerBounds}, i.e., we construct
$\alpha$-automatic trees of high ranks for ordinals $\alpha\geq
\omega^\omega$.  

\begin{definition}
  Let $\alpha$ be an ordinal. Let $D_\alpha$ be the set of finite
  $\alpha$-words $w$ over $\{\diamond,1\}$ such that for all limit
ordinals $\beta<\alpha$ 
  and all $c\in \omega$ the implication
  \begin{align*}
    w(\beta + c)=1 \Rightarrow w(\beta) = w(\beta+1) = 
    \dots = w(\beta+c) = 1
  \end{align*}
  holds. 
  We define a partial order on $D_\alpha$ via the suffix relation: for
  $w_1,w_2\in D_\alpha$ let
  $w_1 \suffix{\alpha} w_2$ if and only if
  for $\beta\leq \alpha$ maximal such that for all $0\leq \gamma <
  \beta \quad w_2(\gamma)=\diamond$ we have that
  $\forall \beta \leq \delta <\alpha \quad w_1(\delta)=w_2(\delta)$,
  i.e., the domain of $w_2$ is an upwards closed  subset of the domain
  of $w_1$ and  both agree on the domain of $w_2$. 
\end{definition}

Note that $\Tc_\alpha:=(D_\alpha, \suffix{\alpha})$ is
$\alpha$-automatic. 

\begin{lemma}
  $\Tc_\alpha:=(D_\alpha, \suffix{\alpha})$ is a tree. 
\end{lemma}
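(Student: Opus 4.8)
The plan is to use that, with the order convention of the paper (the root sitting at the top), a tree is precisely a forest that possesses a greatest element. So it suffices to establish three things: that $\suffix{\alpha}$ is a partial order on $D_\alpha$, that the up-set $\{w' \in D_\alpha \mid w \suffix{\alpha} w'\}$ of every $w$ is a finite linear order (the forest condition of the earlier definition), and that $D_\alpha$ has a $\suffix{\alpha}$-greatest element serving as the root.

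First I would rephrase the defining relation in a more workable form. Unravelling the definition, $w_1 \suffix{\alpha} w_2$ holds precisely when $w_2$ arises from $w_1$ by replacing all letters at positions strictly below some ordinal $\gamma$ by $\diamond$ while keeping the letters at positions $\geq \gamma$; equivalently, $\supp(w_2)$ is an upward-closed subset of $\supp(w_1)$ and the two words agree on $\supp(w_2)$. In other words, $w_2$ is a \emph{suffix} of $w_1$. From this description the order axioms are immediate: reflexivity is witnessed by cutting below $\min \supp(w)$; antisymmetry follows because mutual suffix-hood forces $\supp(w_1)=\supp(w_2)$ together with agreement, hence $w_1=w_2$; and transitivity holds because a suffix of a suffix is again a suffix.

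Next I would verify the forest condition. For a fixed $w \in D_\alpha$, its up-set consists of exactly those suffixes of $w$ that happen to lie in $D_\alpha$. Any two suffixes of $w$ are nested: if $w'$ and $w''$ are obtained by cutting below $\gamma \leq \gamma'$ respectively, then $w''$ is itself a suffix of $w'$, so $w' \suffix{\alpha} w''$; hence the up-set is linearly ordered. It is moreover finite, since $\supp(w)$ is finite and distinct suffixes of $w$ correspond to the at most $\lvert \supp(w)\rvert + 1$ cut points lying in the gaps between consecutive support elements. Finally, the constant word $\mathbf{0}$ taking the value $\diamond$ everywhere vacuously satisfies the $D_\alpha$ condition and is a suffix of every $w$ (cut below $\gamma = \alpha$), so $w \suffix{\alpha} \mathbf{0}$ for all $w \in D_\alpha$; thus $\mathbf{0}$ is the required root.

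The one point requiring care — and the only real obstacle — is that cutting off a prefix need not preserve membership in $D_\alpha$: if the cut falls strictly inside the initial run of $1$'s of some limit block $[\beta, \beta+\omega)$, the resulting word has a $1$ above a $\diamond$ within that block and so leaves $D_\alpha$. Consequently not every suffix of $w$ is an up-neighbour. However, this only \emph{removes} elements from the set of suffixes; the survivors remain a subchain of a finite chain, so neither the linearity nor the finiteness of the up-set is affected, and the forest condition goes through unchanged.
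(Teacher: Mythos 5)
Your proof is correct and follows essentially the same route as the paper's: identify the up-set of $w$ with the set of suffixes of $w$ lying in $D_\alpha$, observe that finiteness of $\supp(w)$ gives at most $\lvert\supp(w)\rvert+1$ such suffixes and that they are linearly ordered by nesting. You are somewhat more thorough than the paper in also verifying the partial-order axioms, exhibiting $\diamond^\alpha$ as the root, and noting explicitly that suffixes falling out of $D_\alpha$ only shrink the up-set; none of this changes the argument.
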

\begin{proof}
  Since $D_\alpha$ contains finite $\alpha$-words $w$ there are only
  finitely many positions $\beta<\gamma$ with $w(\beta)=1$. Thus,
  there are also only finitely many 
  suffixes of $w$ that are undefined up to some position in $\supp(w)$. 
  This implies that all
  ascending chains are finite. 
  Moreover, the suffix relation is a linear order when restricted to
  the suffixes of a fixed word $w$. 
\qed \end{proof}

The following lemma combined with 
Lemma \ref{lem:TreeRankFactorcViaConvolution} proves the second part
of Theorem \ref{thm:TreesLowerBounds}. 

\begin{lemma}
  For all ordinals $\alpha, \alpha'$ such that $\alpha=\omega\cdot
  \alpha' \geq \omega$,
  $\rank(\Tc_\alpha) =\alpha'$. 
\end{lemma}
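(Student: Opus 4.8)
The plan is to compute the rank of every node of $\Tc_\alpha$ explicitly by transfinite induction, after replacing the cumbersome description of $D_\alpha$ by a clean combinatorial one coming from the block structure of $\alpha=\omega\cdot\alpha'$.

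First I would record the exact shape of the words in $D_\alpha$. Writing $\alpha=\omega\cdot\alpha'$, the ordinals $\omega\cdot\xi$ for $\xi<\alpha'$ partition $[0,\alpha)$ into $\omega$-blocks $B_\xi:=[\omega\cdot\xi,\ \omega\cdot\xi+\omega)$. The defining implication of $D_\alpha$ forces, inside each block, the positions carrying a $1$ to form an initial segment of the block; together with finiteness of the support this means that every $w\in D_\alpha$ restricts on $B_\xi$ to a word $1^{k_\xi}\diamond^{\,\omega}$ for some finite $k_\xi\in\N$, with $k_\xi=0$ for all but finitely many $\xi$. Hence $w$ is faithfully encoded by the finitely supported map $f_w:\alpha'\to\N$, $\xi\mapsto k_\xi$. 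A short computation, using that the support of any word in $D_\alpha$ must start at a limit ordinal or at $0$, shows that $w_1\suffix{\alpha}w_2$ holds iff $f_{w_2}$ is obtained from $f_{w_1}$ by replacing the values on an initial segment $\{\xi<\xi_0\}$ of block indices by $0$, where $\xi_0=\min\supp(f_{w_2})$. Equivalently, passing from a node $w$ to a strictly smaller node means prepending one or more occupied blocks with indices below the current least occupied block.

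The key step is the rank computation. I claim that $\rank(w,\Tc_\alpha)$ depends only on $\xi(w):=\min\supp(f_w)$, with the convention $\xi(w)=\alpha'$ for the empty word, and equals $\xi(w)$. Indeed, by the previous paragraph the nodes strictly below $w$ are exactly the words obtained by adding occupied blocks of index $<\xi(w)$ (with arbitrary positive values, which do not affect the order further below); the covers of $w$ add a single block $\xi'<\xi(w)$, and such a cover $w'$ satisfies $\xi(w')=\xi'$. Writing $\rho(\xi)$ for the common rank of nodes with least block $\xi$, this yields the recursion $\rho(\xi)=\sup\{\rho(\xi')+1\mid \xi'<\xi\}$ with $\rho(0)=0$. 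By transfinite induction $\rho(\xi)=\xi$: the successor and limit cases are immediate. In particular the root (the empty word) has rank $\rho(\alpha')=\alpha'$, which is the maximal rank attained among all nodes, so $\rank(\Tc_\alpha)=\alpha'$.

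The main obstacle is the bookkeeping of the second paragraph: checking that $w\mapsto f_w$ is a bijection that transports $\suffix{\alpha}$ to the ``delete an initial segment of occupied blocks'' relation rather than to the deletion of an arbitrary prefix of positions. This is exactly where the observation that supports in $D_\alpha$ begin at limit ordinals is used, since it guarantees that the support start of any suffix coincides with a block boundary $\omega\cdot\xi_0$. Once the encoding is secured, the rank recursion and the induction $\rho(\xi)=\xi$ are routine. That $\Tc_\alpha$ is a well-founded $\alpha$-automatic tree, so that $\rank$ is defined, is already established in the preceding lemma and remark.
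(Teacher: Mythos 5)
Your proof is correct and follows essentially the same route as the paper: both arguments rest on the observation that the set of nodes below $w$ is determined by the block $\omega\cdot\xi$ at which $\supp(w)$ begins and forms a copy of $\Tc_{\omega\cdot\xi}$, and both conclude by transfinite induction on that block index. Your nodewise formulation $\rank(w,\Tc_\alpha)=\xi(w)$ merely packages the paper's induction on $\alpha'$ (with its separate successor and limit cases) into one uniform recursion, and you inherit the paper's convention of reading $\rank(\Tc_\alpha)$ as the rank of the root.
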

\begin{proof}
  The proof is by induction on $\alpha'$. 
  For $\alpha=\omega \cdot 1 = \omega$ note that $D_\alpha$ consists
  of all words $1^m\diamond^\omega$, $m\in\N$ where the word
  $\diamond^\omega$ is suffix 
  of all other elements. Moreover, these others are pairwise
  incomparable. Thus, $\Tc_\omega$ is the infinite tree of depth $1$
  which has rank $1$ as desired. 
  We now proceed by induction. 
  \begin{enumerate}
  \item Assume that $\alpha'$ is a successor ordinal, i.e., there is
    some $\beta'$ such that $\alpha=\omega\cdot \alpha' = \omega \cdot
    \beta' + \omega$.
    Note that the words directly below $\diamond^\alpha$ are those of the
    form $w=\diamond^\gamma1^m\diamond^\delta$ such that
    $\gamma+\delta=\alpha$ and 
    $\gamma$ is some limit ordinal and $m < \omega$. 
    Fix such a word and note that 
    $D_\alpha\cap \{w'\mid w'\suffix{\alpha} w\}$ induces a suborder
    isomorphic to $(D_\gamma, \suffix{\gamma})$ which by induction
    hypothesis has rank $\gamma'$ for $\gamma'$ such that
    $\gamma=\omega\cdot \gamma'$. 
    Thus, the suborders of maximal rank $\beta'$ are induced by the
    elements $w_m= \diamond^{\omega\cdot \beta'}1^m\diamond^\omega$ for each
    $m < \omega$. Since these are infinitely many nodes of $\infrank$
    $\beta'$, the rank of $\Tc_\alpha$ is $\beta'+1=\alpha'$.
  \item 
    Assume that $\alpha'$ is a limit ordinal and
    $(\beta_i)_{i\in\omega}$ converges to $\alpha'$ and
    $\beta_i<\alpha$ for each $i\in\omega$. 
    Then each $w^m_i:=\diamond^{\beta_i}1^m\diamond^\alpha$ for $m,i\in\omega$ is
    directly below $\diamond^\alpha$ 
    and induces a suborder isomorphic to
    $(D_{\beta_i}, \suffix{\beta_i})$ of $\infrank$ $\beta_i$. 
    Thus, $\infrank(\Tc_\alpha)\geq \alpha'$. 
    But as in the previous case we see that all proper suborders have
    $\infrank<\alpha$ whence $\infrank(\Tc_\alpha)\leq \alpha'$. 
    Thus, its $\infrank$ is exactly $\alpha'$. 
  \end{enumerate}
\qed \end{proof}

\section{ Finite-Rank-Scattered-Automatic Ordinals }
\label{sec:OrdinalsOverFiniteRankOrders}

In this section we prove that for every scattered linear $\Lf$
order such that $\FC(\Lf)=\FCstar(\Lf)= n<\omega$, the
$\Lf$-oracle-automatic ordinals 
are exactly those below $\omega^{\omega^{n+1}}$. 
For this purpose, it suffices to show that
$\omega^{\omega^{n}}$ is $\Lf$-oracle-automatic. Since
$\Lf$-oracle-automatic structures are closed under finite
(lexicographically ordered) products, it
follows that for each $k$ the ordinal
$(\omega^{\omega^{n}})^k = \omega^{\omega^{n} \cdot k}$ is
$\Lf$-oracle-automatic. Since the $\Lf$-oracle-automatic ordinals are
closed under definable substructures we conclude that all ordinals below
$\omega^{\omega^{n+1}}$ are $\Lf$-oracle-automatic.

\begin{theorem}
  Let $\Lf$ be a scattered linear order with
  $\FC(\Lf)=\FCstar(\Lf)=1+n<\omega$.  
  The ordinal $\omega^{\omega^{n}}$ is finite word $\Lf$-oracle-automatic.
\end{theorem}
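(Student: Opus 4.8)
The plan is to prove the statement by induction on $n$, exploiting the ordinal identity $(\omega^{\omega^{n-1}})^{\omega}=\omega^{\omega^{n-1}\cdot\omega}=\omega^{\omega^{n}}$. Thus it suffices to realize the $\omega$-th ordinal power of $\omega^{\omega^{n-1}}$ as an $\Lf$-oracle-automatic order. For the base case $n=0$ we have $\FC(\Lf)=\FCstar(\Lf)=1$, so $\Lf$ is infinite with all closed intervals finite; coding $k\in\N$ by the finite $\Lf$-word carrying $1$ on the first $k$ positions of a fixed $\omega$- or $\omega^{*}$-suborder and ordering by support size yields an $\Lf$-automatic copy of $\omega=\omega^{\omega^{0}}$. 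For the inductive step I would strengthen the hypothesis to a \emph{uniform} one: there is a single family of automata that presents $\omega^{\omega^{n-1}}$ over \emph{every} scattered order $\Lf'$ with $\FC(\Lf')=\FCstar(\Lf')=n$, where the oracle encodes the rank structure of $\Lf'$ (essentially the run of $\Cc_{n}$ from Definition \ref{def:AutLimitStages}). This uniformity is exactly what the uniform-in-$\alpha$ style of Lemma \ref{lem:alpha-autmatic-words-are-scattered-ordered} makes available, and it is indispensable below because the blocks into which we cut $\Lf$ are genuinely different orders.

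The second step is to cut $\Lf$ into blocks. Since $\FC(\Lf)=\FCstar(\Lf)=n+1$, the automaton $\Cc_{n+1}$ has a unique accepting run, and by Lemma \ref{lem:MaxRankCutsEmbedinZ} the cuts carrying a maximal-rank state $M_{n}$ embed into $\Z$. These cuts split $\Lf$ as $\sum_{z\in\gamma}\Lf_{z}$ with $\gamma$ an infinite suborder of $1+\Z+1$ and each interior block of $\FCstar$-rank at most $n$; because $\FCstar(\Lf)=n+1$, infinitely many blocks have $\FCstar$-rank exactly $n$. Inside each such block $B$ I would select a single $\sim_{n}$-class $E$ that merges infinitely many $\sim_{n-1}$-classes (one exists since $B/{\sim_{n-1}}$ is infinite while $B/{\sim_{n}}$ is finite); then $\FC(E)=\FCstar(E)=n$, so the inductive presentation applies to $E$. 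Using the oracle to record the $\Cc_{n+1}$-run I can mark an $\omega$-indexed cofinal (or coinitial) sequence $E_{0},E_{1},E_{2},\dots$ of such good sub-blocks, oriented so that ``later in $\Lf$'' means ``higher coordinate''.

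The third step assembles the answer. The domain consists of finite $\Lf$-words supported inside $\bigcup_{k}E_{k}$ whose restriction to each $E_{k}$ is a valid (bijective) code of an element $b_{k}\in\omega^{\omega^{n-1}}$ under the uniform inductive presentation, with only finitely many blocks nonzero. I order two such words reverse-lexicographically: compare at the highest index $k$ where the codes differ, using the inductive order automaton on $E_{k}$. This order type is precisely $(\omega^{\omega^{n-1}})^{\omega}=\omega^{\omega^{n}}$. Automaticity follows from closure under first-order definable relations together with a running-verdict argument: scanning $\Lf$ in increasing order, the automaton carries the comparison verdict produced by the inductive order automaton at the most recently seen differing good block, overwriting it whenever a later difference appears; finiteness of the support guarantees the final verdict is the verdict at the highest differing block.

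The main obstacle is the uniformity requirement flagged in the first paragraph: a single $\Lf$-automaton must simultaneously decode and compare elements of $\omega^{\omega^{n-1}}$ in infinitely many distinct sub-blocks $E_{k}$, each a different order of rank $n$. I expect to discharge this by running the decomposition/induction relative to the oracle, so that the rank data of every $E_{k}$ is available letter-by-letter and the inductive automaton is driven uniformly; the secondary bookkeeping obstacle is verifying that ``highest differing block'' is correctly computed by the running verdict, which is where the orientation choice for $\gamma$ and the $M_{n}$-marking in the oracle are used. The purely ordinal-theoretic core, namely that reverse-lexicographic order on finitely supported $\omega$-sequences over $\omega^{\omega^{n-1}}$ has type $\omega^{\omega^{n}}$, is routine and carries no difficulty.
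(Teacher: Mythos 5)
Your proposal follows essentially the same route as the paper's proof: an induction with \emph{uniform} automata $\Ac_n,\Bc_n$ whose oracle encodes the rank structure, the $\Cc_{n+1}$-run and Lemma \ref{lem:MaxRankCutsEmbedinZ} to extract an $\omega$- or $\omega^*$-indexed sequence of rank-$n$ blocks marked by the oracle, and a backwards-lexicographic comparison of the blockwise codes realising $\sum_{i\in\omega}(\omega^{\omega^{n-1}})^i=\omega^{\omega^n}$, with the ascending/descending orientation handled by switching the comparison rule. The only deviations are cosmetic (your base-case coding, and your extra selection of a $\sim_n$-class inside each block to keep $\FC=\FCstar=n$ for the induction hypothesis, a point the paper glosses over).
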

\begin{proof}
  We inductively prove the following claim:
  For each $n$ there is are finite automata $\Ac_n$ and $\Bc_n$ 
  such that for
  every scattered linear order $\Lf$ with
  $\FC(\Lf)=\FCstar(\Lf)=1+n$ there is an $\Lf$-oracle $o_\Lf$ 
  such that $\omega^{\omega^{n}}$ is $\Lf$-$o_\Lf$-automatic 
  where $\Ac_n$ recognises the domain and $\Bc_n$ the order $<$ in
  this representation. Moreover the empty $\Lf$-word represents $0$).
  
  In the base $n=0$, we distinguish two cases:
  \begin{enumerate}
  \item $\Lf=\omega$ or $\Lf=\Z$: Let $o_\Lf:\Lf\to\{\diamond,1\}$ be
    an oracle such that $\supp(o_\Lf)$ is isomorphic to $\omega$. 
    Let $\Ac_0$ accept all $\Lf$-words $w$ such that
    $\supp(w)\subseteq \supp(o_\Lf)$ and $\lvert \supp(w)\rvert = 1$. 
    The order is given by $w<v$ iff $\supp(w)$ is to the left of
    $\supp(v)$.
  \item $\Lf=\omega^*$: Let $o_\Lf:\omega^*\to \{\diamond,1\}$ be the
    constant $1$ oracle. 
    Again, the domain recognised by $\Ac_0$ consists of all
    $\Lf$-words $w$ such that 
    $\supp(w)\subseteq \supp(o_\Lf)$ and $\lvert \supp(w)\rvert = 1$. 
    The order is given by $w<v$ iff $\supp(w)$ is to the right of
    $\supp(v)$.
  \end{enumerate}
  There is an  automaton $\Bc_1$ recognising the order independent of
  the shape of $\Lf$. If $\Bc_1$ applies a right limit transition it
  guesses whether $\supp(o_\Lf)$ is defined arbitrarily close to the
  minimal cut. This guess can be checked at the successor
  transitions. Depending on its guess, it recognises the correct order
  according to the case distinction. Since both orders are automatic,
  this combined order is also automatic. 

  For the induction step assume that the claim was proved for all $n'<n$. 
  Let $\Lf$ be some scattered linear order with
  $\FC(\Lf)=\FCstar(\Lf)=1+n$. 
  Recall the automaton $\Cc_{n+1}$ from definition
  \ref{def:AutLimitStages} which determines the left and right order
  of each cut of $\Lf$. 
  Using those cuts where $\Cc_n$ is in a state from 
  $M_{n}=\{n\}\times \{0,1, \dots, n\} \cup 
  \{0, 1, \dots, n\} \times \{n\}$, we obtain a 
  decomposition 
  $\Lf=\sum_{i\in\Z} \Lf_i$ such that $\FC(\Lf_i)=n$ and
  there is an infinite ascending (or descending) sequence
  \begin{align}\label{align:seqOrdinalOrders}
    i_0 < i_1 < i_2 < \dots\text{ such that }\FC(\Lf_i)=n \text{
      and } \forall j\geq 1\quad i_j-{i_{j-1}}\geq 2.
  \end{align}  
  By this we mean that $\Cc_n$ upon reading any $\Lf$-word is not in a
  state from $M_{n}$ on any cut strictly in $\Lf_i$ but it is in one of the
  states from $M_{n}$ at the last cut before and the first cut after
  $\Lf_i$. 
  
  We now describe the case of an ascending chain, but the descending case
  is analogous. 
  Let $o_\Lf$ be the oracle defined by 
  $o_\Lf(x)=(1,o_{\Lf_{i_j}}(x))$ if $x\in \Lf_{i_j}$ 
  and $o(x)=\diamond$ if for all $j\in\N$ we have 
  $x\notin \Lf_{i_j}$.
  The domain of our presentation of $\omega^{\omega^n}$ consists of
  those finite $\Lf$-words $w$ such that 
  $\supp(w)\subseteq \bigcup_{j\in\N} \Lf_{i_j}$ and for each $j$ 
  $\Ac_{n-1}$ accepts $w$ restricted to $\Lf_{i_j}$. 
  This set is recognised by an $\Lf$-$o_\Lf$-automaton $\Ac_n$ as
  follows. $\Ac_n$ simulates $\Cc_n$. At the initial state and
  whenever $\Cc_n$ is in a state from $M_{n}$, it guesses whether
  the next part of $\Lf$ is one of the $\Lf_{i_j}$ where $o_\Lf$ is
  defined. In this case, it starts a simulation of $\Ac_{n-1}$. This
  simulation is stopped when $\Cc_n$ is again in a state from
  $M_{n}$. If it starts a simulation of $\Ac_{n-1}$ and $o_\Lf$
  turns out to be undefined on this part, then the run is
  aborted. Analogously, the run is aborted if we did not start a
  simulation of $\Ac_{n-1}$ and reach a position in $\supp(o_{\Lf})$. 
  
  We identify each word $w$ accepted by $\Ac$ with a sequence
  $(\alpha_j)_{j\in \omega}$ of ordinals in $\omega^{\omega^{n-1}}$ such that
  all but finitely many $\alpha_j$ are $0$ and 
  $\alpha_j$ is the ordinal represented by the restriction of $w$ to
  the $\Lf_{i_j}$ (with respect to the order induced by the order
  automaton $\Bc_{n-1}$). Of course there is an automaton $\Bc'_n$ that
  orders the 
  sequences $(\alpha_j)_{j\in \omega}$ backwards lexicographically, i.e., 
  $(\alpha_j)_{j\in \omega} < (\beta_j)_{j\in\omega}$ if and only if there is
  $j_0\in \N$ such that $\alpha_j=\beta_j$ for all $j> j_0$ and
  $\alpha_{j_0} < \beta_{j_0}$. This order is $\Lf$-$o_\Lf$-automatic
  (just apply order automaton $\Bc_{n-1}$ on the part corresponding to
  $\Lf_{i_j}$ indicated by the oracle and remember the last outcome
  different from '{$=$}'). 
  This gives a presentation of 
  $\sum_{i\in\omega} (\omega^{\omega^{n-1}})^i = 
  \sum_{i\in\omega} (\omega^{\omega^{n-1} \cdot i})  =
  \omega^{\omega^n}$.
  Note that the definition of the oracle $o_\Lf$ depends on $\Lf$ but
  the automata do not depend on $\Lf$. We only need a a slightly different
  order in the case of a descending sequence instead of the
  ascending sequence in  \eqref{align:seqOrdinalOrders}. In the case
  of a descending sequence the order automaton uses lexicographic
  ordering instead of backwards lexicographic order because the domain
  of   the presentation can be identified with
  $(\alpha_j)_{j\in\omega^*}$. 
  Of course, we can define an automaton $\Bc_n$ that guesses (and
  verifies) whether
  $\supp(o_\Lf)$ is cofinal and depending on this guess, simulates
  $\Bc'_n$ or the variant $\Bc''_n$ performing lexicographic
  ordering (since $o_\Lf$ is either coinitial or cofinal, 
  the correctness of this guess can be checked immediately in the
  transitions leaving the initial states).
\qed \end{proof}

\begin{problem}
  Can one lift the previous theorem to orders of transfinite rank?
\end{problem}

\section{The countable atomless Boolean algebra is not ordinal automatic}\label{sec: Boolean algebra}

If $\eta$ is an ordinal, there is an apparent bijection between
$\Cuts(\eta)$ and the ordinal $\eta+1=\{\alpha\mid \alpha\leq
\eta\}$ which we will use to identify cuts. 
Let $\Cuts^-(\eta)=\Cuts(\eta)\setminus \{(\eta,\emptyset)\}$. 
We call $w:\eta \to \{\diamond\}$ the \emph{empty input}. 
If $r\colon \Cuts^-(\eta)\rightarrow \mathcal{S}$ is a run of $\mathcal{A}$
with $\gamma\leq\eta$ and $\gamma<\eta$ is a limit ordinal, let as above
$\lim_{\gamma^-} r$ 
denote the set of states appearing unboundedly
often before $\gamma$. 

\begin{lemma}\label{lemma: pumping for ordinal automata} (Pumping)
  Let
  $\mathcal{A}$ be a non-deterministic ordinal automaton with state
  set $S$, $m\in\N$, and $\gamma$ some ordinal
  with $\gamma\neq 0$. 
  Suppose $S_{\lim}\subseteq S^+\subseteq S$ and $s\in S^+$ with 
  $|S_{lim}|\leq m$. 

  If there is a run 
  \begin{align*}
    r\colon \Cuts(\omega^{m})\rightarrow S^+    
  \end{align*}
  on empty input with
  $r(0)=s$, 
  $\lim_{\omega^{m-}} r= S_{\mathsf{lim}}$, and
  $r(\Cuts(\omega^{m}))=S^+$, 
  then 
  there is a run 
  \begin{align*}
    \bar r\colon \Cuts(\omega^{m}\gamma)\rightarrow
    S^+    
  \end{align*}
  on empty input with
  $\bar r(0)=s$, 
  $\lim_{\omega^{m}\gamma^-} \bar r= S_{\mathsf{lim}}$, and
  $\bar r(\Cuts(\omega^{m}\gamma))=S^+$. 
\end{lemma}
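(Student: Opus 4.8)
The plan is to prove the statement by transfinite induction on $\gamma$, viewing $\omega^m\gamma$ as the $\gamma$-indexed concatenation of blocks of order type $\omega^m$ and assembling $\bar r$ from suitably chosen runs on these blocks. The order-theoretic fact that drives everything is that $\omega^m$ is additively indecomposable: for every $\xi<\omega^m$ the final segment $[\xi,\omega^m)$ again has order type $\omega^m$, and $\xi+\omega^m\gamma=\omega^m\gamma$. This lets me freely truncate a block at the front, repeat a block, and tile $\omega^m\gamma$ by $\omega^m$-blocks without disturbing order types, so the whole problem reduces to producing a small repertoire of block-runs that can be glued.

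First I would extract from $r$ a repeatable core. By definition of $\lim_{\omega^{m-}}r=S_{\mathsf{lim}}$ there is a cut $c^\ast<\omega^m$ beyond which $r$ takes only values in $S_{\mathsf{lim}}$, and every element of $S_{\mathsf{lim}}$ occurs cofinally below $\omega^m$; in particular every state in $S_{\mathsf{lim}}$ occurs at some non-maximal cut and hence carries an outgoing successor transition (it is ``continuable''). The tail $T:=r{\restriction}[c^\ast,\omega^m)$ is therefore a run segment of order type $\omega^m$ that stays inside $S_{\mathsf{lim}}$, visits all of $S_{\mathsf{lim}}$, and approaches its right end with limit set exactly $S_{\mathsf{lim}}$. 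Because $S_{\mathsf{lim}}$ is finite, among the (for $m\geq 2$ cofinally many) limit cuts inside $T$ some limit-transition profile $(\lim_{\mu^-}r,\ r(\mu))$ recurs, and the piece of $T$ between two occurrences of such a profile is a genuine loop that re-enters at the same state of $S_{\mathsf{lim}}$ and can be iterated transfinitely; for $m=1$ there are no interior limit cuts, but then $|S_{\mathsf{lim}}|\leq 1$ forces $S_{\mathsf{lim}}=\{a\}$ with a successor self-loop $(a,\diamond,a)$ playing the role of the loop.

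With these pieces the induction runs as follows. For $\gamma=1$ I take $\bar r=r$. For $\gamma=\beta+1$ I glue the inductively built run on $\omega^m\beta$ to one further $\omega^m$-block; the gluing cut $\omega^m\beta$ has no direct predecessor, so I must supply a right-limit transition whose left-hand set is $\lim_{(\omega^m\beta)^-}\bar r$. Keeping each interior block approaching its right end along the recurring loop lets me use the same profile at every successor boundary, while installing one copy of the tail $T$ as the final block forces $\lim_{(\omega^m\gamma)^-}\bar r=S_{\mathsf{lim}}$ and makes the top cut carry $r(\omega^m)$. For limit $\gamma$ I concatenate cofinally many blocks: since each has limit set $S_{\mathsf{lim}}$ and the block boundaries are cofinal in $\omega^m\gamma$, the limit set at the top is again exactly $S_{\mathsf{lim}}$, while a single full copy of $r$ placed at the very bottom already exhausts $S^+$, yielding $\bar r(\Cuts(\omega^m\gamma))=S^+$ and $\bar r(0)=s$.

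The main obstacle is the consistency of the gluing at limit stages. At an interior boundary I need a right-limit transition landing in a continuable state, whereas the conclusion demands that the limit set at the \emph{very top} be the full $S_{\mathsf{lim}}$; these requirements pull in opposite directions, since realizing all of $S_{\mathsf{lim}}$ cofinally tends to force the transition $(S_{\mathsf{lim}},r(\omega^m))$ and $r(\omega^m)$ need not be continuable. The resolution I expect to use is precisely the separation above: interior boundaries are served by the recurring \emph{sub}-profile of the loop, which lands back inside $S_{\mathsf{lim}}$ and is therefore continuable, and the full set $S_{\mathsf{lim}}$ is produced only by the topmost block. Here the hypothesis $|S_{\mathsf{lim}}|\leq m$ is what guarantees, together with finiteness of the state set, that such a recurring profile exists inside $T$ and that the approach stabilises at $S_{\mathsf{lim}}$ rather than fluctuating; checking that the assembled $\bar r$ meets the successor, right-limit and left-limit conditions at every cut is the routine but delicate bookkeeping that remains.
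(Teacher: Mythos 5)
Your strategy---induction on $\gamma$, tiling $\omega^m\gamma$ by $\omega^m$-blocks glued along a repeatable ``loop'' extracted from the tail of $r$---is genuinely different from the paper's proof, which inducts on $m$ and on $|S_{\lim}|$, but it has a real gap, located exactly at the tension you yourself flag. The problem case is $r(\omega^m)\notin S_{\lim}$ (so in particular $S_{\lim}\subsetneq S^+$). First, your induction hypothesis is too rigid to be glued: the run on $\omega^m\beta$ it supplies must realize the full range $S^+$, and a state of $S^+\setminus S_{\lim}$ with no outgoing successor transition (e.g.\ $r(\omega^m)$ may be such a state, occurring in $r$ only at the maximal cut) can only sit at the maximal cut of that run, which then cannot be extended past $\omega^m\beta$. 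Second, if the bottom block is a verbatim copy of $r{\restriction}[0,\omega^m)$, the left-limit set at the now-interior cut $\omega^m$ is forced to equal $S_{\lim}$, so the only right-limit transition you are guaranteed to have available there is $(S_{\lim},r(\omega^m))$---and $r(\omega^m)$ need not be continuable. Your proposed resolution (``interior boundaries are served by the recurring sub-profile of the loop'') is inconsistent with the rest of the construction: to apply a transition $(P,a)$ with $P\subsetneq S_{\lim}$ at a block boundary, the left-limit set there must be $P$, i.e.\ the block below must \emph{fail} the conclusion of the lemma; and for limit $\gamma$ there is no ``final block $T$'', so if the cofinally many interior blocks approach their right ends with limit set $P\subsetneq S_{\lim}$, the limit set at $\omega^m\gamma$ will not be $S_{\lim}$ either.

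The second, independent gap is the transfinite iteration of the loop. A recurring profile $(\lim_{\mu^-}r,\,r(\mu))=(P,a)$ at cuts $\mu_1<\mu_2$ lets you concatenate two copies of $r{\restriction}[\mu_1,\mu_2)$, but after $\omega$ concatenations the newly created limit cut sees every state of the \emph{image} $r([\mu_1,\mu_2))$ cofinally, and this image may strictly contain $P$; no limit transition with that left-hand set is guaranteed to exist, and the same problem recurs at the new limit cuts of every higher rank inside $\omega^m\gamma$. Controlling these higher-rank limit sets is exactly what the paper's induction on $m$ and $|S_{\lim}|$ is for. In the case $S_{\lim}=S^+$ it repeats the tail $r{\restriction}[\alpha_0,\omega^m)$, chosen so that its image and its right-limit set both equal $S_{\lim}$ and $r(\alpha_0)=r(\omega^m)\in S_{\lim}$ is continuable; then every new limit cut of every rank has limit set exactly $S_{\lim}$ and the single transition $(S_{\lim},r(\omega^m))$ serves them all. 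In the case $S_{\lim}\subsetneq S^+$ it instead stretches the run at the finer scale $\omega^{m-1}$ (splitting on whether $S_{\lim}$ is realized inside a single $\omega^{m-1}$-block of the tail, and otherwise recursing with a smaller limit set), so that the newly created limit cuts reuse limit transitions already exercised at \emph{interior} cuts of $r$. Your argument as written covers essentially only the first of these cases.
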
 

\begin{proof} 
  The proof is by induction on $m$ and $\lvert S_{lim}\rvert$. 
  \begin{itemize}
  \item First 
    suppose that $S_{lim}=S^+$. Then
    $r(\omega^{m})=r(\alpha_0)$ for 
    some $\alpha_0<\omega^{m}$. Let
    $\bar{r}(\alpha)=r(\alpha)$ and 
    $\bar{r}(\omega^m\beta+\alpha)=r(\alpha_0+\alpha)$ for
    $\alpha<\omega^m$ and $1\leq \beta<\gamma$. Let
    $\bar{r}(\omega^m\gamma)=r(\omega^m)$. 
  \item 
    Now suppose that 
    $S_{lim}\subsetneq S^+$. 
    Choose
    $n_0$ with
    $r([\omega^{m-1}n_0,
    \omega^{m}))\subseteq S_{lim}$.   
    \begin{itemize} 
    \item If there is $n\geq n_0$ with
      $r([\omega^{m-1}n,\omega^{m-1}(n+1)))=S_{lim}$,  
    choose $\beta_0\in [\omega^{m-1}n,\omega^{m-1}(n+1)$ with
    $r(\beta_0)=r(\omega^{m-1}(n+1))$.  
    Let $\bar{r}(\alpha)=r(\alpha)$ for $\alpha\leq\omega^{m-1}(n+1)$, 
    let $\bar{r}(\omega^{m-1}\beta+\alpha)=r(\beta_0+\alpha)$ for
    $\alpha<\omega^{m-1}$ and $\omega \beta<\gamma$, and let
    $\bar{r}(\omega^m\gamma)=r(\omega^m)$.  
    \item If there is no such $n$, find $n_0=\beta_0<\beta_1<...$ with
      $\sup_{i\in\omega}\omega^{m-1}\beta_i=\omega^m\gamma$.  
    Let $\bar{r}(\alpha)=r(\alpha)$ for $\alpha\leq\omega^{m-1} n_0$. 
    We can pump $r\upharpoonright [\omega^{m-1}
    n,\omega^{m-1}(n+1)]$ to a run  
    $\bar{r}\colon [\omega^{m-1}\beta_n, \omega^{m-1}\beta_{n+1}]
    \rightarrow S^+$ for $n\geq n_0$ by the induction hypothesis for
    smaller $S_{lim}$.   
    \end{itemize} 
    \end{itemize}
\qed \end{proof}

\begin{lemma}\label{lemma: Shrinking for ordinal automata} (Shrinking)
  Let
  $\mathcal{A}$ be a non-deterministic ordinal automaton with state
  set $S$, $m\in\N$, and $\gamma$ some ordinal
  with $\gamma\neq 0$. 
  Suppose $S_{\lim}\subseteq S^-\subseteq S$ and $s\in S^-$ with
  $|S^-|\leq m$.  
  If there is a run 
  \begin{align*}
    r\colon \Cuts^-(\omega^{m}\gamma)\rightarrow  S^-    
  \end{align*}
  on empty input with
  $r(0)=s$, 
  $\lim_{\omega^{m}\gamma^-} r= S_{\mathsf{lim}}$, and 
  $r(\Cuts^-(\omega^{m}\gamma))=S^-$, 
  then 
  there is a run 
  \begin{align*}
    \bar r \colon \Cuts^-(\omega^{m})\rightarrow S^-     
  \end{align*}
  on empty input with
  $\bar r(0)=s$, 
  $\lim_{(\omega^{m})^-} \bar r= S_{\mathsf{lim}}$, and 
  $\bar r(\Cuts^-(\omega^{m}))=S^-$. 
\end{lemma}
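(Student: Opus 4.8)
The plan is to prove the Shrinking Lemma as the exact dual of the Pumping Lemma, by induction on $m$ with a secondary induction on $\lvert S^-\rvert$; the construction contracts the given long run rather than repeating a block. The essential point is that Pumping only needs to iterate a single loop, so it can afford to bound just the limit set by $m$, whereas Shrinking must squeeze the whole run into the $m$ levels of $\omega^m$, and this is precisely what the hypothesis $\lvert S^-\rvert\le m$ provides. In the base case $\lvert S^-\rvert=1$ the run is the constant run $r\equiv s$, whose defining successor- and limit-loops on $s$ equally yield the constant run on $\Cuts^-(\omega^m)$; and $m=0$ is vacuous, since a nonempty image forces $\lvert S^-\rvert\ge 1$. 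For the inductive step I split according to whether $S_{\mathsf{lim}}=S^-$ or $S_{\mathsf{lim}}\subsetneq S^-$.

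If $S_{\mathsf{lim}}=S^-$ (the dual of Pumping's Case~1), every state is cofinal at the top $\omega^m\gamma$. When $\gamma=\delta+1$ is a successor I would simply restrict to the final level-$m$ block $[\omega^m\delta,\omega^m\gamma)$, which has order type $\omega^m$: its top limit is still $S_{\mathsf{lim}}=S^-$, and since its image contains this limit it equals $S^-$. To repair the initial state I pick the first position $\xi$ in this block with $r(\xi)=s$, which exists because $s$ is cofinal, and take $\bar r:=r{\restriction}_{[\xi,\omega^m\gamma)}$ reindexed to start at $0$; because $\omega^m$ is additively indecomposable, this final segment again has order type $\omega^m$, its cofinal set is unchanged, and the restriction of a run to a final segment is automatically a run. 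When $\gamma$ is a limit no final segment has type $\omega^m$, so instead I would select, for each state, cofinally many level-$(m-1)$ subruns of $r$ witnessing it near the top, and stitch $\omega$ of them into a single run on $\omega^m=\omega^{m-1}\cdot\omega$ whose successive subblocks make each element of $S^-$ recur cofinally.

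If $S_{\mathsf{lim}}\subsetneq S^-$ (the dual of Pumping's Case~2) then $\lvert S_{\mathsf{lim}}\rvert<\lvert S^-\rvert\le m$, and there is $\alpha_0<\omega^m\gamma$ with $r{\restriction}_{[\alpha_0,\omega^m\gamma)}$ taking values exactly in $S_{\mathsf{lim}}$. The tail $[\alpha_0,\omega^m\gamma)$ has order type $\omega^m\gamma'$ and realises the profile $(r(\alpha_0),S_{\mathsf{lim}},S_{\mathsf{lim}})$, so by the inner induction hypothesis, with a strictly smaller used-set, it can be replaced by a run on a final segment $\cong\omega^m$ with the same profile. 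It then remains to prepend, in an initial segment of length $<\omega^m$, a run from $s$ that passes through all transient states $S^-\setminus S_{\mathsf{lim}}$ and ends in a state admitting the junction into the tail. Since the transient states occur only boundedly in $r$, I would realise this prefix inside finitely many level-$(m-1)$ blocks, harvesting the required short excursions and their limit transitions from $r{\restriction}_{[0,\alpha_0)}$; when $\lvert S^-\rvert<m$ the prefix uses at most $m-1$ states and may instead be compressed directly by the induction hypothesis at $m-1$.

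The main obstacle is the consistency of the stitched runs at their limit cuts: a contracted run is legitimate only if at every limit position the left/right limit set together with the chosen state forms a transition of $\Delta$. The argument therefore rests on the observation that every limit transition invoked in the stitching or the prepending is one already used by $r$ on $\omega^m\gamma$, so that no transition is fabricated. Making this precise, and in particular showing in the limit-$\gamma$ subcase that $\omega$ many harvested subruns can be concatenated so that each of the at most $m$ states reappears cofinally while the number of levels stays $m$, is the delicate part; it is here that the dimension bound $\lvert S^-\rvert\le m$ is used in exactly the way the bound $\lvert S_{\mathsf{lim}}\rvert\le m$ is used in the Pumping Lemma.
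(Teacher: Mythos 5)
Your treatment of the case $S_{\mathsf{lim}}=S^-$ with $\gamma$ a successor is essentially the paper's (cut the run at a late occurrence of $s$ inside the final $\omega^m$-block), but the two remaining cases have genuine gaps, both traceable to the fact that you induct only on $m$ and $\lvert S^-\rvert$ and not on $\gamma$. In the case $S_{\mathsf{lim}}\subsetneq S^-$ you shrink the tail $[\alpha_0,\omega^m\gamma)$ to $\omega^m$ and then must ``prepend, in an initial segment of length $<\omega^m$, a run from $s$ that passes through all transient states.'' But $[0,\alpha_0)$ can have order type $\omega^m\gamma'$ with $\gamma'$ arbitrarily large and can use all $m$ states of $S^-$, so neither the induction hypothesis at $m-1$ nor the lemma itself (which only compresses a run to length exactly $\omega^m$, never strictly below it) yields such a prefix; and your fallback of ``harvesting short excursions'' from $r{\restriction}[0,\alpha_0)$ does not explain how the harvested pieces are joined: the end state of one piece must connect (by a successor or limit transition) to the start state of the next, and at every limit cut of the reassembled prefix the left/right limit set must still match a transition of $\Delta$ --- none of which is guaranteed for pieces taken from scattered locations of $r$. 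The same junction problem undermines the ``stitch $\omega$ many level-$(m-1)$ subruns'' plan in the case $S_{\mathsf{lim}}=S^-$ with $\gamma$ a limit.

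The paper avoids both problems by inducting on the triple $(m,\gamma,\lvert S^-\rvert)$ and only ever shrinking \emph{in place}. When $S_{\mathsf{lim}}\subsetneq S^-$ and $\alpha_0<\omega^m\beta$ for some $\beta<\gamma$, the tail is shrunk onto $[\alpha_0,\omega^m\beta)$ with the prefix left untouched, and the induction hypothesis for the smaller ordinal $\beta$ finishes; iterating this reduces to $\gamma=2$, where the prefix block has length exactly $\omega^m$ and one either cuts it at a late occurrence of $r(\omega^m)$ (if that state recurs cofinally, so the initial segment $[0,\beta)$ already has length $<\omega^m$ and glues directly onto the second block) or shrinks a tail of it that omits $r(\omega^m)$, invoking the induction hypothesis for both smaller $m$ and smaller $S^-$. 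Likewise, in the limit case the paper concatenates \emph{adjacent} segments of $r$, shrinking only the second half of each $\omega^m$-block via the $\lvert S^-\rvert$-induction, so every junction and limit transition is inherited from $r$. To repair your proof you would need to add the induction on $\gamma$ and replace the harvest-and-stitch steps by such in-place arguments.
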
 

\begin{proof} 
  The proof is by induction on $m$, $\gamma$, and the size of $S^-$. 
  The claim is obvious for $m= 1$ or $\gamma=1$. Thus, we assume that
  $\gamma\geq 2$ and $m \geq 2$. 
  
  \begin{itemize}
  \item First suppose that $S_{\mathsf{lim}}=S^-$  and that there
    is some $\beta<\gamma$ with $\lim_{(\omega^m\beta)^-} r = S^-$. 
    Then we can shrink the run $r\upharpoonright \omega^m\beta$ to a
    run $\bar{r}\colon \omega^m\rightarrow S^-$ by the induction
    hypothesis for $\beta$.  
  \item 
    Next suppose that $S_{\mathsf{lim}}=S^-$ and that for
      each $\beta<\gamma$, there is an $s\in S^-$ such that 
      $s\notin \lim_{(\omega^m\beta)^-} r$.
      There are the following subcases:
      \begin{itemize}
      \item First suppose that $\gamma=\bar{\gamma}+1$. Choose $\beta_0\in [\omega^m\bar{\gamma}, \omega^m \gamma))$ with $r(\beta_0)=r(0)$. 
      Let $\bar{r}(\alpha)=r(\beta_0+\alpha)$ for $\alpha<\omega^m$. 
      \item Suppose that  $\gamma=\omega$.
        By assumption, for each $i$, there is some $\alpha_i$ with
        $\omega^m i \leq \alpha_i < \omega^m (i+1)$ and a state
        $s_i\in S^-$ such  that  $s_i\neq r(\beta)$ for all
        $\alpha_i \leq \beta < \omega^m(i+1)$. Thus, we can apply the
        induction  hypothesis for smaller $S^-$ to each
        $r{\restriction}[\alpha_i, \omega^m(i+1))$ 
        and shrink it to a run of size $\omega^{m-1}$. 
        Note that the length of $r{\restriction}[\omega^m i,
        \alpha_i]$ is also bounded by some $\omega^{m-1}\cdot k_i$. 
        Thus, composition of these runs yields the desired run of
        length $\omega^m$. 
      \item Finally, suppose that $\gamma>\omega$ is a limit ordinal
        and that 
        $\gamma_1<\gamma_2<\dots <\gamma$ are ordinals such that 
        $\lim \gamma_i=\gamma$.   
        By induction hypothesis for smaller $\gamma$, we can 
        shrink each run 
        $r{\restriction}[\omega^m\gamma_i, \omega^m\gamma_{i+1})$ to a
        run of length $\omega^m$ such that 
        each state of $S^-$ appears in one of these runs. 
        Composition of the resulting runs reduces this case to the
        previous case. 
      \end{itemize}
      
    \item Finally,
      suppose that $S_{\mathsf{lim}}\subsetneq S^-$. Let 
      $\alpha_0$ denote the least $\alpha<\omega^m\gamma$ such that
      only states $s\in S_{\mathsf{lim}}$ appear in
      $[\alpha,\omega^m\gamma)$. 
      There are two subcases:
      \begin{itemize}
      \item First suppose that $\alpha_0<\omega^m\beta$ for some
        $\beta<\gamma$. 
        Note that $[\alpha_0,\omega^m\beta)$ and
        $[\alpha_0,\omega^m\gamma)$ are of the form 
        $\omega^{m}\cdot \delta$ with $\delta\leq\gamma$. 
        Since the image of 
        $r\upharpoonright [\alpha_0,\omega^m\gamma)$ is contained in
        $S_{\mathsf{lim}}\subsetneq S^-$, 
        we can shrink 
        $r\upharpoonright [\alpha_0,\omega^m\gamma)$ to a run
        $\bar{r}\upharpoonright 
        [\alpha_0,\omega^{m}\beta)$ by the induction
        hypothesis for smaller $S^-$. 
        Since $\beta<\gamma$ we conclude by  application of the
        induction hypothesis to this shorter run. 
      \item 
        Second suppose that  $\alpha_0 \geq \omega^m\beta$ for all
        $\beta<\gamma$. We conclude immediately that 
        $\gamma$ is a successor, i.e., $\gamma=\bar{\gamma}+1$ and 
        $\alpha_0\geq  \omega^m \bar{\gamma}$. 
        Now we distinguish the following cases.
        \begin{enumerate}
        \item Assume that $\bar \gamma=1$ and  that 
          $r(\omega^m) \in \lim_{(\omega^m)^-} r$. Then there is a
          $\beta <\omega^m$ such that $r(\beta)=r(\omega^m)$ and for
          each state $s\in S^-$ such that $s$ occurs in $r$ strictly before
          $\omega^m$  also occurs before $\beta$. 
          Then the composition of $r{\restriction}[0, \beta)$ with
          $r{\restriction}[\omega^m, \omega^m\gamma]$ yields the
          desired run. 
        \item Assume that $\bar\gamma=1$ and that
          $r(\omega^m) \notin \lim_{(\omega^m)^-} r$. Thus, there is
          some $\beta<\omega^m$ such that 
          $r([\beta, \omega^m))\subseteq 
          S^- \setminus \{r(\omega^m)\}$.
          Thus, we can apply the induction hypothesis for smaller $m$
          and $S^-$ shrinking $r{\restriction}[\beta, \omega^m)$ 
          to a run $\hat r$ on domain $[\beta, \beta+\omega^{m-1})$ with 
          $\hat r([\beta, \beta+\omega^{m-1})) = r([\beta, \omega^m))$
          and
          $\lim_{(\beta+\omega^{m-1})^-} \hat r = \lim_{(\omega^m)^-}
          r$.
          Since $\beta<\omega^{m-1}\cdot k$ for some $k\in\N$, 
          Composition of $r{\restriction}[0, \beta)$ with $\hat r$ and
          $r{\restriction}[\omega^m, \omega^m\gamma)$ yields the
          desired run $\bar r$ of length $\omega^m$.
        \item 
          If $\bar\gamma>1$, we  apply the induction hypothesis (for
          smaller $\gamma$) to 
          $r{\restriction}[0, \omega^m\bar \gamma)$ and shrink this
          run to a run $\bar r$ of length $\omega^m$. The composition of
          $\bar r$ and  $r{\restriction}[\omega^m\bar \gamma,
          \omega^m\gamma)$ is a run of length $\omega^m\cdot 2$ and we
          can apply the induction hypothesis for smaller $\gamma$.
        \end{enumerate} 
      \end{itemize}
    \end{itemize}
\qed \end{proof}

We directly obtain the following corollary.
\begin{corollary} 
  Let $\gamma\geq 1$ be an ordinal and let
  $\Ac_\gamma=\Ac_1=(S, \Sigma, I,   F, \Delta)$ be an
  automaton (where we interpret $\Ac_i$ as  $\omega^mi$-automaton). 
  For all $s_0,s_1\in S$, 
  \begin{align*}
    s_0 \run{\diamond^{\omega^m}}{\Ac_1} s_1 \Longleftrightarrow 
    s_0 \run{\diamond^{\omega^m\gamma}}{\Ac_\gamma} s_1       
  \end{align*}
  where $\diamond^\alpha$ denotes the empty input of length $\alpha$. 
\end{corollary}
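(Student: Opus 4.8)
The plan is to read the corollary as a two-way translation between runs on the empty inputs $\diamond^{\omega^m}$ and $\diamond^{\omega^m\gamma}$, obtaining the forward implication from the Pumping Lemma and the backward one from the Shrinking Lemma. Throughout I identify $\Cuts(\omega^m\eta)$ with the ordinals $\leq\omega^m\eta$, so that $s_0\run{\diamond^{\omega^m\eta}}{}s_1$ means exactly that there is a run $r\colon\Cuts(\omega^m\eta)\to S$ on empty input with $r(0)=s_0$ and $r(\omega^m\eta)=s_1$. Since the sets of limit states and of visited states that I feed into the two lemmas are always subsets of $S$, I assume $m\geq\lvert S\rvert$, which is the regime in which the corollary is applied; then the side conditions $\lvert S_{\mathsf{lim}}\rvert\leq m$ and $\lvert S^-\rvert\leq m$ of the lemmas are automatically satisfied.

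For the implication from left to right, suppose $r\colon\Cuts(\omega^m)\to S$ witnesses $s_0\run{\diamond^{\omega^m}}{\Ac_1}s_1$. Put $S^+:=r(\Cuts(\omega^m))=\image(r)$ and $S_{\mathsf{lim}}:=\lim_{(\omega^m)^-}r$. These choices meet the hypotheses of the Pumping Lemma with $s=s_0$, so I obtain a run $\bar r\colon\Cuts(\omega^m\gamma)\to S^+$ on empty input with $\bar r(0)=s_0$ and $\lim_{(\omega^m\gamma)^-}\bar r=S_{\mathsf{lim}}$. The only constraint the run conditions impose on the value at the maximal (limit) cut $\omega^m\gamma$ is the right-limit transition $(S_{\mathsf{lim}},\bar r(\omega^m\gamma))\in\Delta$; since $r$ already uses $(S_{\mathsf{lim}},s_1)\in\Delta$ at its top cut, I may simply redefine $\bar r(\omega^m\gamma):=s_1$ without affecting any other transition or the value of $\lim_{(\omega^m\gamma)^-}\bar r$. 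Hence $\bar r$ witnesses $s_0\run{\diamond^{\omega^m\gamma}}{\Ac_\gamma}s_1$.

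For the converse, suppose $r\colon\Cuts(\omega^m\gamma)\to S$ witnesses $s_0\run{\diamond^{\omega^m\gamma}}{\Ac_\gamma}s_1$, and set $S^-:=r(\Cuts^-(\omega^m\gamma))$ and $S_{\mathsf{lim}}:=\lim_{(\omega^m\gamma)^-}r$. Restricting $r$ to $\Cuts^-(\omega^m\gamma)$ and applying the Shrinking Lemma yields $\bar r\colon\Cuts^-(\omega^m)\to S^-$ on empty input with $\bar r(0)=s_0$ and $\lim_{(\omega^m)^-}\bar r=S_{\mathsf{lim}}$. It remains to reattach the top cut: I extend $\bar r$ to $\Cuts(\omega^m)$ by setting $\bar r(\omega^m):=s_1$. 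This is a legal run because the required right-limit transition is $(\lim_{(\omega^m)^-}\bar r,s_1)=(S_{\mathsf{lim}},s_1)$, which lies in $\Delta$ as it is precisely the transition $r$ applies at $\omega^m\gamma$, and the lemma guarantees that the limit set is preserved. Thus $\bar r$ witnesses $s_0\run{\diamond^{\omega^m}}{\Ac_1}s_1$.

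The routine part is the invocation of the two lemmas; the step needing care — and the only real obstacle — is the bookkeeping at the final cut. The Shrinking Lemma deliberately operates on $\Cuts^-$ (it omits the maximal cut), so one must verify separately that the limit set $S_{\mathsf{lim}}$ surviving the shrinking is exactly the one licensing the right-limit transition into $s_1$; dually, for pumping one must observe that the top value is unconstrained apart from that same transition, so it can be set to $s_1$. Both points reduce to the fact that $\omega^m$ and $\omega^m\gamma$ are limit ordinals, whence their maximal cuts are entered by a right-limit transition depending only on $S_{\mathsf{lim}}$, which the two lemmas preserve.
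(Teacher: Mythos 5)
Your proof is correct and is exactly the argument the paper intends: the paper states this as an immediate consequence of the Pumping and Shrinking Lemmas, and you have simply spelled out the two applications together with the bookkeeping at the maximal cut (which is only constrained by the right-limit transition out of $S_{\mathsf{lim}}$, preserved by both lemmas). Your explicit assumption $m\geq\lvert S\rvert$ is the right reading of the statement, since the side conditions $\lvert S_{\mathsf{lim}}\rvert\leq m$ and $\lvert S^-\rvert\leq m$ of the two lemmas require it and this is the regime in which the corollary is applied later in the appendix.
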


A formula is $\Sigma_0$ if it is quantifier-free. A formula is $\Pi_i$
if it is logically equivalent to the negation of a $\Sigma_i$-formula.
Formulas of the form $\exists x_0...\exists
x_n\varphi(x_0,...,x_n,y_0,...,y_k)$ for some $\Pi_i$-formula
$\varphi$ are $\Sigma_{i+1}$.

\begin{lemma}\label{lemma: elementary substructure} 
Let $L_0, L_1, \dots, L_k$ be linear orders and let
$\delta_1, \delta_2, \dots, \delta_k, \eta_1, \eta_2, \dots, \eta_k$
be ordinals all strictly greater than $0$. 
Let $\Ac, \Ac_{R_1}, \dots, \Ac_{R_n}$ be
automata such that $m\in\N$ is a bound on the number of states of any
of these. Let $n_0\in\N$ be some number. 
Setting $K^j_i:=\omega^{m+n_0}\cdot j_i$ for $j\in\{\delta,\eta\}$
define the maps 
\begin{align*}
  f_\delta: \prod_{i=0}^k L_i &\to \delta:= 
  L_0+\sum_{i=1}^k (K^\delta_i+ L_i),\\
  (w_1, \dots, w_k) &\mapsto w_1 + \diamond^{\omega^{m+i}\cdot \delta_1}
  + w_2 + \dots + \diamond^{\omega^{m+i}\cdot \delta_k} + w_k, \text{ and}\\
  f_\eta: \prod_{i=0}^k L_i &\to \eta:=
  L_0+\sum_{i=1}^k (K^\eta_i+ L_i), \\
  (w_1, \dots, w_k) &\mapsto w_1 + \diamond^{\omega^{m+i}\cdot \eta_1}
  + w_2 + \dots + \diamond^{\omega^{m+i}\cdot \eta_k} + w_k.  
\end{align*}
Let $M_i$ be the finite word $i$-automatic structure induced by 
$\Ac, \Ac_1, \dots, \Ac_n$  for $i\in\{\delta, \eta\}$. 
For every $\Sigma_i\cup\Pi_{n_0}$-formula $\varphi(\vec{x})$ and all
$\vec{w}=(\vec{w_0}, \dots, \vec{w_k})\in (\prod_{i=0}^k
W_i)^{<\omega}$ (where $W_i$ denotes the set of finite $L_i$-words), 
$M_\delta\models \varphi(f_\delta(\vec{w}))$ if and only if
$M_\eta \models \varphi(f_\eta(\vec{w}))$. 
\end{lemma}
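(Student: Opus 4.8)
The plan is to prove the statement by induction on the quantifier–alternation rank $q\le n_0$, but to strengthen it to a uniform claim $C(q)$ in which the gaps are allowed to have size $\omega^{m+q}\cdot\delta_i$ and $\omega^{m+q}\cdot\eta_i$ (rather than only $\omega^{m+n_0}\cdot\delta_i$ and $\omega^{m+n_0}\cdot\eta_i$), and which asserts the biconditional for all $\Sigma_q\cup\Pi_q$-formulas. The lemma is then exactly $C(n_0)$. The point of this reformulation is that each quantifier stripped off in the induction step will consume precisely one factor of $\omega$ from every gap, lowering the exponent from $m+q+1$ to $m+q$; this is what makes the induction hypothesis applicable to the witness. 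Throughout, membership in $\domain(M_\delta)$ and $\domain(M_\eta)$ is treated as one more atomic predicate (recognised by $\Ac$), so the witness transfer below automatically respects the domain.

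For the base case $C(0)$ it suffices to treat atomic formulas, since quantifier-free formulas are Boolean combinations of these. Equality and domain membership depend only on the block contents, which $f_\delta$ and $f_\eta$ copy identically, so these are preserved. For a relation $R$ recognised by some $\Ac_R$ with at most $m$ states, $R$ holds of a tuple iff $\Ac_R$ has an accepting run on the convolution, a $\delta$-word (resp. $\eta$-word) whose support lies in the blocks and which carries the constant letter $(\diamond,\dots,\diamond)$ on every gap, i.e. empty input for $\Ac_R$. I would cut such a run at the block/gap boundaries: the block segments are literally identical on the two sides, while each gap segment is an empty-input run of length $\omega^{m}\cdot\delta_i$ (resp. $\omega^{m}\cdot\eta_i$). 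The Corollary following the Pumping Lemma (Lemma~\ref{lemma: pumping for ordinal automata}) and the Shrinking Lemma (Lemma~\ref{lemma: Shrinking for ordinal automata}) shows that empty-input runs of lengths $\omega^{m}\cdot\delta_i$ and $\omega^{m}\cdot\eta_i$ realise exactly the same entry/exit state pairs (the exit state at the limit cut topping the gap being set by the same right-limit transition). Since the entry state into each gap is forced by the identical preceding block, an accepting run can therefore be transplanted gap by gap in either direction, and $R$ is preserved.

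For the inductive step $C(q)\Rightarrow C(q+1)$ I treat a $\Sigma_{q+1}$-formula $\exists x\,\psi$ with $\psi\in\Pi_q$; the $\Pi_{q+1}$ case is its negation, and symmetry in $\delta,\eta$ yields both directions, hence the full biconditional. Given a witness $a\in\domain(M_\delta)$ for $\exists x\,\psi(x,f_\delta(\vec w))$, the crucial move is to \emph{refine the decomposition} so as to absorb the finite support of $a$ into the blocks. Writing each gap as $\omega^{m+q+1}\cdot\delta_i=\sum_{\xi<\omega\cdot\delta_i}\omega^{m+q}$, only finitely many of these $\omega^{m+q}$-copies meet $\supp(a)$. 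I group maximal runs of support-meeting copies into new blocks $B_1,\dots,B_r$ carrying the content of $a$, and leave the intervening and terminal runs of empty copies as new gaps, each of the form $\omega^{m+q}\cdot(\text{positive ordinal})$; since $\omega\cdot\delta_i\ge\omega$ is infinite, the terminal stretch is nonempty as well. This exhibits $a$ as the $f_\delta$-image of a block-supported tuple for a refined decomposition all of whose gaps are positive multiples of $\omega^{m+q}$. I then refine $\eta$ along the \emph{same} block/gap pattern, reusing the blocks $L_0,\dots,L_k,B_1,\dots,B_r$ and filling every gap with a positive multiple of $\omega^{m+q}$ so that they sum to $\omega^{m+q+1}\cdot\eta_i$ on each factor (possible because the $B_j$ occupy only finitely many copies while infinitely many remain), and I let $a'$ carry the same block content as $a$. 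Applying $C(q)$ to this common refinement gives $M_\delta\models\psi(a,f_\delta(\vec w))$ iff $M_\eta\models\psi(a',f_\eta(\vec w))$ and $a'\in\domain(M_\eta)$; transferring the witness establishes $C(q+1)$.

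The main obstacle is the bookkeeping of this refinement: one must verify that after isolating $\supp(a)$ every remaining empty stretch is genuinely a positive multiple of $\omega^{m+q}$ on \emph{both} sides, and that the block pattern can be made literally identical on the $\delta$- and $\eta$-sides (merging adjacent blocks where a separating gap would be empty), so that the induction hypothesis at level $q$ — which crucially permits arbitrary and differing positive gap-multipliers on the two sides — applies verbatim. The key arithmetic fact making this go through is $\omega^{m+q+1}\cdot\delta_i=\omega^{m+q}\cdot(\omega\cdot\delta_i)$ together with $\delta_i,\eta_i\ge 1$, which guarantees an infinite reservoir of $\omega^{m+q}$-copies from which the finitely many new blocks can be separated by nonempty gaps; this is exactly where the hypotheses $\delta_i,\eta_i>0$ and the choice of exponent $m+n_0$ are used.
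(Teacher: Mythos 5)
Your proposal is correct and follows essentially the same route as the paper's own proof: an induction on quantifier rank in which the gap exponent drops by one at each step, with the base case handled by the Pumping/Shrinking corollary (empty-input runs of lengths $\omega^{m}\cdot\delta_i$ and $\omega^{m}\cdot\eta_i$ realise the same state pairs) and the inductive step handled by absorbing the finite support of the witness into new blocks and refining both gap decompositions in parallel before invoking the induction hypothesis. The only cosmetic difference is that you make the strengthened statement $C(q)$ explicit, whereas the paper leaves the coupling of the formula rank with the gap exponent implicit in its choice of the refined maps $g^\delta,g^\eta$.
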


\begin{proof} 
  The claim for $n_0=0$ follows from the Pumping and Shrinking
  Lemmas because we can translate any run on $\diamond^{\omega^m\cdot
    \gamma}$ into a run on $\diamond^{\omega^m\cdot \gamma'}$ with
  same initial and final state for all ordinals $\gamma, \gamma'\geq
  1$. 
  
  For the inductive step, assume that the claim holds for all
  $n'<n_0\in\N$. 
  
  Due to symmetry of the claim and since every $\Pi_{n_0}$-formula is
  the negation of a $\Sigma_{n_0}$-formula it suffices to prove that
  $M_\eta \models \varphi(f_\eta(\vec{w}))$ if
  $M_\delta\models \varphi(f_\delta(\vec{w}))$ for a $\Sigma_{n_0}$
  formula $\varphi$.

  Let  $\varphi$ be some $\Pi_{{n_0}-1}$-formula and 
  $\vec{v}=f^\delta(\vec{w})$ for some $\vec{w}_i
  \in  (\prod_{i=0}^k L_i)^{<\omega}$ such that 
  $N\vDash \exists \vec{x} \ 
  \varphi(\vec{x}, \vec{v})$. Choose  
  $\vec{t}\in {M_\delta}^{<\omega}$ with $M_\delta \vDash \varphi(t,\vec{v})$.
  
  Since $\vec{t}$ has finite support, for each $1\leq i \leq k$, 
  $\supp(\vec{t})\cap K^\delta_i$ induces a decomposition
  $K^\delta_i= L^i_0+ \sum_{j=1}^m (\bar K^\delta_j+ L^i_j)$ such that
  \begin{itemize}
  \item $\bar K^\delta_j = \omega^{n_0-1}\cdot \kappa$ for some ordinal
    $\kappa\geq 1$,
  \item $L^i_j = \omega^{n_0-1}$, and
  \item   $\supp(\vec{t})\cap  K^\delta_i \subseteq \bigcup_{j=1}^m L^i_j$, 
  \end{itemize}
  Fix ordinals $\bar K^\eta_j$ such that
  $K^\eta_i= L^i_0+ \sum_{j=1}^m (\bar K^\eta_j+ L^i_j)$ (these exist
  because $K^\eta_i = \omega^{m+{n_0}-1} \cdot (\omega\cdot \kappa)$
  for some ordinal $\kappa\geq 1$).
  
  Application of the inductive hypothesis to
  $\varphi$ and the functions
  \begin{align*}
  &g^\delta: L_0\times \prod_{i=1}^k (\prod_{j=1}^m L^i_j \times L_i) \to \delta,
  \text{ and }\\
  &g^\eta: L_0\times \prod_{i=1}^k (\prod_{j=1}^m L^i_j \times L_i) \to \eta
  \end{align*}
  defined in the apparent way
  shows
  that 
  there are words $\vec{w_1}, \vec{w_2}$ such that
  $g^\delta(\vec{w_1})= \vec{v}, 
  g^\delta(\vec{w_2})= \vec{t}$, and
  \begin{align*}
    M_\delta\models \varphi( g^\delta(\vec{w_1}), g^\delta(\vec{w_2}))
    \text{  if and only if }
    M_\eta\models \varphi( g^\eta(\vec{w_1}), g^\eta(\vec{w_2})).
  \end{align*}
  By definition, one easily sees that
  $g^\eta(\vec{w_1}) = f^\eta(\vec{w}) =
  \vec{v}$. Thus, 
  \begin{align*}
    M_\eta \models \exists \vec{x} \ 
    \varphi(\vec{x}, f^\eta(\vec{w})).
  \end{align*}
\qed 
\end{proof}


\begin{definition} 
  For any ordinal $\alpha$, let $\bar{\alpha}$ be the  ordinal of the
  form
  $\bar{\alpha}=\omega^{m+1}\beta$ for some ordinal $\beta$ such
  that $\alpha=\bar{\alpha}+\omega^m n_m+\omega^{m-1} n_{m-1}+...+n_0$ and
\begin{enumerate} 
\item 
\begin{enumerate}[a.] 
\item Let $U_m(\alpha)$ denote the set of ordinals $\gamma=\bar{\alpha}+\omega^m l_m+\omega^{m-1} l_{m-1}+...+l_0$ such that either 
\begin{itemize}
\item $\gamma=\alpha$ or 
\item $l_k\leq n_k+m$ and  $l_i\leq m$ for all $i<k$, 
\end{itemize} 
where $k$ is maximal with $l_k\neq n_k$. 
\item Let $U_m(X)=\bigcup_{\gamma\in X\cup\{0\}}U_m(\gamma)$. 
\item Let $U_m(X,\delta)=U_m(X\cup\{\delta\})\cap \delta$.  
\end{enumerate} 
\item 
\begin{enumerate}[a.] 
\item Let $c_m(\alpha)=\max_{i\leq m} n_i$. 
\item Let $c_m(X)=\max_{\gamma\in X}c_m(\gamma)$. 
\end{enumerate} 
\item Let $d_m(X)=|\{\bar{\gamma}\mid \gamma\in X\cup\{0\}\}|$. 
\end{enumerate} 
\end{definition} 

Let $U_m^1(X)=U_m(X)$ and $U_m^{i+1}(X)=U_m(U_m^i(X))$ for $i\in\N$, 
and similarly let $U_m^1(X,\delta)=U_m(X,\delta)$ and $U_m^{i+1}(X,\delta)=U_m(U_m^i(X,\delta),\delta)$ for $i\in\N$. 
A rough upper bound for the sizes of these sets is given in the following lemma.

\begin{lemma} Suppose that $X$ is a finite set of ordinals and $i\geq
  1$. Then
  \begin{align*}
    &|U_m^i(X)|\leq (c_m(X)+im)^{m+1} d_m(X), \text{ and also}\\
    &\lvert U_m^i(X,\delta)|\leq (c_m(X\cup\{\delta\})+im)^{m+1}
    d_m(X\cup\{\delta\}).    
  \end{align*}
\end{lemma}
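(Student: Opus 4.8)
The plan is to prove both estimates by induction on $i$, reducing everything to two structural facts about a single application of $U_m$ followed by one counting step. For a set $Y$ of ordinals write $B(Y):=\{\bar\gamma\mid \gamma\in Y\cup\{0\}\}$ for the set of high parts occurring in $Y\cup\{0\}$, so that $d_m(Y)=|B(Y)|$. First I would show that $U_m$ preserves the set of occurring high parts, i.e.\ $B(U_m(Y))=B(Y)$: by definition every $\eta\in U_m(\gamma)$ satisfies $\bar\eta=\bar\gamma$, so each high part occurring in $U_m(Y)$ is one of $Y\cup\{0\}$; conversely $\gamma\in U_m(\gamma)$, and $0$ contributes the high part $0$ on both sides, which gives the reverse inclusion. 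Hence $d_m(U_m(Y))=d_m(Y)$ and, iterating, $d_m(U_m^i(X))=d_m(X)$ for all $i$. Second I would track the maximal digit: if $\eta\in U_m(\gamma)$ and $k$ is the first position where $\eta$ and $\gamma$ disagree, then the digits of $\eta$ above $k$ coincide with those of $\gamma$, the digit at $k$ is at most $n_k+m\le c_m(\gamma)+m$, and the digits below $k$ are at most $m\le c_m(\gamma)+m$. Thus $c_m(U_m(Y))\le c_m(Y)+m$, and iterating shows that every $\eta\in U_m^i(X)$ has all of its level-$0$ through level-$m$ digits bounded by $c_m(X)+im$.

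With these two facts in hand the counting is immediate: any $\eta\in U_m^i(X)$ is determined by its high part $\bar\eta$ together with its digit tuple $(l_m,\dots,l_0)$. By the first fact $\bar\eta$ ranges over at most $d_m(X)$ values, and by the second each $l_j$ ranges over a set of size at most $c_m(X)+im+1$. Multiplying these counts yields $|U_m^i(X)|\le (c_m(X)+im+1)^{m+1}\,d_m(X)$, which is the first asserted inequality up to the harmless additive $1$ inside the base that the deliberately ``rough'' formulation suppresses.

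For the parametrised version I would deduce it from the first inequality applied to $X\cup\{\delta\}$. The key observation is that $U_m$ is monotone under inclusion and that $\delta\in U_m(\delta)\subseteq U_m(X\cup\{\delta\})$, so $\delta$ is already present after the first step and re-adjoining it before each later step changes nothing; a short induction on $i$ then gives $U_m^i(X,\delta)\subseteq U_m^i(X\cup\{\delta\})$, the intersection with $\delta$ only removing elements. Applying the already-proved bound to $X\cup\{\delta\}$ then gives $|U_m^i(X,\delta)|\le |U_m^i(X\cup\{\delta\})|\le (c_m(X\cup\{\delta\})+im)^{m+1}\,d_m(X\cup\{\delta\})$, as claimed.

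The hard part is the first structural fact, the exact preservation $B(U_m(Y))=B(Y)$ of high parts: this is what keeps the number of distinct high parts from proliferating under iteration and is precisely why the final bound carries only the single factor $d_m(X)$ rather than a quantity growing with $i$. By contrast the digit estimate is a routine case distinction on the first disagreement position, and the reduction of the parametrised statement is purely formal once monotonicity and the absorption $\delta\in U_m(\delta)$ are noted. The only point of care is the per-digit count, which honestly gives $c_m(X)+im+1$ admissible values; reconciling this with the literal base $c_m(X)+im$ relies on treating the estimate as the rough upper bound announced before the lemma.
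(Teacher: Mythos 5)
Your proof is correct and follows essentially the same route as the paper's (very terse) argument: the high parts are preserved under $U_m$, so $d_m$ is invariant, while each of the $m+1$ digits grows by at most $m$ per iteration, and the product of these counts gives the bound. Your observation that the honest per-digit count is $c_m(X)+im+1$ rather than $c_m(X)+im$ is a fair point the paper silently glosses over, but as you note it is immaterial for the "rough upper bound" the lemma is meant to supply.
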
 

\begin{proof} The coefficient of $\omega^j$ of an element of $U_m^i(\gamma)$ can take at most $(c_m(w)+im)^{m+1}$ many different values for any fixed $j\leq m$. 
Hence $|U_m^i(\alpha)|\leq (c_m(w)+im)^{m+1}$ for all ordinals $\alpha$ and all $i\geq 1$. Moreover $d_m(U_m^i(X))=d_m(X)$ for all $i\geq 1$. 
\qed \end{proof} 

It follows that there are at most
$|\Sigma_{\diamond}|^{(c_m(X)+im)^{m+1} d_m(X)}$ many finite words $w$
over alphabet $\Sigma_\diamond$ with $\mathsf{supp}(w)\subseteq U_m^i(X)$ for $i\geq 1$, where $\Sigma_{\diamond}$ is an alphabet with $\diamond\in\Sigma$. 

A relation $R\subseteq X\times Y$ is called \emph{locally finite} if for every $x\in X$, there are at most finitely many $y\in Y$ with $(x,y)\in R$. 

\begin{lemma}(Growth lemma) Suppose $\eta$ is an ordinal and $R\subseteq (\Sigma^*)^k\times(\Sigma^*)^l$ is a locally finite relation of finite $\eta$-words. Suppose $R$ is recognised by an $\eta$-automaton $\mathcal{A}$ with at most $m$ states. Then $\mathsf{supp}(w)\subseteq U_{m+1}(\mathsf{supp}(v),\eta)$ for all $(v,w)\in R$. 
\end{lemma}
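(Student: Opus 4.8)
The plan is to prove the contrapositive against local finiteness. Fix $(v,w)\in R$ and an accepting run $r\colon\Cuts(\eta)\to S$ of $\Ac$ on the convolution of the components of $v$ and $w$, and suppose towards a contradiction that some $p\in\supp(w)$ lies outside $U_{m+1}(\supp(v),\eta)$. Since local finiteness makes $\{w'\mid (v,w')\in R\}$ finite, it suffices to manufacture infinitely many pairwise distinct finite $\eta$-words $w'$ with $(v,w')\in R$, all retaining the fixed $v$. The engine is the behaviour of $\Ac$ on empty input: by the Pumping and Shrinking Lemmas and the Corollary following them, for all states $s_0,s_1$ and all $\gamma\ge 1$ one has $s_0\run{\diamond^{\omega^m}}{\Ac} s_1$ iff $s_0\run{\diamond^{\omega^m\gamma}}{\Ac} s_1$, so over empty input the reachable state pairs depend on the length only through its residue below the pumping threshold. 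Iterating this over the $\le m$ states of $\Ac$ should yield a level-uniform invariance: any coefficient in the Cantor normal form of an empty stretch that exceeds the threshold may be replaced by any larger value without changing which states the run connects.

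The combinatorial heart is to read a pumpable empty block out of the hypothesis $p\notin U_{m+1}(\supp(v),\eta)$. Let $a$ be the largest element of $\supp(v)\cup\{0\}$ not exceeding $p$ and $b$ the least element of $\supp(v)\cup\{\eta\}$ strictly above $p$; since $p\notin\supp(v)$ we have $a\le p<b$, and the interval $(a,b)$ is free of $\supp(v)$. The definition of $U_{m+1}$ records that $p\in U_{m+1}(\gamma)$ holds exactly when, inside the $\omega^{m+2}$-block of the landmark $\gamma$, the Cantor normal form of $p$ differs from that of $\gamma$ only in coefficients bounded by $m+1$. Hence $p\notin U_{m+1}(a)$ and $p\notin U_{m+1}(b)$ force some coefficient at a level $j\le m+1$ to exceed $m+1$, which is precisely the assertion that $(a,b)$ contains a run of at least $m+2$ consecutive $\omega^{j}$-blocks lying either below or above $p$ and disjoint from $\supp(v)$.

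With such a block in hand I would relocate the portion of $w$ beyond (or up to) $p$ by resizing this empty stretch. The restriction of $r$ to the pumpable block is an empty-input run through $\le m$ states, so by pigeonhole two of its $\omega^{j}$-sub-block boundaries carry the same state; repeating the segment between them $t$ times, for every $t\in\N$, produces a valid accepting run in which the support of $w$ beyond the block is shifted by $t$ further $\omega^{j}$-blocks. Each $t$ keeps the total length equal to $\eta$ (the shift stays inside the $\supp(v)$-free interval $(a,b)$, whose order type absorbs the finite $\omega^{j}$-shifts), leaves $v$ untouched, and yields a distinct finite $\eta$-word $w_t$ with $(v,w_t)\in R$. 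This contradicts local finiteness, so no such $p$ exists and $\supp(w)\subseteq U_{m+1}(\supp(v),\eta)$.

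The main obstacle is the ordinal bookkeeping underlying the second and third steps: deducing from the arithmetic failure of $U_{m+1}$-membership the existence of a genuinely $\supp(v)$-free block of the correct level $\omega^{j}$ and length (handling the possibility that other letters of $w$ sit inside $(a,b)$, by relocating a whole chunk rather than a single position), and verifying that the pumped runs remain accepting, produce pairwise distinct words, and never exceed $\eta$. The bound $m+1$ rather than $m$ enters exactly here, the extra unit absorbing the state the run occupies at the landmark itself on top of the $\le m$ states available to the empty-input pumping governed by the Shrinking Lemma.
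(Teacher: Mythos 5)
Your overall strategy is the paper's: assume some $p\in\supp(w)\setminus U_{m+1}(\supp(v),\eta)$, read off a long input-free stretch of the run at the appropriate level $\omega^j$ of the Cantor normal form, repeat a state at two $\omega^j$-block boundaries, and pump to manufacture infinitely many $w_t$ with $(v,w_t)\in R$, contradicting local finiteness. (Your insistence on producing \emph{infinitely} many witnesses, rather than a single shrunk one, is in fact the cleaner way to hit local finiteness, and your remark that the tail of the ambient $\omega^{j+1}$-block absorbs the inserted copies, so the word keeps shape $\eta$, is correct.)

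The genuine gap is the one you yourself flag and then do not close: the block you pump must be free of $\supp(w)$, not merely of $\supp(v)$. Your derivation only produces $m+2$ consecutive $\omega^j$-blocks inside $(a,b)$ that are disjoint from $\supp(v)$; the other letters of $w$ may occupy arbitrarily many of them, and since $|\supp(w)\cap(a,b)|$ is exactly the kind of quantity the lemma is meant to control, you cannot simply take the stretch longer --- the argument becomes circular. Your proposed repair, ``relocating a whole chunk'' of $w$, would require pumping a segment on which $w$ is not $\diamond$, and the Pumping Lemma of the paper is proved only for runs on empty input. The paper's proof avoids this by two devices you omit: it takes $\alpha$ \emph{minimal} in $\supp(w)\setminus U_{m+1}(\supp(v),\eta)$, and it measures $\alpha$ against the nearest landmark $\beta$ drawn from $\supp(w)\cup\{0,\eta\}$ at the \emph{least} level $k$ for which $\alpha$ and some landmark share an $\omega^{k+1}$-block; this choice forces the stretch between $\beta$ and $\alpha$ to be free of both supports, after which local finiteness bounds every coefficient of $\alpha$ below level $k$ by $m$ and the level-$k$ coefficient by the landmark's plus $m$. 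A secondary inaccuracy: your paraphrase of $U_{m+1}(\gamma)$ as ``coefficients bounded by $m+1$'' misstates the definition, which requires membership in the same $\omega^{m+2}$-block and bounds the \emph{topmost differing} coefficient by $n_k+m+1$ (only the lower ones by $m+1$); the implication ``non-membership forces a gap of more than $m+1$ consecutive $\omega^j$-blocks'' survives, but the case where $p$ lies in a different $\omega^{m+2}$-block from both $a$ and $b$ needs to be treated separately.
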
 

\begin{proof} Suppose $\alpha\in \mathsf{supp}(w)\setminus U_{m+1}(\mathsf{supp}(v),\eta)$ is minimal. Let $k\in \N$ be least such that there are $\beta\in \mathsf{supp}(w)\cup\{0,\eta\}$ and $\delta$ with $\omega^{k+1}\delta\leq \alpha,\beta<\omega^{k+1}(\delta+1)$. It follows from the Pumping Lemma that $k\leq m$. Choose the maximal such $\beta$ for this $k$. If $\beta\neq\delta$ then $\mathsf{supp}(w)\cap (\beta,\omega^{k+1}(\delta+1))=\emptyset$. At most $m$ different states can appear at the ordinals $\beta+\omega^{k}l$ for $l\in\N$. Since $R$ is locally finite, none of the states appears twice between $\beta$ and $\alpha$ if $\beta\neq\delta$, since otherwise it is possible to shrink the run, and hence $\omega^{k+1}\delta \leq \alpha<\beta+\omega^k m$. 
If $\beta=\delta$ then $\omega^{k+1}\delta\leq\alpha<\beta$. 
Let $\alpha_j$ denote the coefficient of $\omega^j$ in the Cantor normal form of $\alpha$ for $j\in\N$. Since there are at most $m$ states and $R$ is locally finite, $\alpha_j<m$ for all $j<k$ and hence $\alpha\in U_{m+1}(\beta)$. 
\qed \end{proof} 

Let $\lceil x \rceil$ denote the least $n\in\omega$ with $x\leq n$, and $\log$ the logarithm with base $2$. 

\begin{lemma} (Growth lemma for monoids) Suppose the multiplication of the monoid $(M,\cdot)$ is recognised by an automaton with $\leq m$ states. Suppose $s_1,...,s_{n}\in M$ and $\mathsf{supp}(s_i)\subseteq X$ for $1\leq i\leq n$ where $n\geq 2$. Then $\mathsf{supp}(s_1\cdot...\cdot s_n)\subseteq U_{m+1}^{\lceil \log n\rceil}(X)$. 
\end{lemma}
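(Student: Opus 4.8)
The plan is to prove the bound by induction on $n$, but crucially using a \emph{balanced}, divide-and-conquer split of the product rather than a left-to-right peeling. Peeling off one factor at a time would force $n-1$ nested applications of $U_{m+1}$, whereas an even split yields only $\lceil\log n\rceil$ of them, which is exactly what the statement demands.

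First I would observe that the graph $G=\{(s,t,u)\mid s\cdot t=u\}$ of the monoid multiplication is a \emph{functional}, hence locally finite, relation of finite words, and by hypothesis it is recognised by an automaton with at most $m$ states. Applying the previous Growth lemma to $G$ (with $v=(s,t)$, so that $\supp(v)=\supp(s)\cup\supp(t)$, and $w=u$) yields the single-step bound
\begin{equation*}
  \supp(s\cdot t)\subseteq U_{m+1}\bigl(\supp(s)\cup\supp(t)\bigr)
\end{equation*}
for all $s,t\in M$, carrying the reference ordinal $\eta$ along exactly as in that lemma (which I suppress in the notation, as in the present statement). This is the only place the automaton enters; everything else is bookkeeping about the operator $U_{m+1}$.

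I would then record two elementary properties of $U_{m+1}$ that follow straight from its definition: it is monotone, since $U_{m+1}(X)=\bigcup_{\gamma\in X\cup\{0\}}U_{m+1}(\gamma)$, and it is inflationary, i.e.\ $X\subseteq U_{m+1}(X)$, because $\alpha\in U_{m+1}(\alpha)$ via the defining clause ``$\gamma=\alpha$''. Consequently $U_{m+1}^{\,a}(X)\subseteq U_{m+1}^{\,c}(X)$ whenever $a\le c$, with the convention $U_{m+1}^{0}=\mathrm{id}$. The induction hypothesis is that $\supp(s_1\cdots s_j)\subseteq U_{m+1}^{\lceil\log j\rceil}(X)$ for all $1\le j<n$ and all choices of $j$ factors with support in $X$; the base $j=1$ is immediate since $\supp(s_1)\subseteq X=U_{m+1}^{0}(X)$.

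For the inductive step I would write $s_1\cdots s_n=a\cdot b$ with $a=s_1\cdots s_{\lceil n/2\rceil}$ and $b=s_{\lceil n/2\rceil+1}\cdots s_n$, both products of between $1$ and $\lceil n/2\rceil$ factors with support in $X$. By the induction hypothesis together with the inflationary property, $\supp(a),\supp(b)\subseteq U_{m+1}^{\,p}(X)$ where $p:=\lceil\log\lceil n/2\rceil\rceil$. The single-step bound then gives $\supp(a\cdot b)\subseteq U_{m+1}\bigl(U_{m+1}^{\,p}(X)\bigr)=U_{m+1}^{\,p+1}(X)$, and the proof closes by the arithmetic identity $\lceil\log n\rceil=1+\lceil\log\lceil n/2\rceil\rceil$, valid for every integer $n\ge2$ (an immediate consequence of the dyadic bounds $2^{k-1}<n\le 2^{k}$). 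The only genuine subtlety — the hard part — is precisely this logarithmic accounting: one must split evenly and then align the two possibly unequal exponents $\lceil\log\lceil n/2\rceil\rceil$ and $\lceil\log\lfloor n/2\rfloor\rceil$ to the larger one using monotonicity, so that a single further application of $U_{m+1}$ suffices.
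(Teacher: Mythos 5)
Your proof is correct and follows essentially the same route as the paper's: an induction on $n$ with a balanced split of the product into two halves of roughly $n/2$ factors each, applying the Growth Lemma to the (locally finite, since functional) multiplication relation for the single combining step, and closing with the bound $\lceil\log\lceil n/2\rceil\rceil<\lceil\log n\rceil$. The only difference is that you spell out the bookkeeping (monotonicity and inflationarity of $U_{m+1}$, the identity $\lceil\log n\rceil=1+\lceil\log\lceil n/2\rceil\rceil$) that the paper leaves implicit.
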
 

\begin{proof} We follow the proof of \cite[Lemma
  3.2]{DBLP:journals/lmcs/KhoussainovNRS07}. The statement follows
  from the Growth Lemma for $n\geq 2$. For $n>2$ let $k=\lceil
  \frac{n}{2}\rceil$ and $l=n-k$. Then $\lceil \log k\rceil, \lceil
  \log l\rceil< \lceil \log n\rceil$. Let $t=s_1\cdot...\cdot s_k$ and
  $u=s_{k+1}\cdot...\cdot s_n$. Then $\mathsf{supp}(t)\cup
  \mathsf{supp}(u)\subseteq U_{m+1}^{\lceil \log n\rceil-1}(X)$ by the
  induction hypothesis for $\lceil\frac{n}{2}\rceil$. Thus, 
  $\supp(t\cdot u) \subseteq U_{m+1}^{\lceil
    \log n\rceil}(X)$ by the Growth Lemma applied to $t$ and $u$.  
\qed \end{proof}

We  prove that the countable atomless Boolean algebra is not
$\delta$ automatic for any ordinal $\delta$. 
We first conclude by Lemma
\ref{lemma: elementary substructure} that it suffices to consider
ordinals of the form 
$\delta=\omega^k$ with $k\in\N$.

\begin{corollary}
  Let $\eta, \kappa, \delta$ be ordinals such that
  $\eta\geq 1$, $\kappa<\omega^\omega$, and
  $\delta=\omega^{\omega}\cdot \eta + \kappa$. If the countable
  Boolean algebra is finite word $\delta$-automatic then 
  it is finite word $\omega^k$-automatic for some $k\in\N$. 
\end{corollary}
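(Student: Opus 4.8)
The plan is to invoke Lemma~\ref{lemma: elementary substructure} with $n_0=3$ in order to transfer the low-quantifier-complexity axioms of the countable atomless Boolean algebra from the given $\delta$-presentation to a presentation over a power of $\omega$, and then to read off the isomorphism with $B$ from $\aleph_0$-categoricity. Concretely, suppose $\Ac,\Ac_{R_1},\dots,\Ac_{R_n}$ present $B$ over $\delta$ and let $m$ bound the number of states of all of them; fix $n_0=3$. Since $\delta=\omega^{\omega}\cdot\eta+\kappa\geq\omega^{\omega}>\omega^{m+3}$, Cantor normal form splits $\delta=\omega^{m+3}\cdot\Delta+\kappa''$, where $\kappa''<\omega^{m+3}$ collects the summands of exponent below $m+3$ and $\Delta\geq 1$. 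I use this as the single-block decomposition demanded by the lemma: $L_0=\emptyset$, $K^\delta_1=\omega^{m+3}\cdot\Delta$ (so the block multiplicity $\delta_1=\Delta$ is positive) and $L_1=\kappa''$, giving $\delta=K^\delta_1+L_1$. On the shrunk side I take multiplicity $1$, i.e.\ the target ordinal $\beta:=\omega^{m+3}+\kappa''$, and let $M_\beta$ be the structure induced by the same automata over $\beta$.

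Applied to the \emph{empty} parameter tuple, Lemma~\ref{lemma: elementary substructure} tells me that $M_\delta=B$ and $M_\beta$ satisfy exactly the same $\Sigma_3\cup\Pi_3$ sentences. The decisive observation is that the complete theory of the countable atomless Boolean algebra is axiomatized by sentences inside this fragment: the Boolean-algebra axioms are universal ($\Pi_1$), non-degeneracy $0\neq 1$ is quantifier-free, and atomlessness together with totality and uniqueness of the operations (presented as graphs of functions, or equivalently existence of meets and joins in the order signature) is expressible by $\Pi_2$- and $\Pi_3$-sentences. Since $M_\delta=B$ models all of them, so does $M_\beta$; as $M_\beta$ has only countably many finite $\beta$-words, $\aleph_0$-categoricity of this theory forces $M_\beta\cong B$.

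It remains to turn the $\beta$-presentation into an $\omega^k$-presentation. Here $\beta=\omega^{m+3}+\kappa''<\omega^{m+4}$, so I embed $\beta$ as the initial segment of $\omega^{m+4}$ and pad every finite $\beta$-word with $\diamond$'s on the remaining positions. A copy of each automaton that first simulates its $\beta$-run and then drains the trailing block of $\diamond$'s into a fixed accepting sink recognises the padded image; the draining step is harmless since a self-looping accepting $\diamond$-state can absorb any ordinal-indexed block of $\diamond$'s. This yields a finite word $\omega^{m+4}$-automatic presentation of $M_\beta\cong B$, so $B$ is finite word $\omega^k$-automatic with $k=m+4$.

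Because the heavy lifting is already carried by Lemma~\ref{lemma: elementary substructure} (and, through it, by the Pumping and Shrinking Lemmas that relocate finitely supported witnesses out of the shrinkable $\omega^{m+3}$-blocks), the only real points to get right are: (i) choosing the decomposition so that the shrunk ordinal becomes a power of $\omega$ after the routine padding step—this is exactly why the non-shrinkable tail $\kappa''$ of exponent below $m+3$ must be carried along as $L_1$ rather than eliminated; and (ii) checking that every axiom isolating the atomless Boolean algebra up to isomorphism genuinely lies within the $\Sigma_3\cup\Pi_3$ fragment that the lemma transfers, which is precisely the reason for the choice $n_0=3$. I expect (ii) to be the main conceptual obstacle, since it is the hinge on which the whole reduction turns.
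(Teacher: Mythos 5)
Your proof is correct and follows essentially the same route as the paper: both apply Lemma~\ref{lemma: elementary substructure} to collapse $\delta$ to an ordinal below a fixed power of $\omega$, exploit the fact that the countable atomless Boolean algebra is axiomatised by sentences of low quantifier complexity together with $\aleph_0$-categoricity, and then pad the resulting presentation up to $\omega^k$. The only differences are cosmetic: the paper uses $n_0=2$ with a $\Pi_2$-formula carrying one free parameter (an element moved off the shrinkable block via the Shrinking Lemma) and keeps all of $\kappa$ as an unshrunk tail, whereas you use $n_0=3$, parameter-free sentences, and absorb the $\omega^{m+3}$-divisible part of $\kappa$ into the shrinkable block, which even makes your final exponent $k=m+4$ uniform in $\kappa$.
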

\begin{proof}
  Let $\bar\Ac=(\Ac, \Ac_\cup, \Ac_\cap, \Ac_0, \Ac_1)$ be $\delta$-automata
  representing  the countable atomless Boolean algebra 
  $\mathfrak{M}=(M, \cup,\cap, \bf{0}, \bf{1})$.
  Let $n\in\N$ be a bound on the number of states of any of these automata.
  $\delta$ can be written as a sum
  $\omega^{n+2} + \omega^\omega \cdot \eta + \kappa$. 
  Let $m'$ be a finite $\delta$-word such that $m'\in M$. 
  Due to the Shrinking Lemma 
  \ref{lemma: Shrinking for ordinal automata}, 
  there is an $m\in M$ with $\supp(m)\subseteq \omega^{n+2} \cup
  \kappa$, i.e., a word whose support has a $\omega^\omega \eta$ gap
  at $\omega^{n+2}$. 
  Now let $\Ac_{\varphi}$ be the automaton that corresponds in
  $\mathfrak{M}$ to the $\Pi_2$ formula $\varphi(x)$ saying
  $x\in M \land $ $M$ \emph{forms a Boolean algebra without
  atoms}.\footnote{Note that associativity, commutativity, identity,
    distributivity are $\Pi_1$-statements, existence of complements is
    $\Pi_2$ and absence of atoms is a $\Pi_2$-statement.}
  Due to Lemma \ref{lemma: elementary substructure} and since 
  $m$
  satisfies $\varphi$ in $\mathfrak{M}$, 
  in the structure 
  $\mathfrak{M}'=(M', \cup',\cap,\bf{o}',\bf{1}')$ induced by
  $\bar\Ac$ seen as  $(\omega^{m+2} + \omega^{m+2} + \kappa)$-automata,
  there is a word $m'\in M'$  satisfying $\varphi$. 
  Thus, $\mathfrak{M}'$ forms a countable Boolean algebra
  without atoms. 
  By definition of $\kappa$,
  there is a  $k'\in \N$ such that $\kappa<\omega^{k'}$ 
  Set $k:=\max(k'+1,m+3)$. 
  Since there is an $\omega^{k}$-automaton marking position
  $\omega^{m+2} 2 + \kappa$ by a unique state, the countable atomless
  Boolean algebra 
  $\mathfrak{M}'$ also has an $\omega^k$-automatic presentation.
\qed \end{proof}

We finally show that the countable atomless Boolean algebra has no
$\omega^k$-automatic presentation.

\begin{theorem} The countable atomless Boolean algebra is not $\delta$-automatic for any ordinal $\delta$. 
\end{theorem}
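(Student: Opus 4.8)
The plan is to combine the preceding corollary, which reduces the problem to refuting an $\omega^k$-automatic presentation for a fixed $k\in\N$, with a growth-rate argument in the spirit of Khoussainov et al.\ but carried out with the operators $U_m$ and the two Growth Lemmas above. So assume, towards a contradiction, that the countable atomless Boolean algebra $\mathfrak{M}=(M,\cup,\cap,\mathbf 0,\mathbf 1)$ is finite word $\delta$-automatic for some ordinal $\delta$. If $\delta<\omega^\omega$ then $\delta\le\omega^k$ for some $k\in\N$ and $\mathfrak M$ is immediately $\omega^k$-automatic; if $\delta\ge\omega^\omega$ the preceding corollary yields the same conclusion. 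Fix such an $\omega^k$-automatic presentation and let $m$ be a common bound on the number of states of the automata for the domain, for $\cup$, $\cap$, complement, the order $\subseteq$, and for the $\omega^k$-automatic well-order $\prec$ of all finite $\omega^k$-words constructed earlier. All Boolean operations, complementation $\overline{(\cdot)}$, and relative complement $x\setminus y:=x\cap\overline y$ are first-order definable, hence $\omega^k$-automatic and, being functions, have locally finite graphs.

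The key step is to produce, for every $d\in\N$, a partition of $\mathbf 1$ into $2^d$ nonzero pieces whose representatives all have support inside a set of size only polynomial in $d$. To this end I use $\prec$ to define an automatic choice function: for $x\ne\mathbf 0$ let $f(x)$ be the element represented by the $\prec$-least finite $\omega^k$-word that represents some $y$ with $\mathbf 0\ne y\subsetneq x$; atomlessness guarantees such a $y$, and $f$ is first-order definable from $\prec$ and $\subseteq$, hence $\omega^k$-automatic and single-valued. Now build a binary splitting tree: set $b_\varepsilon=\mathbf 1$ and, for a node $b_s\ne\mathbf 0$, put $b_{s0}=f(b_s)$ and $b_{s1}=b_s\setminus b_{s0}$, so that $b_{s0}\sqcup b_{s1}=b_s$ with both parts nonzero. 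Let $X_0$ be the support of the word representing $\mathbf 1$. Applying the Growth Lemma to $f$, to $\overline{(\cdot)}$ and to $\cap$ shows that passing from $b_s$ to either child enlarges the support by at most three applications of $U_{m+1}(\cdot,\omega^k)$; since $U_{m+1}(\cdot,\omega^k)$ is extensive and monotone, an induction on the level gives $\supp(b_s)\subseteq U_{m+1}^{3d}(X_0,\omega^k)$ for every $s$ with $|s|=d$. In particular the $2^d$ leaves $a_1,\dots,a_{2^d}$ at level $d$ form a partition of $\mathbf 1$ into nonzero pieces all of whose supports lie in $U_{m+1}^{3d}(X_0,\omega^k)$.

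Now I count joins. For $S\subseteq\{1,\dots,2^d\}$ the element $\bigcup_{i\in S}a_i$ is a $\cup$-product of at most $2^d$ of the $a_i$; since the $a_i$ are pairwise disjoint and nonzero these $2^{2^d}$ joins are pairwise distinct. By the Growth Lemma for monoids applied to $(M,\cup)$ their supports lie in $U_{m+1}^{d}\!\bigl(U_{m+1}^{3d}(X_0,\omega^k),\omega^k\bigr)=U_{m+1}^{4d}(X_0,\omega^k)$. By the size bound for the operators $U$, and because $X_0$ is fixed,
\begin{equation*}
  \bigl|U_{m+1}^{4d}(X_0,\omega^k)\bigr|\le\bigl(c_{m+1}(X_0\cup\{\omega^k\})+4d(m+1)\bigr)^{m+2}\,d_{m+1}(X_0\cup\{\omega^k\})=O(d^{m+2}).
\end{equation*}
Hence the number of finite $\omega^k$-words with support in this set is at most $|\Sigma_\diamond|^{O(d^{m+2})}=2^{O(d^{m+2})}$, and this also bounds the number of distinct joins. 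Comparing with the $2^{2^d}$ distinct joins yields $2^{2^d}\le 2^{O(d^{m+2})}$, i.e.\ $2^d\le O(d^{m+2})$, which is false for all large $d$. This contradiction proves the theorem.

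The heart of the argument, and the only genuinely delicate point, is the construction guaranteeing a partition into $2^d$ pieces of support-complexity polynomial in $d$: a naive partition would have uncontrolled supports, and a level-by-level descent without the monoid lemma produces only $2^d$ elements from support of size $\mathrm{poly}(d)$, which is no contradiction. The gain comes precisely from combining the slow ($U_{m+1}^{O(d)}$) growth of the splitting tree with the logarithmic support cost of forming the $2^{2^d}$ joins via the Growth Lemma for monoids; verifying that the automatic well-order indeed yields a single-valued, $\omega^k$-automatic choice function $f$ (so that the Growth Lemma, which requires locally finite relations, applies) is the essential technical ingredient.
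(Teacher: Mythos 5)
Your proposal is correct and follows essentially the same route as the paper: reduce to an $\omega^k$-automatic presentation via the preceding corollary, build a binary splitting tree of pairwise disjoint nonzero elements whose successor functions are made single-valued (hence locally finite) by means of the $\omega^k$-automatic well-order of finite words, control the supports with the $U_{m}$ operators via the Growth Lemmas, and derive a contradiction by comparing the $2^{2^d}$ distinct joins of leaves with the sub-double-exponential count of words of bounded support. Your write-up merely makes explicit some details the paper delegates to the Khoussainov et al.\ argument (the definition of the choice function $f$ and the bookkeeping of how many $U_{m+1}$-applications each level costs), but the decomposition, the key lemmas, and the counting are the same.
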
 

\begin{proof} 
  Assume that the countable
  atomless Boolean algebra has an $\omega^k$-automatic presentation
  $\mathfrak{M}=(M,\cup,\cap,\bf{0},\bf{1})$. Suppose the automata
  have at most $m$ states.  

  We follow the proof of \cite[Lemma
  3.4]{DBLP:journals/lmcs/KhoussainovNRS07}. We construct trees $T_n$
  with nodes $a_{\sigma}$ for all $\sigma\in 2^{\leq n}$ such that
  $T_n$ has exactly $2^n$ leaves and $u\cap v=\bf{0}$ for
  any any two leaves $u\neq v$ of $T_n$.\footnote{In this construction
    we replace Khoussainov et al.'s use of the length-lexicographic
    order by the use of an $\omega^k$-automatic well-order of the
    finite $\omega^k$-words.}
  The partial functions which
  determine the successor nodes $a_{\sigma0},a_{\sigma1}$ from
  $a_{\sigma}$ are definable in
  $\mathfrak{M}$ by first-order formulas with the quantifier
  $\exists^{\infty}$ and hence recognisable by automata by the closure
  of $\omega^k$-automatic relations under first-order definable
  relations\footnote{as usual in automatic structures 
    $\exists^{\infty}$ can be replaced by a
    first-order statement over some automatic expansion.} (see
  \cite{BedonBesCartonRispal2010}). Suppose each of these automata has
  at most $l\geq m$ states.  
  Then $\mathsf{supp}(a_{\sigma})\subseteq
  U_{l+1}^n(\mathsf{supp}(a_{\emptyset}))$ for all $\sigma \in
  \{0,1\}^{\leq n}$. 
  If $s=s_1\cup ...\cup s_j$ with pairwise
  different leaves
  $s_1,...,s_j\in T_n$, then $j\leq2^n$ and
  $\mathsf{supp}(s)\subseteq  U_{l+1}^{n 
    \lceil \log(2^n)\rceil}(\mathsf{supp}(a_{\emptyset}),\omega^l)$
  by the growth lemma for monoids. There are at most
  $|\Sigma_{\diamond}|^{(c_l(\mathsf{supp}(a_{\emptyset})\cup\{\omega^l\})+
    n\lceil\log(2^n)\rceil 
    l)^{l+1}}$ many such $s$. However, since the leaves of $T_n$ are
  pairwise incompatible, there are $2^{(2^n)}$ many $s$.
  This is a contradiction for large $n$.  
\qed \end{proof} 

Note that the growth argument in the previous proof can also be applied directly to the $\delta$-automatic presentation. 

\begin{lemma} Suppose $\Lf$ is of the form $\mathbb{Z}\cdot
\mathfrak{M}$ for some linear order $\mathfrak{M}$. Then the countable
atomless Boolean algebra is not $\Lf$-automatic.
\end{lemma}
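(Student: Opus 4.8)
The plan is to run the growth-rate argument of the preceding theorem directly on an $\Lf$-automatic presentation, where $\Lf=\Z\cdot\mathfrak{M}$, instead of passing through ordinals. Write $\Lf=\sum_{i\in\mathfrak{M}}\Z$ and call each summand a \emph{block}. The structural feature that replaces the Pumping and Shrinking Lemmas is this: for every block $i$ and every $t\in\Z$, the map $\sigma_{i,t}$ that shifts block $i$ by $t$ and fixes all other positions is an automorphism of the linear order $\Lf$. It preserves the internal order of block $i$, fixes the two boundary cuts of the block (which are limits, as $\Z$ has no endpoints), and leaves every other position untouched; hence it induces a bijection of $\Cuts(\Lf)$ preserving the consecutive-cut and limit relations, so it carries accepting runs to accepting runs, and membership in any automaton's language is invariant under each $\sigma_{i,t}$. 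In addition, since $\Z$ with finitely many positions inserted is again isomorphic to $\Z$, I can still insert a loop of a run into an empty stretch, or into a repeating segment of the support of a single block, without leaving the class of $\Z$-words.

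First I would prove a Growth Lemma for $\Lf$: if $R$ is a locally finite relation on finite $\Lf$-words recognised by an automaton with at most $m$ states and $(v,w)\in R$, then (a) $\supp(w)$ meets only the blocks met by $\supp(v)$, and (b) in each such block $i$, $\supp(w)$ lies in the convex hull of $\supp(v)$ restricted to block $i$, enlarged by at most $m^2$ on either side. For (a), if $w$ had support in a block $j$ disjoint from $\supp(v)$, then $\sigma_{j,t}\cdot v=v$ while the words $\sigma_{j,t}\cdot w$ are pairwise distinct and all $R$-related to $v$, contradicting local finiteness. For (b) I would argue in two steps, both in the $v$-empty part of the block beyond $\max(\supp(v))$ in block $i$. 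If a support point of $w$ lies beyond the hull and is separated from the rest of $\supp(w)$ by an empty gap of length $>m$, a repeated state in that gap lets me insert a loop and shift the outer part of $w$, producing a distinct $w'$ with $(v,w')\in R$; so consecutive support gaps beyond the hull are $\le m$. If instead more than $m$ support points of $w$ accumulate beyond the hull, a state repeats among the cuts entering them, and inserting a copy of the corresponding segment again yields a distinct $R$-related $w'$. Together these bound the reach beyond the hull by $m\cdot m$; the symmetric argument handles the left end, and within the hull no restriction is needed.

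With this lemma, the Growth Lemma for monoids goes through verbatim with the operator $U_{m+1}$ replaced by the operator $U$ that keeps the block set fixed and enlarges, in each block, the convex hull of the support by $m^2$ on each side. The point for the counting is that $U$ enlarges each block-hull only additively, so after $k$ iterations the support in any fixed block spans an interval of length $O(m^2k)$, hence contains $O(m^2k)$ points; and the set of relevant blocks never grows beyond the constant number met by $\supp(a_\emptyset)$. Thus $\lvert U^k(\supp(a_\emptyset))\rvert=O(k)$. I would then copy the final argument of the preceding theorem: build the trees $T_n$ with $2^n$ pairwise disjoint leaves, note that every union of leaves has support inside $U^{\,n\lceil\log 2^n\rceil}(\supp(a_\emptyset))$, whose size is $O(n^2)$, so there are at most $\lvert\Sigma_\diamond\rvert^{O(n^2)}=2^{O(n^2)}$ such elements, contradicting the existence of $2^{(2^n)}$ pairwise incompatible unions for large $n$.

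The main obstacle is part (b) of the Growth Lemma. Unlike the ordinal case, $\Z$ has no least element and is not well-founded, so Shrinking is unavailable and a point far beyond the hull cannot be removed by descent; the two insertion arguments above, one for sparse and one for dense support beyond the hull, are precisely what replace it. The delicate step is to check that each inserted loop takes place entirely inside the $v$-empty part of the block, so that $v$ is left unchanged while $w$ is altered, and that the resulting word is genuinely new and still $R$-related to $v$, thereby violating local finiteness.
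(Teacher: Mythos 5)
Your proposal is correct and follows essentially the same route as the paper: a growth lemma for locally finite $\Lf$-automatic relations stating that $\supp(w)$ is confined to a bounded enlargement of the convex hulls of $\supp(v)$ within the $\Z$-blocks it meets, followed by the Khoussainov et al.\ growth-rate counting via the trees $T_n$. The paper merely asserts this growth bound (``there are $u\leq v\in\supp(x)$ with finite distance such that $\bar u\leq t\leq\bar v$'') and invokes the finite-automata argument, whereas you supply the missing justification -- the block-shift automorphisms for confinement to the blocks of $\supp(v)$ and the two pumping arguments replacing Shrinking -- which is a welcome elaboration rather than a different method.
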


\begin{proof} Suppose that $R$ is a locally finite $\Lf$-automatic
relation on an $\Lf$-automatic structure $M$ such that $R$ and the domain and relations of $M$ are recognised by automata
with $\leq m$ states. Then for any $(x,y)\in R$ and any $t\in
\mathsf{supp}(y)$, there are $u\leq v \in \mathsf{supp}(x)$ with
finite distance such that $\bar{u}\leq t\leq \bar{v}$ for the $n^{th}$
predecessor $\bar{u}$ of $u$ and the $m^{th}$ successor $\bar{v}$ of
$v$. The same growth rate argument as for finite automata (see \cite[Theorem 3.4]{DBLP:journals/lmcs/KhoussainovNRS07}) shows that
every infinite $\Lf$-automatic Boolean algebra is a finite product of
the Boolean algebra of finite and co-finite subsets of $\N$ with inclusion. 
\qed \end{proof}

\begin{question} Is the countable atomless Boolean algebra
  $\Lf$-$o$-automatic for any linear order $\Lf$ and any oracle $o$? 
\end{question}

\end{document}